\newcommand{\SMPM}{\mbox{{marriage instance with critical men}}}
\newcommand{\PFM}{popular feasible matching}
\newcommand{\mPFM}{minimum size popular feasible matching}
\newcommand{\DFM}{dominant feasible matching}
\newcommand{\NP}{\mbox{{\sf NP}}}
\newcommand{\HRLQ}{\mbox{{\sf HRLQ}}}
\newcommand{\SMone}{\mbox{{$G'$}}}
\newcommand{\SMtwo}{\mbox{{$G''$}}}
\newcommand{\SIAP}{\mbox{{\sf SIAP}}}
\newcommand{\SRAP}{\mbox{{\sf SRAP}}}
\newcommand{\listw}{\mbox{{Pref($w$)}}}
\newcommand{\listm}{\mbox{{Pref($m$)}}}
\title{Popular Edges with Critical Nodes}
\titlerunning{Popular Edges with Critical Nodes}
\author{Kushagra Chatterjee\textsuperscript{}\footnote{\textsuperscript{}{Work was done when the author was a Master's student in Chennai Mathematical Institute}}\footnote[2]{supported in part by an MoE AcRF Tier 2 grant (WBS No. A-8000416-00-00) and an ODPRT grant (WBS No. A-0008078-00-00)}}{National University of Singapore, Singapore }{e0823067@u.nus.edu}{}{}
\author{Prajakta Nimbhorkar\textsuperscript{}\footnote[3]{\textsuperscript{}{}{}{Supported in part by SERB Award CRG/2019/004757. }}}{Chennai Mathematical Institute, India}{prajakta@cmi.ac.in}{}{}
\authorrunning{K. Chatterjee and P. Nimbhorkar}
\keywords{Matching, Stable Matching, Popular feasible Matching} 
\begin{document}

\maketitle

\begin{abstract}
In the {\em popular edge problem}, the input is a bipartite graph $G = (A \cup B,E)$ where $A$ and $B$ denote a set of men and a set of women respectively, and each vertex in $A\cup B$ has a strict preference ordering over its neighbours. A matching $M$ in $G$ is said to be {\em popular} if there is no other matching $M'$ such that the number of vertices that prefer $M'$ to $M$ is more than the number of vertices that prefer $M$ to $M'$. The goal is to determine, whether a given
edge $e$ belongs to some popular matching in $G$. A polynomial-time algorithm for this problem appears in \cite{CK18}.

We consider the popular edge problem when some men or women are prioritized or critical. A matching that matches all the critical nodes is termed as a feasible matching. It follows from \cite{Kavitha14,Kavitha21,NNRS21,NN17} that, when $G$ admits a feasible matching, there always exists a matching that is popular among all feasible matchings. 

We give a polynomial-time algorithm for the popular edge problem in the presence of critical men or women. 
We also show that an analogous result does not hold in the many-to-one setting, which is known as the Hospital-Residents Problem in literature,
even when there are no critical nodes.
\end{abstract}

\section{Introduction}
The stable marriage problem is well-studied in literature. The input instance is a bipartite graph $G=(A\cup B,E)$ where $A$ and $B$ denote the
sets of men and women respectively, and each vertex has a strict preference ordering on its neighbors. The preference ordering is referred to as the
{\em preference list} of the vertex. Such an instance is referred to as a {\em marriage instance}. A matching $M$ in $G$ is said to be {\em stable} if there is no pair $(a,b)\in E$ such that both $a$ and $b$ prefer each 
other over their respective partners in $M$, denoted as $M(a)$ and $M(b)$ respectively. A matching is called {\em unstable} if such a pair $(a,b)$ exists, and such a pair $(a,b)$ is called a {\em blocking pair}. In their seminal paper, Gale and Shapley showed that stable matchings always exist and can be computed in linear time \cite{GS62}. However, all the stable matchings match the same set of vertices \cite{GS85} and they can be as small as half the size of a maximum matching \cite{HK13}. Hence popularity has been considered as an alternative to stability.
\begin{definition}[Popular Matching]
A matching $M$ is said to be {\em popular} in a marriage instance $G$ if, for all matchings $N$ in $G$, the number of 
vertices that prefer $N$ over $M$ is no more than the number of vertices that prefer $M$ over $N$.
\end{definition}
 In other words, $M$ is popular if it does not lose a head-to-head election with any other matching $N$ where votes are cast
by vertices. This notion was introduced by G\"{a}rdenfors \cite{G75} and has been well-studied since then (see Section~\ref{sec:relwork}). Popular matchings always exist since stable matchings are popular, in fact, stable matchings are minimum size popular matchings \cite{HK13}. A subclass of maximum size popular matchings called {\em dominant matchings} was identified in \cite{CK18}. 
\begin{definition}[Dominant Matching]
A matching $M$ in a marriage instance $G$ is called a \emph{dominant matching} if $M$ is popular, and for each $N$ such that $|N| > |M|$, the number of vertices that prefer $M$ to $N$ is more than the number of vertices that prefer $N$ to $M$.
\end{definition}
Informally, a matching $M$ is a dominant matching if $M$ is popular and $M$ wins against any other matching $N$ which is larger than $M$. Note that a dominant matching is clearly a maximum size popular matching but a maximum size popular matching need not be a dominant matching. 

Cseh and Kavitha \cite{CK18} addressed the problem of determining whether there is a popular matching containing a given edge $e$, referred to as the {\em popular edge problem}. They gave a polynomial-time algorithm for this problem. This is surprising since, in \cite{FKPZ19}, it was shown that stable matchings and dominant matchings are the only two tractable subclasses of popular matchings, and it is \NP-hard to find a popular matching which is neither stable nor dominant. 

Popular matchings find applications in situations where certain nodes are prioritized or critical and they are required to be matched. A real-life example of this scenario is assignment of sailors to billets in the US Navy \cite{Robards01,Soldner14,Kavitha21} where certain billets are required to be matched. Rural hospitals often face the problem
of understaffing in the National Resident Matching Program in the USA \cite{Roth84,Roth86}. Thus marking some positions in these hospitals as critical and finding a critical matching provides a way to address this issue.
While matching students to mentors, it may be required to assign mentors to all the students whose past performance is below a certain threshold. In several other applications, a subset of people needs to be prioritized based on their economic, ethnic, geographic, or medical backgrounds.
A matching that matches all the prioritized or {\em critical} nodes is termed as a {\em feasible} matching.
Such a scenario has been considered in \cite{NN17} and \cite{NNRS21} in the many-to-one setting, and it is shown that there always exists a matching that is popular within the set of feasible matchings. 
In \cite{Kavitha21}, a matching that matches as many critical nodes as possible has been referred to as a {\em critical matching}. It is shown in \cite{Kavitha21} that a matching that is popular in the set of critical matchings, called a {\em popular critical matching}, always exists and a polynomial time algorithm is given for the same. 
A special case of this is addressed in \cite{Kavitha14}, where all the nodes are critical, and hence a critical matching is a matching that is popular amongst all maximum size matchings. A polynomial-time algorithm is given in \cite{Kavitha14} for this problem. 

In the presence of critical men or women, popular edge problem for feasible matchings is a 
natural question that arises in this context.
Thus, given a marriage instance $G=(A\cup B,E)$, a set of critical nodes $C\subseteq A$, and an edge $e$, the problem is to determine whether there is a feasible matching containing $e$ that is popular within the set of feasible matchings. We call this the {\em popular feasible edge problem}.
\begin{definition}[Popular feasible matching]
Given a marriage instance $G=(A\cup B,E)$, and a set of critical nodes $C$, a feasible matching that is popular among all the feasible matchings is called a {\em popular feasible matching}.
\end{definition}
We also define dominant feasible matchings below.
\begin{definition}[Dominant feasible matching]
Given a marriage instance $G=(A\cup B,E)$ and a set of critical nodes $C$, a matching $M$ is called a {\em dominant feasible matching} if $M$ is a popular feasible matching, and for all the
feasible matchings $N$ such that $|N|>|M|$, $M$ gets strictly more votes than $N$.
\end{definition}

\subsection{Our contributions}

We show the following main result in this paper:
\begin{theorem}\label{thm:main}
Given a marriage instance $G=(A\cup B,E)$ along with a set of critical nodes $C\subseteq A$, an edge $e \in E$ belongs to a popular feasible matching in $G$ if and only if $e$ belongs to a minimum size popular feasible matching or a dominant feasible matching in $G$.
\end{theorem}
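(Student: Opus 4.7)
The backward direction is immediate from the definitions: both a minimum size popular feasible matching and a dominant feasible matching are popular feasible matchings, so any edge in either of them lies in a popular feasible matching.

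For the forward direction, fix a popular feasible matching $M$ of $G$ with $e \in M$. The plan is to invoke the popular-edge theorem of Cseh and Kavitha \cite{CK18} on an auxiliary marriage instance $\widetilde{G}$ into which the problem reduces. Using the dummy-partner / penalty-chain technique developed in \cite{NN17,NNRS21,Kavitha21}, one constructs $\widetilde{G}$ on an enlarged vertex set by appending a block of dummy partners to the bottom of the preference list of each critical vertex (with symmetric adjustments on the other side), chosen so heavily penalising that no popular matching of $\widetilde{G}$ leaves a critical vertex of $G$ unmatched. One then verifies a four-part correspondence: first, popular feasible matchings of $G$ are in bijection with popular matchings of $\widetilde{G}$ via a canonical map that completes each feasible matching with its default dummy partners; second, minimum size popular feasible matchings of $G$ correspond exactly to the stable matchings of $\widetilde{G}$ (which are its minimum size popular matchings); third, dominant feasible matchings of $G$ correspond exactly to the dominant matchings of $\widetilde{G}$; fourth, this bijection acts as the identity on the edges of $E$, so that $e \in M$ if and only if $e$ lies in the image $\widetilde{M}$. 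Granted these properties, $\widetilde{M}$ is a popular matching of $\widetilde{G}$ containing $e$, so by \cite{CK18} $e$ lies in some stable or dominant matching of $\widetilde{G}$, whose preimage in $G$ is a minimum size popular feasible matching or a dominant feasible matching containing $e$.

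The main obstacle is establishing the refined correspondences between min-size popular feasible and stable matchings, and between dominant feasible and dominant matchings, because these require the auxiliary construction to preserve not merely popularity but also the finer stratification of popular matchings by size. Concretely, one must verify that augmenting paths between popular feasible matchings in $G$ lift to augmenting paths in $\widetilde{G}$ that respect the stable/dominant dichotomy, and that the penalty chains are long enough, and ordered carefully enough, to prevent spurious popular matchings in $\widetilde{G}$ that would not come from any feasible matching of $G$. If this reduction proves unwieldy, a fallback is to redo the LP-duality and witness analysis of \cite{CK18} directly on $G$ using the popular-feasible witness formalism of \cite{NNRS21,Kavitha21}, extending their alternating-path swap argument so that every swap used to transform a popular feasible matching towards the min-size or dominant extreme is restricted to components of the symmetric difference that avoid $e$, thereby preserving the edge throughout.
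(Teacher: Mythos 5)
The backward direction is fine, but the forward direction is a plan rather than a proof, and the plan's central claim is precisely what is missing. You posit a single auxiliary instance $\widetilde{G}$ with a four-part correspondence: popular feasible matchings of $G$ in bijection with popular matchings of $\widetilde{G}$, minimum size popular feasible matchings corresponding to stable matchings, dominant feasible matchings corresponding to dominant matchings, with the map acting as the identity on $E$. None of this is established, and there is good reason to doubt it holds for the construction you sketch (one penalty block appended to each critical vertex, no copies of real vertices). The known reductions in this setting --- including the ones this paper actually uses --- require $\Theta(|C|)$ level copies of each critical man, because the priority of a critical node must beat alternating paths threading through up to $|C|$ other critical nodes; with copies, an edge $(m,w)$ of $G$ corresponds to many edges $(m^i,w)$, so the map is not the identity on $E$. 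Moreover, the paper needs \emph{two different} auxiliary instances ($G'$ with $\ell+1$ levels for minimum size popular feasible matchings, $G''$ with $\ell+2$ levels for dominant feasible matchings), and even there the correspondence is only between \emph{stable} matchings of the auxiliary instance and the two extremal classes in $G$; surjectivity of those maps alone takes a nontrivial combinatorial argument (the leveling algorithm). Nothing in the literature you cite gives a bijection between \emph{all} popular matchings of an auxiliary instance and all popular feasible matchings of $G$, nor the size-stratified refinement you need to quote the Cseh--Kavitha popular edge theorem and pull the conclusion back; also note that sizes in $\widetilde{G}$ include dummy edges, so ``minimum size in $\widetilde{G}$'' need not track minimum size over $E$-edges without further design. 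So the argument has a genuine gap: the reduction with the stated properties is neither constructed nor verified, and it is the whole difficulty.

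Your fallback remark (redo the analysis directly on $G$, restricting the swaps to components of the symmetric difference avoiding $e$) is in spirit what the paper does, but the substance is missing there too. The paper partitions $A\cup B$ into three parts using size-increasing and size-reducing alternating paths (and must first prove no vertex lies on both), then transforms the part \emph{not} containing $e$ by a proposal algorithm with levels, and finally verifies that the recombined matching satisfies a sufficient condition for being a minimum size popular feasible matching or a dominant feasible matching; the hard work is controlling the cross-edges between the parts after the transformation. ``Restricting swaps to components avoiding $e$'' does not by itself yield any of this, so neither route in your proposal currently constitutes a proof of the forward direction.
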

Theorem~\ref{thm:main}, along with the following results, leads to a polynomial-time algorithm for the popular critical edge problem.
\begin{theorem}\label{thm:surjective}
There are polynomial-time reductions from a given instance $G$ with a set of critical men $C$ to marriage instances $G'$ and $G''$ such that there is a surjective map
from stable matchings in $G'$ to \mPFM s in $G$ and there is a surjective map from stable matchings in $G''$ to \DFM s in $G$.
\end{theorem}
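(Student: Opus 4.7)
The plan is to adapt the level-based reductions of \cite{CK18, Kavitha14} that turn popular matchings into stable matchings in an augmented instance, and to modify them so as to enforce feasibility via dummy last-resort partners.

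For $G'$, I would start from $G$ and, for each critical man $c \in C$, promote $c$ in the preference list of every neighbour $w$ so that $c$ appears above every non-critical neighbour of $w$; additionally, I would append a dummy woman $d_c$ at the bottom of $c$'s preference list whose only neighbour is $c$. The dummies force every critical man to be matched in any stable matching of $G'$, while the promotion captures the level-$1$ role played by critical men in the popularity characterisation of feasible matchings. Projecting out the dummy edges yields the required map from stable matchings in $G'$ to matchings of $G$.

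For $G''$, I would combine the above construction with the Cseh--Kavitha two-level reduction for dominant matchings: each man $m$ receives a promoted copy $m^{+}$ in addition to his ordinary copy, with $m^{+}$ appearing near the top of every woman's preference list, so that the two copies of $m$ represent the options of $m$ being at level $0$ or level $1$ in the dominance characterisation. Critical men's copies are further equipped with dummy last-resort partners to enforce feasibility, and the map to $G$ identifies the ordinary and promoted copies of each man and restricts to edges among original vertices.

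The proof splits into forward and backward directions. The forward direction---that a stable matching in $G'$ (resp.\ $G''$) projects to a \mPFM{} (resp.\ \DFM{}) of $G$---follows by invoking the level-based characterisation of popular feasible matchings, together with a size argument exploiting the fact that all stable matchings in a marriage instance match the same set of vertices. The main obstacle will be surjectivity: given a \mPFM{} or \DFM{} $M$ in $G$, one must lift $M$ to a stable matching of the corresponding augmented instance. This requires identifying the correct level labelling of each vertex with respect to $M$, so that the promotions built into $G'$ and $G''$ are consistent with the levels dictated by $M$, and then verifying that every candidate blocking edge in the augmented instance is non-blocking. I expect this verification to be the most delicate step, especially in $G''$, where one must additionally choose, for each man, the correct copy onto which to map his partner in $M$.
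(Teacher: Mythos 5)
There is a genuine gap: your reductions do not use enough levels. You give each critical man a single promotion (he is placed above all non-critical men in his neighbours' lists) plus one last-resort dummy, and for $G''$ you give every man just one extra promoted copy. With $|C|=\ell$ critical men this is insufficient, because enforcing feasibility may require a \emph{cascade} of promotions among the critical men themselves, and the paper's construction accordingly gives each critical man $\ell+1$ copies in $G'$ (and $\ell+2$ in $G''$), with each woman ranking level-$i$ copies above all level-$(i-1)$ copies and dummy women $d_m^1,\dots,d_m^\ell$ chaining consecutive copies so that exactly one copy is matched to a real woman. A concrete failure of your $G'$: take critical men $c_1,c_2$ and women $w_1,w_2$, where $c_1$ is adjacent only to $w_1$, $c_2$ is adjacent to $w_1$ and $w_2$, and $w_1$ prefers $c_2$ to $c_1$. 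A feasible matching exists ($c_1w_1$, $c_2w_2$), but in your augmented instance the matching $\{(c_2,w_1),(c_1,d_{c_1})\}$ is stable (your promotion is vacuous since both men are critical, and $c_1$'s dummy absorbs him), and its projection leaves $c_1$ unmatched. So stable matchings of your $G'$ need not even project to feasible matchings, and both the forward direction and surjectivity collapse; the same deficiency affects your two-copy $G''$. Note also that the dummy last-resort by itself never enforces feasibility --- it only guarantees the critical man is matched \emph{in the augmented instance}, possibly to the dummy; in the paper feasibility of the image is obtained from the multi-level structure via a counting argument (an unmatched critical man would have to sit at the top level and any alternating path can descend only one level at a time, forcing more than $\ell$ critical men).

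Your outline of the surjectivity step (find a level labelling of $G$ consistent with the given \mPFM\ or \DFM, lift it by choosing the right copy of each man, and verify that no augmented edge blocks) is in the same spirit as the paper's leveling algorithm and its stability proof, and your forward-direction plan via a level-based characterisation also matches the paper's template. But as stated the size argument you invoke (all stable matchings match the same vertex set) is not what yields minimality/dominance; the paper derives these from the level conditions by counting $(+1,+1)$ versus $(-1,-1)$ edges along alternating paths. To repair the proposal you would need to replace the one-shot promotion by the full hierarchy of $|C|+1$ (resp.\ $|C|+2$) levels and redo both directions over that instance.
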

The reductions are similar to those in \cite{NN17,NNRS21,Kavitha21}, however, the surjectivity of the maps is not shown there. In \cite{Kavitha21icalp}, a similar reduction is given and the surjectivity of the map is shown using dual certificates, whereas our proofs of surjectivity are combinatorial.

{\bf Counter-example for the many-to-one setting: }
We show that a result analogous to Theorem~\ref{thm:main} does not generalize to the many-to-one setting referred to as the {\em Hospital-Residents problem} in literature, even when there are no critical nodes.
Figure~\ref{fig:HR} shows such an example. Informally, popularity in the many-to-one setting is defined as follows. To compare two matchings $M$ and $N$, a hospital casts as many votes as its upper quota. It compares the sets of residents $M(h)$ and $N(h)$ that it gets in the matchings $M$ and $N$ respectively by fixing any correspondence function between $M(h)\setminus N(h)$ and $N(h)\setminus M(h)$. For the formal definition of popularity in the many-to-one setting in the presence of critical nodes, we refer the reader to \cite{NNRS21,NN17}, where it is shown that the respective algorithms output a matching that is popular under any choice of the correspondence function.

\begin{figure}[!h]
    \centering
  \scalebox{0.8}{  \includegraphics[width=10cm, height=6cm]{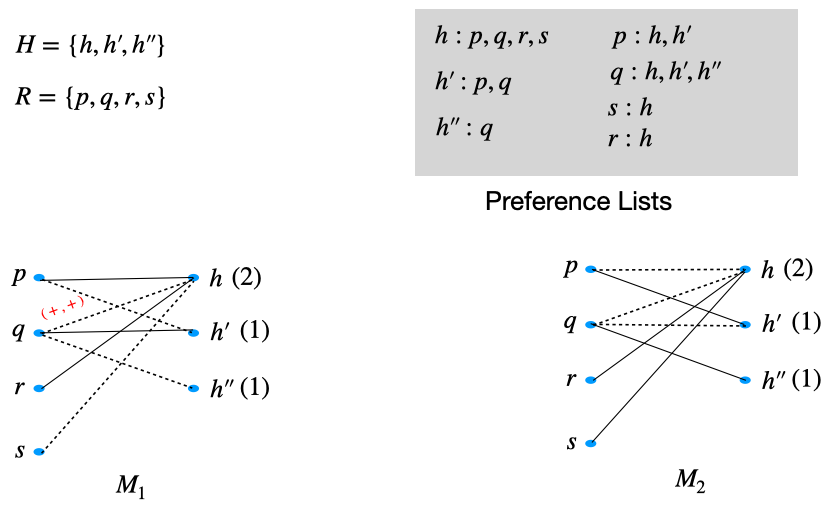}}
    \caption{Here $H$ and $R$ are the sets of hospitals and residents respectively, $h$ has upper quota or capacity $2$, other hospitals have upper quota $1$. The only stable matching is $M=\{(p,h),(q,h)\}$ of size $2$ whereas the only dominant matching is $M_2$, of size $4$. The edge $(q,h')$ belongs to a popular matching $M_1$ of size $3$, but does not belong to the stable matching $M$ or to the dominant matching $M_2$. Thus Theorem~\ref{thm:main}
does not hold for this instance.}
    \label{fig:HR}
\end{figure}

\subsection{Overview of our algorithm}
We give a brief outline of our algorithm.
After proving Theorem \ref{thm:main}, the algorithm to determine whether an edge $e$ belongs to a \PFM\ goes as follows:
\begin{enumerate}[(i)]
\item\label{itm:mpfm} Check whether $e$ belongs to a \mPFM. If so, output yes and stop, otherwise go to the next step.
\item\label{itm:dfm} Check whether $e$ belongs to a \DFM. If so, then output yes and stop. 
If not, then conclude that $e$ does not belong to a \PFM\ in that instance by Theorem~\ref{thm:main}.
\end{enumerate}

For steps (\ref{itm:mpfm}) and (\ref{itm:dfm}) above, we use the reductions mentioned in Theorem~\ref{thm:surjective}. For an edge $e$ in $G$, there are multiple edges in $G'$ and $G''$ corresponding to $e$. The stable edge algorithm of \cite{Knuth76} can be used to determine whether any of the edges that correspond to $e$ in $G'$ or $G''$
is contained in some stable matching in $G'$ or $G''$. The details are given in Section~\ref{sec:main-thm}.

To prove Theorem \ref{thm:main}, we assume that $e$ is contained in a \PFM\ $M$ which is neither a \mPFM\ nor a \DFM. 
We give a \emph{Partition Method} in Section \ref{sec:part} which partitions the given instance
into three parts. We call the restrictions of $M$ on the three parts as $M_d$, $M_m$ and $M_r$. 
Since $e$ is contained in $M$, 
$e$ must belong to one of the three parts viz. $M_d$, $M_m$ and $M_r$. 
If $e \in (M_d \cup M_r)$, we convert the matching $M_m$ to another matching $M'_d$ which is a \DFM\ in that part, and show that 
$M^*_d = (M_d \cup M'_d \cup M_r)$ is a \DFM\ in the whole instance. Thus $e$ is contained in a \DFM, namely $M^*_d$.
Similarly, if $e \in (M_m \cup M_r)$ then we convert the matching $M_d$ to another matching $M_m'$ which is a \mPFM\ in the 
respective part, and moreover, 
$M^*_m = (M_m' \cup M_m \cup M_r)$ is a \mPFM\ in the whole instance.
Thus $e$ belongs to the \mPFM\ $M^*_m$. 

\subsection{Related Results}\label{sec:relwork}
Gale and Shapley proposed  an algorithm to find a stable matching in a marriage instance in their seminal paper \cite{GS62}. The notion of popular matching was introduced by G\"{a}rdenfors \cite{G75}. Popular matchings in the marriage instance have been considered first in \cite{HK13,Kavitha14}. An $O(m)$-time algorithm to find a dominant matching in a marriage instance is given in \cite{Kavitha14}. In \cite{Kavitha14}, a size-popularity tradeoff has been considered, and a polynomial-time algorithm for finding a maximum matching that is popular among all maximum matchings is given.
The popular edge problem is inspired by the \emph{stable edge problem}. The stable edge problem involves deciding whether a given edge $e$ belongs to a stable matching in a Stable Marriage instance. A polynomial-time algorithm for the stable edge problem is given in the book by Knuth\cite{Knuth76}. 

Cseh and Kavitha \cite{CK18} addressed the popular edge problem and gave an $O(m)$ time algorithm for the same.
Later, Faenza et. al \cite{FKPZ19} show that the problem of deciding whether an instance admits a popular matching containing a set of two or more edges is \NP-Hard. In that paper, the authors also show that finding a popular matching in a stable marriage instance, which is neither stable nor dominant is \NP-Hard.

In \cite{NN17}, the authors showed that a \PFM\ always exists in an \HRLQ\ instance. This has been further generalized by Nasre et al. \cite{NNRS21} to the \HRLQ\ case with critical residents. While our work and \cite{NN17,NNRS21} deal with instances that admit a feasible matching, the work of Kavitha \cite{Kavitha21} contains an algorithm to find a popular critical matching i.e., a matching that matches maximum possible number of critical nodes and is popular among all such matchings.  Problems related to \HRLQ\ have also been considered in \cite{HIM16} and \cite{BFIM10} in different settings. 
Besides this, there has been a lot of recent work on various aspects of popular matchings and their generalizations e.g. weighted popular matchings, quasi-popular matchings, extended formulations, popular matchings with one-sided bias, dual certificates to popularity, popular matchings polytope and its extension complexity, hardness and algorithms for popular matchings in case of ties in preferences etc. \cite{Kavitha21icalp,Kavitha20,Kavitha20icalp,KKMSS20,FK20,Kavitha18,HK21,Kavitha16,CHK17,GNNR19}.

{\bf A comparison with \cite{CK18}:} Cseh and Kavitha in their paper \cite{CK18} presented an $O(m)$-time algorithm for the popular edge problem. Our result follows a similar template as theirs, although unlike that in \cite{CK18} where nodes are divided into two levels, we have nodes divided into a number of levels proportional to the number of critical nodes. Also, we need to partition the given instance into three parts, all of which can have blocking pairs, whereas in \cite{CK18}, all the blocking pairs can be put into only one of the two parts straight away. 

\subsection{Organization of the paper}
In Section \ref{sec:algo-mPFM-DFM}, we give the reductions from a \SMPM\ to marriage instances without critical nodes. In Section \ref{sec:main-thm}, we prove Theorem~\ref{thm:main} and discuss the popular edge algorithm.

\section{The Reductions}\label{sec:algo-mPFM-DFM}
We describe the reductions from a marriage instance $G=(A\cup B,E)$ with a critical node set $C\subseteq A$ to marriage instances $G'$ and $G''$ such that there is a surjective map from
the set of stable matchings in $G'$ to the set of \mPFM s in $G$, and a surjective map from the set of stable matchings in $G''$ to the set of \DFM s in $G$, thereby proving Theorem~\ref{thm:surjective}. 

We recall some notation below, that is standard in popular matchings literature (e.g. \cite{HK13,Kavitha14,CK18} etc.) 
\begin{definition}[Edge labels]
Given a matching $M$ in $G$, a vertex $u$ assigns a label $+1$ (respectively $-1$) to an edge $(u,v)$ incident on it if $(u,v)\notin M$ and $u$
prefers $v$ over its partner in $M$ denoted by $M(u)$ (respectively $M(u)$ over $v$). Thus each edge $(u,v)$ gets two labels, one from $u$ and the other from $v$.  
\end{definition}
By above definition, an edge not present in a given matching $M$ can get one of the four labels $(+1,+1),(+1,-1),(-1,+1),(-1,-1)$. We use the convention that the first label in the pair is from a vertex in $A$ and the second label is from a vertex in $B$. Any vertex prefers to be matched to one of its neighbors over remaining unmatched. 


\subsection{Reduction from $G$ to $G'$}\label{sec:mPFM-algo}
Given an instance $G$, the instance $G'$ is constructed as follows. Let $C\subseteq A$ be the set of critical nodes and
$\ell=|C|$.

\begin{itemize}
\item \textbf{The set $A'$: }
For each $m \in C$, $A'$ has $(\ell+1)$ copies of $m$, denoted by the set $A'_m=\{m^0,m^1,...,m^{\ell}\}$. We refer to $m^i \in A'$ as the {\em level $i$} copy of $m \in A$. For each $m \in A \setminus C$, $A'$ has only one copy of $m$, denoted $A'_m=\{m^0\}$. Now, $A' = \bigcup\limits_{m \in A} A'_m $.

\item \textbf{The set $B'$: }
All the women in $B$ are present in $B'$. Additionally,
corresponding to each $m \in C$, $B'$ contains $\ell$ dummy women denoted by the set $D_m=\{d_m^1,d_m^2,d_m^3,...,d_m^{\ell}\}$. We call $d_m^i$ as the {\em level $i$ dummy woman for $m$}. For $m \in A\setminus C$, $D_m=\emptyset$. Now, $B' = B \cup \bigcup\limits_{m \in A} D_m$. 
\end{itemize}
We denote by \listm\ and \listw\ the preference lists of $m \in A$ and $w \in B$ respectively. Let \listw$^i$ be the list of level $i$ copies of men present in \listw, if these copies exist. 

We now describe the preference lists in $G'$. Here $\circ$ denotes the concatenation of two lists. 

\begin{center}
		\begin{tabular}{lll}
		$m^0$ s.t. $m\in A\setminus C$  &: & \listm\\		
		$m\in C$, $i \in \{0,\dots ,\ell\}$: & & \\
		$m^0$ &:& \listm, $d_m^1$\\
		$m^i$ &: & $d_m^i$, \listm, $d_m^{i+1}$, $i \in \{1,\ell-1\}$)\\
		$m^{\ell}$ &:& $d_m^{\ell}$, \listm\\
		$w$ s.t. $w\in B$ &: & \listw$^{\ell} \circ$ \listw$^{\ell-1} \circ\ldots\circ $\listw$^0$ \\
		$d^i_m$,$i \in \{1, \dots ,\ell\}$  &:& $m^{i-1}$, $m^i$ \\

		\end{tabular}
	\end{center}

\subsection{Correctness of the reduction}
After constructing $G'$, the mapping of a stable matching $M'$ in $G'$ to a \mPFM\ $M$ in $G$ is a simple and natural one: For $m \in A$, define the set $M(m) = B \cap \bigcup\limits_{m \in A} M'(m^i)$, which is the set of non-dummy women matched to any copy of $m$ in $A'$. In the rest of this section, the term {\em image} always refers to the image under this map. 

It remains to prove that $M$ is a \mPFM\ i.e., $M$ is a matching in $G$, it is feasible, popular, and no matching smaller than $M$ is popular. 
We define some terminology first. A man $m \in A$ and his matched partner $w\in B$ in $M$ are said to be {\em at level $i$} if $(m^i,w)\in M'$. A man $m \in A$ which is unmatched in $M$ is said to be at level $i$ if $m^i$ in $A'$ is unmatched in $M'$. All unmatched women are said to be at level $0$. Now we give a sufficient condition for a \mPFM\ in $G$.

\begin{theorem}\label{thm:mPFM}
The image $M$ of a stable matching $M'$ in $G'$ is a \mPFM\ in $G$ and it satisfies the following conditions. Moreover, any matching $M$ that satisfies the following conditions for some assignment of levels to vertices of $G$ is a \mPFM.
\begin{enumerate}
    \item All $(+1,+1)$ edges are present in between a man at level $i$ and a woman $w$ at level $j$ where $j > i$.
    \item All edges between a man at level $i$ and a woman at level $(i-1)$ are $(-1,-1)$ edges.
    \item No edge is present between a man at level $i$ and a woman at level $j$ where $j \leq (i-2)$, and all the edges of $M$ are between vertices at the same level.
    \item All unmatched men are at level $0$.
\end{enumerate}
\end{theorem}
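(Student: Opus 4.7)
The plan is to prove Theorem~\ref{thm:mPFM} by first analyzing the structure that stability in $G'$ imposes on a stable matching $M'$, then translating this into the four level conditions on $M$, and finally using those conditions together with a witness/dual argument to certify that $M$ is a popular feasible matching of minimum size.

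First I would establish the canonical level structure of $M'$. Fix $m \in C$ and consider the dummy women $d_m^1, \ldots, d_m^\ell$, each of whom has exactly two neighbors $m^{i-1}, m^i$ and prefers $m^{i-1}$ first. A short case analysis using the preference lists of the copies $m^0, \ldots, m^\ell$ forces the matching to have a unique \emph{pivot} index $i_m$: for $j < i_m$, the copy $m^j$ is matched to $d_m^{j+1}$ (its least preferred option); for $j > i_m$, the copy $m^j$ is matched to $d_m^j$ (its most preferred option); and only $m^{i_m}$ can be matched to a real woman (or, if $i_m = \ell$, possibly unmatched). Any deviation creates a blocking pair involving the relevant dummy. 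Under the assumption that $G$ admits a feasible matching, one further shows $m^{i_m}$ must be matched to a real partner, so $M$ is a matching that matches every vertex of $C$. This defines the level of each $m \in C$ as $i_m$, the level of every $m \in A \setminus C$ as $0$, and the level of $w \in B$ as the level of its partner in $M$ (or $0$ if unmatched), immediately giving Condition~4 and the same-level statement in Condition~3.

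Next I would derive Conditions~1--3 from stability of $M'$. For any edge $(m, w) \in E \setminus M$ with $m$ at level $i$ and $w$ at level $j$: if $(m, w)$ were $(+1, +1)$ and $j \le i$, then since $w$'s preference list in $G'$ concatenates level-$\ell$ copies first, down to level-$0$ copies, the edge $(m^i, w)$ would block $M'$; hence $j > i$. If $m$ is at level $i$ and $w$ at level $i-1$, the copy $m^{i-1}$ is matched to $d_m^i$ (its least preferred choice), so $m^{i-1}$ prefers $w$, and $w$ prefers $m^{i-1}$ unless $(m, w)$ is labeled $-1$ by $w$; and $w$ prefers its level-$(i-1)$ partner to $m^i$ within her level-$i$ block only if $m$'s label on $(m,w)$ is also $-1$, giving Condition~2. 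If instead $j \le i-2$, the copy $m^{j+1}$ is matched to $d_m^{j+2}$ (its worst dummy, at the bottom of its list) while $w$ prefers the higher-level copy $m^{j+1}$ over her level-$j$ partner, creating a blocking pair in $G'$; hence no such edge exists.

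For the converse and the two halves of popularity and minimum size, I would use a dual certificate based on levels. Assign $\alpha(m) = \mathrm{level}(m)$ for $m \in A$ and $\alpha(w) = -\mathrm{level}(w)$ for $w \in B$, suitably normalized. Define $\mathrm{wt}_M(u,v) \in \{+2, 0, -2\}$ as the sum of labels on $(u,v)$ for $(u,v) \notin M$ and $0$ for $(u,v) \in M$. Using Conditions~1--3 together with the fact that edges of $M$ lie within a level, one verifies $\alpha(u) + \alpha(v) \ge \mathrm{wt}_M(u, v)$ for every edge, and $\sum_v \alpha(v) \ge 0$, with equalities tight on $M$. Summing over any matching $N$ gives $\Delta(M, N) := |\{v : v \text{ prefers } M\}| - |\{v : v \text{ prefers } N\}| \ge 0$, proving popularity. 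For minimum size among popular feasible matchings, I would argue that every popular feasible matching admits a level assignment (via the known structure of popular matchings), and that the stable-in-$G'$ assignment picks the smallest possible level for every $m \in C$, so any popular feasible matching $N$ with $|N| < |M|$ would force some critical vertex to a higher level, violating Condition~1 or~3 and hence popularity. The \textbf{main obstacle} is the popularity direction: one must choose the right scaling of the level weights and carefully handle the $(+1,-1)$ and $(-1,+1)$ edges, where Condition~1 does not directly apply, to make the dual inequality hold on every edge.
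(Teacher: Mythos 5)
Your first half---using the dummy women to force the pivot structure of $M'$ (at most one copy of each man matched to a real woman) and then deriving Conditions 1--4 from specific blocking pairs in $G'$---is essentially the paper's argument and is fine, modulo the fact that feasibility of $M$ is asserted rather than proved (it does not follow from stability of $M'$ alone; the paper proves it by taking a feasible matching $N$, following an alternating path of $M\oplus N$ from an unmatched critical man, and using Conditions 2--3 to show the path descends at most one level per step, which forces at least $\ell+1$ critical men, a contradiction).

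The genuine gap is in the second half, where you certify popularity by a level-based dual. As stated, the edge inequality already fails on $(+1,+1)$ edges (a level-$i$ man and level-$j$ woman with $j>i$ give $\alpha(m)+\alpha(w)=i-j\le -1 < 2$), and after fixing the signs/scaling the failure just moves to the constraints for vertices matched in $M$ but left unmatched by $N$. This failure is unavoidable: a certificate of the form you describe, holding for \emph{every} edge and summed over an arbitrary matching $N$, would prove that $M$ is popular against \emph{all} matchings of $G$, which is false. For example, with $A=\{m,m'\}$, $C=\{m\}$, $B=\{w\}$, and $w$ preferring $m'$ to $m$, the unique feasible matching $M=\{(m,w)\}$ satisfies all four conditions (with $m,w$ at level $1$, $m'$ at level $0$) yet loses $2$--$1$ to $\{(m',w)\}$. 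Any correct argument must use that the comparison is restricted to \emph{feasible} $N$ (every critical man is matched in $N$, so the offending constraints for critical men never arise); you never invoke this, whereas the paper's combinatorial proof uses it implicitly by analyzing the alternating paths and cycles of $M\oplus N$ and counting $(+1,+1)$ against $(-1,-1)$ edges via Conditions 1--3 (a path can rise many levels only on $(+1,+1)$ edges but can descend only one level at a time, and only on $(-1,-1)$ edges). Your minimum-size argument has a similar problem: the claim that ``every popular feasible matching admits a level assignment'' is exactly the content of the later surjectivity result (Algorithm~1 and Theorem~\ref{thm:mPFM-term}), so it cannot be presupposed here; the paper instead shows directly that for any feasible $N$ with $|N|<|M|$, the path of $M\oplus N$ between two $N$-unmatched vertices starts at a level-$0$ non-critical man, and the same level-counting gives $\phi(M,N)>\phi(N,M)$, so no smaller feasible matching can be popular.
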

\begin{proof}
    Here we need to show that if $M$ satisfies the four conditions, then $M$ is a \mPFM. So, to prove this we show at first $M$ is a \PFM\ and then we show that, for all feasible matchings $N$ such that $|N| < |M|$, we have $\phi(N,M) < \phi(M,N)$.

First we prove that $M$ is a feasible matching.
Suppose $M$ is not a feasible matching and there exists a feasible matching $N$ in that \SMPM. Recall that we are only concerned with those instances which have at least one feasible matching. Suppose $m$ is a critical man who is unmatched in $M$. So, the graph $M \oplus N$ must contain an alternating path $\rho$ which starts from $m$. Now $\rho$ can end in a man $m'$ or in a woman $w'$. 

\underline{CASE 1}: $\rho$ ends in $m'$: Let $\rho = (m,w,m_1,w_1,....,m')$. Since $\rho$ ends in $m'$, $m'$ must be unmatched in $N$. Since $N$ is a feasible matching $m'$ must be a non critical man and hence will be at level $0$. Since $m$ is unmatched in $M$, it has to be in the level $\ell$ otherwise if $m$ is at level $i$ where $i < \ell$ then $(m^i,d_m^{i+1})$ would be a $(+1,+1)$ edge in $M'$ because $m^i$ is unmatched in $M'$ and $d_m^{i+1}$ prefers $m^i$ the most in $G'$. Again no woman $w$ which is adjacent to $m$ can be at level $\ell-1$ because then $(m^{\ell},w)$ would form a $(+1,+1)$ edge in $M'$ as $m^{\ell}$ is unmatched and $w$ prefers $m^{\ell}$ more than her matched partner which is at level $\ell-1$. Hence, in $\rho$, $w$ is at level $\ell$ again $M(w) = m_1$ is also at level $\ell$ because the level of a woman and her matched partner are same. Now, $w_1$ cannot be at level less than $(\ell-1)$ due to Condition $3$ of Theorem \ref{thm:mPFM}. Hence the alternating path $\rho$ can go only one level down that is from a man at level $i$ to a woman at level $i-1$. Note that all the men who are at level greater than $0$ are critical men because there is no copy of a non-critical man of level greater than $0$ in $G'$. Since $\rho$ can go only one level down, hence there must exist at least one critical man at each level from $1$ to $\ell-1$ and there are at least two critical men ($m$ and $m_1$) at level $\ell$. Hence, the number of critical men in $G$ is at least $\ell+1$. This is a contradiction because we know the number of critical men in $G$ is $\ell$.

 \underline{CASE $2$}: $\rho$ ends in $w'$. Since $\rho$ ends in $w'$, $w'$ has to unmatched in $M$ and thus the level of $w'$ is $0$ as the level of each unmatched woman is defined to be $0$. Hence $\rho$ starts from a man at level $\ell$ and ends at a woman at level $0$. Since $\rho$ can only go one level down, hence using the same arguments as used in case $1$, we get that there are at least $\ell+1$ critical men in $G$. This is a contradiction because we know the number of critical men in $G$ is $\ell$. Hence $M$ is a feasible matching.

Now, we prove that $M$ is a \PFM.

Consider any feasible matching $N$ in $G$. We need to show that $\phi(N,M) \leq \phi(M,N)$. Consider the graph $M \oplus N$. The graph $M \oplus N$ is a disjoint union of alternating paths and cycles. If $\phi(N,M) > \phi(M,N)$ then at least one of the following three conditions must be satisfied in the graph $M \oplus N$ when we label the edges of $N$ with respect to $M$.
\begin{enumerate}[(a)]
    \item There is an alternating cycle with more $(+1,+1)$ edges than $(-1,-1)$ edges.
    \item There is an alternating path which has at least one end point unmatched in $M$ where the number of $(+1,+1)$ edges is more than the number of $(-1,-1)$ edges.
    \item There is an alternating path which has both the end points matched in $M$ and the number of $(+1,+1)$ edges is at least two more than the number of $(-1,-1)$ edges in that alternating path, and the path ends in a man $m\notin P$.
\end{enumerate}
Now, we show that none of the above conditions are satisfied which implies that $\phi(N,M) \leq \phi(M,N)$ for all feasible matchings $N$ in the \SMPM. Hence $M$ is a \PFM. 

\noindent
\underline{Condition (a)}: From Condition $1$ of theorem \ref{thm:mPFM} we get that a $(+1,+1)$ is present in between a lower level man $m$ and a higher level woman $w$. Let us assume the level of $m$ is $i$ and the level of $w$ is $j$, hence $j > i$. So, if an alternating cycle $\rho$ in $M \oplus N$ has a $(+1,+1)$ edge in between $m$ to $w$ then $\rho$ must return to $m$ again. Now, from the Condition $3$ of theorem \ref{thm:mPFM} we get that an edge in $\rho$ can go only one level down that is from a man at level $i$ to a woman at level $(i-1)$ (not below $(i-1)$) and from condition $2$ we get that all edges in between a man at level $i$ and a woman at level $(i-1)$ are $(-1,-1)$ edges. Hence we get that the alternating subpath of $\rho$ from $w$ to $m$ must contain $(j - i)$ $(-1,-1)$ edges. Hence, for one $(+1,+1)$ edge we get $(j-i)$ edges in $\rho$. Since, $(j - i) \geq 1$ (equality occurs when $i = (j-1)$) we get that the number of $(+1,+1)$ edges is less than or equal to the number of $(-1,-1)$ edges in $\rho$. Hence, condition (a) is not satisfied in $M \oplus N$. 

\noindent
\underline{Condition (b)}: \underline{CASE 1}: Alternating path $\rho$ starts from an unmatched man $m$: Since $m$ is unmatched in $M$ level of $m$ is 0 due to Condition $4$ of theorem \ref{thm:mPFM}. Hence $\rho$ starts with an edge present in $N$. If $\rho$ ends in a man $m'$ then $m'$ is unmatched in $N$ and hence $m'$ is a non-critical man and hence is at level 0. Let $j$ be the highest level of a man $m^j$ present in $\rho$. Since $\rho$ starts from an unmatched man $m$ which is at level 0, hence due to Condition $1$ we get that the alternating sub path of $\rho$ from $m$ to $m^j$ can contain at most $j$ $(+1,+1)$ edges. Again due to conditions $2$ and 3 we get the alternating sub path of $\rho$ from $m^j$ to $m'$ must contain $j$ $(-1,-1)$ edges. Hence $\rho$ has more $(-1,-1)$ edges than $(+1,+1)$ edges. Now, if $\rho$ ends in a woman $w'$ then $w'$ is unmatched in $M$ and hence the level of $w'$ is 0. So, $\rho$ starts at a level 0 man $m$ and ends at a level 0 woman $w'$. So, arguing similarly as we argued when $\rho$ ends in $m'$ we get that $\rho$ has more $(-1,-1)$ edges than $(+1,+1)$. \underline{CASE $2$}: Since $w$ is unmatched in $M$, hence $\rho$ starts with an edge in $N$. Alternating path $\rho$ starts from an unmatched woman $w$: If $\rho$ ends in a man $m''$ then $m''$ is unmatched in $M$ and hence due to CASE 1 we get that $\rho$ has more $(-1,-1)$ edges than $(+1,+1)$ edges. When $\rho$ ends in a woman $w''$ then $w''$ is at unmatched in $N$. If the level of $w''$ is $i$ then due to conditions $2$ and 3 we get that $\rho$ has $i$ more $(-1,-1)$ edges than $(+1,+1)$ edges. Hence, condition (b)  is not satisfied.    

\noindent\underline{Condition (c)}: Consider an alternating path $\rho$ which starts from a man $m$ matched in $M$ and ends in a woman $w$ matched in $M$. Since $m$ is the endpoint of $\rho$ we get that $m$ is unmatched in $N$. Hence, $m$ is a non-critical man and thus is at level 0. Let $w$ is at level $i$ and $\rho = (m,w_1,m_1,w_2,m_2,....,w)$. Since, $M(m) = w_1$ hence $w_1$ is at level 0. Now, from the conditions $2$ and 3 we get that $m_1$ is either at level 0 or at level 1. So, the alternating path $\rho$ can go up by only one level (that is from a woman at level $i$ to a man at level $i+1$) and if it goes up then it has to take a $(-1,-1)$ edge. Since $w$ is at level $i$ and $w_1$ is at level 0, $\rho$ will have $i$ more $(-1,-1)$ edges than $(+1,+1)$ because to go from $w_1$ to $w$ $\rho$ needs to take $i$ $(-1,-1)$ edges. Hence, condition (c) is not satisfied.  

Since none of the above conditions are satisfied, it shows that $M$ is a \PFM\ .

\textbf{$\mathbf{M}$ is a \mPFM}: Now we show that for any feasible matching $N$ such that $|N| < |M|$ we have $\phi(N,M) < \phi(M,N)$. We take the graph $M \oplus N$, which is the disjoint union of alternating paths and cycles. There is no alternating path or cycle $\rho$ in $M \oplus N$ such that $\phi((M \oplus \rho),M) > \phi(M,(M \oplus \rho))$ otherwise $M$ is not a \PFM\. So, now we need to show an alternating path or cycle in $M \oplus N$ such that $\phi((M \oplus \rho),M) < \phi(M,(M \oplus \rho))$ then only we can say $\phi(N,M) < \phi(M,N)$. Now, since $|N| < |M|$  there must exist an alternating path which starts from a man $m$ unmatched in $N$ and ends in a woman $w$ unmatched in $N$. Since $m$ is unmatched in $N$ it is a non-critical man and hence has level 0. Suppose $\rho = (m,w_1,m_1,w_2,m_2...,w)$ and the level of $w$ be $i$. Let $j$ be the highest level of a man present in $\rho$. Note that the edges $(w_i,m_i)$ are all edges present in $N$. Since $m$ is at level 0, hence $M(m) = w_1$ is also at level 0. Now from Condition $3$ of theorem \ref{thm:mPFM} we get that $m_1$ can be either at level 0 or at level 1. Hence, the alternating path $\rho$ can go up by only one level. Condition $2$ of theorem \ref{thm:mPFM} says that all the edges from a woman at level $i$ to a man at level $i+1$ is a $(-1,-1)$ edge. Since the highest level of a man in $\rho$ is $j$, hence in $\rho$ there must be $j$ $(-1,-1)$  edges. Since $\rho$ ends in a woman $w$ which is at level $i$, hence there can be at most $j - i$ $(+1,+1)$ edges as from Condition $1$ of theorem \ref{thm:mPFM} we get that $(+1,+1)$ edges are only present in between a higher level woman and a lower level man. Since $j \geq (j-i)$, hence the number of $(-1,-1)$ edges is greater than or equal to the number of $(+1,+1)$ edges in $\rho$. Hence,  $\phi((M \oplus \rho),M) < \phi(M,(M \oplus \rho))$. Note that even if the number of $(-1,-1)$ edges equal to the number of $(+1,+1)$ edges in $\rho$, we have $\phi((M \oplus \rho),M) < \phi(M,(M \oplus \rho))$ because $M \oplus \rho$ loses the votes of $m$ and $w$ (the end vertices) and does not get any extra vote from the intermediate vertices as the number of $(-1,-1)$ edges equal to the number of $(+1,+1)$ edges. Hence, $M$ is a \mPFM.  

Now we show that any $M$ that is an image of a stable matching $M'$ in $G'$ satisfies all the four conditions.

\underline{Condition $1$}: Suppose there is $(+1,+1)$ edge in between a man $m$ at level $i$ and woman $w$ at level $j$ such that $j \leq i$ in the matching $M$. Hence $m$ prefers $w$ more than his matched partner in $M$. Now, $M'(m^i) = M(m)$ and since the preference list of $m^i$ in the \SMone\ instance is same as the preference list of $m$ in the \SMPM\ (except the dummy women in the beginning and end of the preference list of $m^i$), $m^i$ prefers $w$ more than $M'(m^i)$. So, in $M'$ the edge $(m^i,w)$ will be a $(+1,+1)$ edge because $m^i$ prefer $w$ more than $M'(m^i)$ and $w$ prefers $m^i$ more than $M'(w)$ because her matched partner is at level $j$ and $j \leq i$. In the \SMone\ instance $w$ prefers a level $i$ man more than a level $j$ man if $i > j$ and if $i = j$ then $w$ prefers $m^i$ more than $M'(w)$ because $w$ prefers $m$ more than $M(w)$ in the matching $M$. This contradicts the fact that $M'$ is stable matching. Hence, $M$ satisfies Condition $1$.\\
\underline{Condition $2$}: Suppose there is a man $m$ at level $i$ which is adjacent to a woman at level $(i-1)$ but the edge $(m,w)$ is not labelled $(-1,-1)$. $(m,w)$ cannot be labelled $(+1,+1)$ due to Condition $1$. So, it has to be labelled $(+1,-1)$ and $(-1,+1)$. CASE 1: If $(m,w)$ is labelled $(+1,-1)$ then $m$ prefers $w$ more than $M(m)$. Hence $m^i$ prefers $w$ more than $M'(m^i)$ and $w$ prefers $m^i$ more than its matched partner in $M'$ which is the $(i-1)$ level copy of $M(w)$. Hence the edge $(m^i,w)$ is a $(+1,+1)$ edge in the matching $M'$. This contradicts stability of $M'$. CASE $2$: Now, if $(m,w)$ is labelled $(-1,+1)$ then $w$ prefers $m$ more than $M(w)$. Now since $m$ is at level $i$ so $m^i$ gets matched to a non dummy woman in the matching $M'$. So, from Corollary \ref{3.1.2m} we get that $m^{i-1}$ is matched to the dummy woman $d_m^i$ which is present at the end of his preference list. In this the edge $(m^{i-1},w)$ would be labelled $(+1,+1)$ because $m^{i-1}$ would prefer $w$ more than its matched partner in $M'$ which is present at the last of his preference list and $w$ would prefer $m^{i-1}$ more than $M'(w)$, which is a $(i-1)$ level copy of $M(w)$ as $w$ prefers $m$ more than $M(w)$. This again contradicts that $M'$ is a stable matching. Hence $M$ satisfies condition $2$.\\
\underline{Condition $3$}: Suppose Condition $3$ is not satisfied, then there is a man $m$, which at level $i$ is adjacent to a woman $w$ at level $j$ such that $j \leq (i-2)$. In this case the edge $(m^{i-1},w)$ would be a $(+1,+1)$ edge because $m^{i-1}$ prefers $w$ over its matched partner in $M'$ which is $d_m^i$ (Corollary \ref{3.1.2m}) and $w$ prefers $m^{i-1}$ over $M'(w)$ which is a $(i-2)$ level copy of $M(w)$. This contradicts the fact that $M'$ is a stable matching. Hence, $M$ satisfies Condition $3$.\\
\underline{Condition $4$}: Since, $M$ is feasible matching, so the unmatched men are only the non critical men. They must be at level 0 because there is no other copy of non critical men in $G'$. Hence $M$ satisfies Condition $4$.
\\\\
Hence any matching $M$ that is an image of a stable matching $M'$ in $G'$ is a \mPFM. \\

\end{proof}

\subsection{Surjectivity of the map}
In this section, the goal is to prove the following theorem:
\begin{theorem}\label{thm:mPFM-surjectivity}
For every \mPFM\ $M$ in $G$, there exists a stable matching $M'$ in $G'$ such that $M$ is the image of $M'$.
\end{theorem}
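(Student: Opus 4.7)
The plan is to start with a \mPFM\ $M$ in $G$, construct a level function $\lambda: A \cup B \to \{0,1,\ldots,\ell\}$ so that $M$ together with $\lambda$ satisfies Conditions 1--4 of Theorem~\ref{thm:mPFM}, and then use $\lambda$ to define a stable matching $M'$ in $G'$ whose image under the mapping of Section~\ref{sec:mPFM-algo} is exactly $M$.

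First I would construct $\lambda$ by iterative promotion: initialize $\lambda(v)=0$ for every vertex, and while some edge $(m,w)\in E \setminus M$ violates either Condition~1 (i.e.\ $(m,w)$ is $(+1,+1)$ yet $\lambda(w)\le \lambda(m)$) or Condition~3 (i.e.\ $\lambda(m)\ge \lambda(w)+2$), raise $\lambda(w)$ to $\lambda(m)+1$ and, if $w$ is $M$-matched, simultaneously raise $\lambda(M(w))$ to the same value. Promoting matched pairs jointly preserves the same-level requirement on $M$-edges demanded by Condition~3. Two structural points then need to be checked: (a) only critical men are ever pushed above level $0$, which I expect to follow from the feasibility of $M$ and a case analysis ruling out promotions whose root is a non-critical man; and (b) unmatched men (which are then necessarily non-critical) remain at level $0$, giving Condition~4. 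Condition~2 falls out because the procedure terminates precisely when every adjacent-level edge outside $M$ is $(-1,-1)$.

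The main obstacle is to argue that the procedure terminates with $\lambda(m)\le \ell=|C|$ for every $m$. I would trace a promotion chain: each promotion of a critical man from level $i$ to level $i+1$ is triggered by a woman $w$ at level $\le i$ whose $M$-partner $M(w)$ is itself a critical man at that same level; unwinding the trigger relation backwards from a man at level $k$ yields $k$ distinct critical men. Distinctness is exactly where the \emph{minimum-size} hypothesis is used: a repetition in the chain would close a cyclic alternating structure whose symmetric difference with $M$ is a feasible matching strictly smaller than $M$ and at least as popular as $M$, contradicting $M$ being a \mPFM. Hence no chain has length greater than $\ell$, and $\lambda(m)\le \ell$ for every vertex.

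Given $\lambda$, I would define $M'$ on $G'$ by setting $M'(m^{\lambda(m)})=M(m)$ (leaving $m^{\lambda(m)}$ unmatched if $m$ is unmatched in $M$), $M'(m^i)=d_m^{i+1}$ for $0\le i<\lambda(m)$, and $M'(m^i)=d_m^i$ for $\lambda(m)<i\le \ell$. By construction the image of $M'$ equals $M$. Stability is then verified by eliminating blocking pairs in $G'$: each dummy $d_m^j$ receives her most preferred available neighbour consistent with $\lambda$, so no dummy blocks; and any hypothetical blocking pair $(m^i,w)$ with $w\in B$ would, via the preference structure of $G'$ in which women prefer higher-level copies, translate to an edge in $G$ violating one of Conditions 1--3 relative to $\lambda$, contradicting what the promotion procedure established. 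These final checks are case-by-case but routine given the preceding work.
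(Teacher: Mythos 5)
Your leveling procedure repairs only violations of Conditions 1 and 3 of Theorem~\ref{thm:mPFM}, and you assert that Condition 2 ``falls out'' at termination. It does not: an edge $(m,w)\notin M$ labelled $(+1,-1)$ or $(-1,+1)$ with $\lambda(m)=\lambda(w)+1$ violates neither of the conditions your loop checks, so your procedure can halt with such an edge present, and this is fatal for the final stability check. If $(m,w)$ is $(+1,-1)$ with $\lambda(m)=i$, $\lambda(w)=i-1$, then in $G'$ the pair $(m^i,w)$ blocks $M'$: $m^i$ prefers $w$ to $M'(m^i)=M(m)$, and $w$ prefers the level-$i$ copy $m^i$ to her partner $M'(w)$, which is a level-$(i-1)$ copy. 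Symmetrically, if $(m,w)$ is $(-1,+1)$ at those levels, then $(m^{i-1},w)$ blocks, since $m^{i-1}$ is matched to his last-choice dummy $d_m^i$ while $w$ prefers $m^{i-1}$ to the level-$(i-1)$ copy of $M(w)$. This is precisely why the paper's Algorithm~\ref{LAmPFM} has a second phase that promotes $w$ (and $M(w)$) to level $i$ whenever a $(+1,-1)$ or $(-1,+1)$ edge runs from a man at level $i$ to a woman at a strictly lower level; without it, your claim that any blocking pair in $G'$ translates into a violation of Conditions 1--3 is false.

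There are two further problems with the promotion rule and the bound. For a Condition-3 violation you raise $\lambda(w)$ to $\lambda(m)+1$, whereas the paper raises it only to $\lambda(m)-1$ when the offending edge is $(-1,-1)$. This over-promotion destroys the bookkeeping on which the rest of the argument rests: the invariant of Theorem~\ref{thm:mPFM-term}, that a man at level $i$ is reached by an alternating path having exactly $i-j$ more $(+1,+1)$ than $(-1,-1)$ edges from a level-$j$ woman (or an unmatched man), is what yields both the bound $\lambda\le\ell$ and the fact (Corollary~\ref{cor:levels}) that non-critical men remain at level $0$, via the minimality of $M$; with promotions of two or more levels across $(-1,-1)$ edges these edge counts are no longer controlled. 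Your substitute termination argument is also not sound as stated: a repetition in a promotion chain closes an alternating cycle, and toggling a cycle does not change the size of the matching, so it cannot produce a feasible matching \emph{strictly smaller} than $M$ and hence cannot contradict $M$ being a \mPFM\ in the way you claim. The overall architecture (assign levels, verify the conditions of Theorem~\ref{thm:mPFM}, read off $M'$ by matching lower copies to their last dummies and higher copies to their first dummies) does match the paper, but the promotion rules must include all three phases with the paper's target levels for the stability and boundedness claims to go through.
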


To show the surjectivity i.e. the fact that every \mPFM\ $M$ in $G$
has a stable matching $M'$ in $G'$ as its pre-image, we first assign levels to nodes in $G$ with respect to $M$. From the assignment of levels to nodes in $G$, the pre-image $M'$ is then immediate. The assignment of levels is described in Algorithm~\ref{LAmPFM}. In the pseudocode for Algorithm~\ref{LAmPFM}, we denote the level of a vertex $v$ by $level(v)$, and the matched partner of $v$ in $M$ as $M(v)$. The proof of Theorem~\ref{thm:mPFM-surjectivity} is immediate from the correctness of Algorithm~\ref{LAmPFM}, proved below. 

\begin{algorithm}[!h]
    \hspace*{\algorithmicindent}\textbf{Input:} A marriage instance $G$, set of critical nodes $C\subseteq A$, a \mPFM\ $M$ in $G$
    \\
    \hspace*{\algorithmicindent}\textbf{Output:} Assignment of levels to the vertices in $G$ based on the matching $M$. 
    \begin{algorithmic}[1]
    \State {Initially all the men and the women are assigned level $0$}
    \State flag = true
    \While{flag = true}
        \State check1 $= 0$, check2 $= 0$, check3 $= 0$
        \While{$\exists$ $m\in A,w\in B$ s.t. $level(m)=i$, $level(w)=j$, $j \leq i$, and $(m,w)$ is a $(+1,+1)$ edge}
        \State Set $level(w)=level(M(w))=i+1$ 
        \Comment Note that $w$ cannot be unmatched in $M$ because then $M\setminus (m,M(m))\cup (m,w)$ is more popular than $M$ and hence $M$ would not be a \PFM.
        \State check1 $= 1$
        \EndWhile
        \While{$\exists$ $m\in A,w\in B$, s.t. $level(m)=i$, $level(w)=j$, $j < i$ and $(m,w)$ is a $(+1,-1)$ or a $(-1,+1)$ edge}
        \State Set $level(w)=level(M(w))=i$ 
        \Comment Note that $w$ cannot be unmatched in $M$ because then $M$ would not be a \PFM.
        \State check2 $= 1$
        \EndWhile
        \While {$\exists m\in A, w\in B$ s.t. $level(m)=i$, $level(w)=j$, $j \leq (i-2)$ and $(m,w)$ is a $(-1,-1)$ edge}
        \State Set $level(w)=level(M(w))=i-1$ 
        \Comment Note that $w$ cannot be unmatched in $M$ because then $M$ would not be a PFM.
        \State check3 $= 1$
        \EndWhile
        \If {check1 $= 0$ and check2 $= 0$ and check3 $= 0$}
        \State flag = false
        \EndIf
    \EndWhile
    \end{algorithmic}\caption{Leveling Algorithm for \mPFM}\label{LAmPFM}
\end{algorithm}

In Algorithm~\ref{LAmPFM}, the Boolean variables check1, check2 and check3 are used to check whether the assignment of levels at any point violates one of the conditions of Theorem~\ref{thm:mPFM}.
If not, then we set flag to false and the algorithm terminates. 
In Theorem~\ref{thm:mPFM-term} below, we show that no level is empty. Since level of a vertex never reduces during the execution of Algorithm~\ref{LAmPFM}, it implies that the algorithm terminates.

\begin{theorem}\label{thm:mPFM-term}
For a man $m$ at level $i$ there exists $(i)$ either a woman $w$ at each level $j$, where $j < i$, or $(ii)$ an unmatched man $m_0$, if $j=0$ such that there is an alternating path from $w$ to $m$ or from $m_0$ to $m$ which consists of $(i-j)$ more $(+1,+1)$ edges than $(-1,-1)$ edges. 
\end{theorem}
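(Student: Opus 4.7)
The plan is to prove the statement by induction on the sequence of promotions executed by Algorithm~\ref{LAmPFM}, maintaining the (slightly strengthened) invariant that after every promotion, for every man $m$ at current level $i\ge 1$ and every $j$ with $0\le j<i$, there is an alternating path in $G$ from a witness vertex at current level $j$ (a woman, or an unmatched man if $j=0$) to $m$ with exactly $(i-j)$ more $(+1,+1)$ edges than $(-1,-1)$ edges, and ending at $m$ with the matched edge $(M(m),m)$. The extra stipulation that the path ends with the matched edge at $m$'s side is what lets the paths be concatenated cleanly during the inductive step. The base case is vacuous: before any promotion every man is at level $0$.

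For the inductive step, let the next promotion be triggered by the pair $(m^*,w^*)$ with $m^*$ at current level $i^*$, $w^*$ at current level $j^*\le i^*$, and set $\hat m := M(w^*)$. Let $L$ be the new common level of $w^*$ and $\hat m$, which is $i^*+1$ for the first while loop, $i^*$ for the second, and $i^*-1$ for the third. I would construct the required path $P_{\hat m,j}$ for each $j<L$ by extending a path supplied by the induction hypothesis applied to $m^*$. Under the first while loop, for $j=i^*$ I would use $M(m^*)\to m^*\to w^*\to \hat m$ (or $m^*\to w^*\to \hat m$ if $m^*$ is unmatched, which forces $i^*=0$), a path with one $(+1,+1)$ edge and no $(-1,-1)$ edge ending in a matched edge at $\hat m$. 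For $j<i^*$ I would append $(m^*,w^*,\hat m)$ to the inductive path $P_{m^*,j}$, which preserves alternation since $P_{m^*,j}$ ends with $(M(m^*),m^*)$ and the appended $(m^*,w^*)$ is non-matched, while picking up one extra $(+1,+1)$ edge. The second and third while loop cases are identical in structure, except that the appended edge $(m^*,w^*)$ is labelled $(+1,-1)/(-1,+1)$ or $(-1,-1)$, contributing $0$ or $-1$ respectively to the net $(+1,+1)$-minus-$(-1,-1)$ count, which is precisely what is needed to match the correspondingly smaller value of $L$.

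The hard part will be the ripple effect of the current promotion on previously analysed men: if some $m'\ne \hat m$ had its witness at level $j^*$ realised via $w^*$, then after the promotion $w^*$ has moved off level $j^*$, and $m'$'s invariant needs a fresh witness at level $j^*$. To handle this I would exploit the inductive invariant for $m^*$, which already supplies witnesses at every level below $i^*$, and re-route any invalidated witness path for $m'$ through $m^*$ and down $m^*$'s chain; a careful case analysis is needed to check that the re-routed path still alternates, still ends with the matched edge on $m'$'s side, and preserves the prescribed $(+1,+1)$ versus $(-1,-1)$ surplus. Verifying this bookkeeping uniformly across all three while loops, and in particular ruling out vertex-repetition pathologies in the re-routed paths, is the principal technical step of the argument and is what turns the plan into a complete proof.
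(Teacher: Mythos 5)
Your plan follows the same route as the paper's own proof: induct over the promotions performed by Algorithm~\ref{LAmPFM}, and for the newly promoted pair build the witness path by appending the triggering edge $(m^*,w^*)$ and the matched edge $(w^*,\hat m)$ to a path supplied by the inductive hypothesis for $m^*$, with the three inner while loops contributing $+1$, $0$ and $-1$ to the $(+1,+1)$-versus-$(-1,-1)$ surplus; that bookkeeping agrees with the paper. But what you have written is a plan rather than a proof: you yourself identify the two points on which the argument actually rests --- ruling out the concatenated walk revisiting a vertex (in particular closing into an alternating cycle), and re-establishing the invariant for men promoted earlier once a witness woman leaves her level --- and you explicitly leave both unresolved (``a careful case analysis is needed \dots is what turns the plan into a complete proof''). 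Deferring the principal technical step is a genuine gap, not a stylistic omission.

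The concrete missing ingredient is the popularity of $M$. The paper disposes of the degenerate case precisely by invoking it: if the witness $w_j$ furnished by the inductive hypothesis for $m^*$ coincided with the partner $w$ of the newly promoted man (equivalently, if the appended edge closed the path into a cycle), the resulting alternating cycle $\rho$ would have strictly more $(+1,+1)$ edges than $(-1,-1)$ edges, so $M\oplus\rho$ would be a feasible matching more popular than $M$, contradicting that $M$ is a \PFM. Your proposal never uses the hypothesis that $M$ is popular, and without it the ``vertex-repetition pathologies'' you flag cannot be dismissed --- the surplus count you are trying to propagate is exactly the kind of structure that only popularity forbids, so no purely mechanical analysis of the while loops will close this case. (Your ripple-effect worry, by contrast, is a subtlety the paper itself passes over: its induction only re-derives the statement for the men promoted in the current phase or iteration and carries the rest by hypothesis; raising the issue is fair, but raising it without resolving it does not complete the argument.) To turn your proposal into a proof you need to add the popularity-based contradiction at the points where the appended edge could create a cycle or an augmenting configuration, in the same way the paper does for Phases $2$ and $3$.
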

First we show that the image $M$ of a stable matching $M'$ in $G'$ is a matching in $G$.
We use the following observation:
\begin{observation}
Every dummy woman is matched in any stable matching of $G'$. This is because each dummy woman $d_m^j$ is the first choice of $m^j$. So if $d_m^j$ is unmatched in a matching $N'$ of $G'$, then $(d_m^j,m^j)$ forms a $(+1,+1)$ edge, contradicting the stability of $N'$.
\end{observation}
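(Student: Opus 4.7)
The plan is to argue by contradiction that no dummy woman can be unmatched in a stable matching of $G'$, by exhibiting an explicit blocking pair whenever some $d_m^j$ is left unmatched. The key observation from the construction of the preference lists is that for every $m \in C$ and every $j \in \{1, \ldots, \ell\}$, the level-$j$ copy $m^j$ has $d_m^j$ at the very top of its preference list: indeed, the lists read $d_m^i, \listm, d_m^{i+1}$ for intermediate levels $i \in \{1,\dots,\ell-1\}$, and $d_m^\ell, \listm$ for the top level. So $d_m^j$ is always ranked first by its partner $m^j$ from among the two neighbors $\{m^{j-1}, m^j\}$ that appear in its own preference list.

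Concretely, I would proceed as follows. First, fix any stable matching $N'$ of $G'$, and fix a dummy woman $d_m^j$ (with $m \in C$ and $j \in \{1, \ldots, \ell\}$). Suppose for contradiction that $d_m^j$ is unmatched in $N'$. Then $d_m^j$ strictly prefers being matched to $m^j$ over being unmatched, since $m^j$ appears in her preference list. Second, consider the status of $m^j$ in $N'$: either $m^j$ is unmatched or it is matched to some vertex $w \neq d_m^j$. In both cases, since $d_m^j$ is the first-ranked neighbor of $m^j$, the man $m^j$ strictly prefers $d_m^j$ to its situation in $N'$. Therefore $(m^j, d_m^j)$ is a blocking pair of $N'$, contradicting stability.

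There is no real obstacle: the argument is a direct consequence of the way the dummy women have been inserted at the top of each non-zero-level copy's list in the reduction. The only thing to double-check is the boundary case $j = \ell$, where $m^\ell$'s preference list is $d_m^\ell, \listm$ (with no further dummy woman at the end), so the claim that $d_m^\ell$ is the first choice of $m^\ell$ still holds. Once this is noted, the blocking-pair argument above goes through uniformly for all $j \in \{1, \ldots, \ell\}$, establishing the observation.
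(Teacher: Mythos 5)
Your proof is correct and follows exactly the same route as the paper: since $d_m^j$ heads the preference list of $m^j$ for every $j\in\{1,\dots,\ell\}$, an unmatched $d_m^j$ would form a blocking pair with $m^j$, contradicting stability. The extra check of the boundary case $j=\ell$ is fine but adds nothing beyond the paper's one-line argument.
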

\begin{lemma}\label{thm:mPFM-matching}
In any stable matching $M'$ in $G'$, at most one copy of any $m \in A$ gets matched to a non-dummy woman.  
\end{lemma}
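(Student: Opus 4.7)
The plan is to exploit the observation just stated: in any stable $M'$, every dummy woman is matched. For a non-critical man $m \in A \setminus C$ there is only the single copy $m^0$ in $A'$, so the lemma is trivial. Assume therefore that $m \in C$, with copies $A'_m = \{m^0, m^1, \ldots, m^{\ell}\}$ and associated dummy women $D_m = \{d_m^1, \ldots, d_m^{\ell}\}$.

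First I would note that each dummy $d_m^k$ has exactly two neighbors in $G'$, namely $m^{k-1}$ and $m^k$, so the observation forces $d_m^k$ to be matched to one of these two vertices in $M'$. Introduce the indicator $x_k := 1$ if $M'(d_m^k) = m^{k-1}$, and $x_k := 0$ if $M'(d_m^k) = m^k$, for $1 \le k \le \ell$.

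The key step is to argue that the sequence $(x_1, \ldots, x_{\ell})$ must consist of a (possibly empty) prefix of $1$'s followed by a (possibly empty) suffix of $0$'s. Indeed, if $x_k = 0$ then $m^k$ is already matched to $d_m^k$, so we cannot simultaneously have $d_m^{k+1}$ matched to $m^k$; hence $x_{k+1} = 0$. Let $t \in \{0, 1, \ldots, \ell\}$ denote the transition point, so that $x_k = 1$ for $k \le t$ and $x_k = 0$ for $k > t$.

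A short case analysis on $t$ then shows that exactly one copy of $m$ escapes being matched to a dummy in $M'$: for $1 \le k \le \ell - 1$, the copy $m^k$ receives a dummy partner (either $d_m^k$ via $x_k = 0$ or $d_m^{k+1}$ via $x_{k+1} = 1$) unless $k = t$, while $m^0$ is free from dummies precisely when $t = 0$ and $m^{\ell}$ precisely when $t = \ell$. Hence at most one copy of $m$ — the unique free copy — can be matched to a non-dummy woman, establishing the lemma. There is no serious obstacle once the observation about dummies is in hand; the main work lies in extracting the monotone pattern of $(x_k)$ from the matching-capacity constraint on each $m^k$.
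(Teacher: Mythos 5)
Your proof is correct and rests on exactly the same ingredients as the paper's: every dummy woman is matched in a stable matching, each $d_m^k$ has only the two neighbors $m^{k-1},m^k$, and the matching constraint propagates along the chain of copies (your monotone $0$--$1$ sequence is precisely the paper's upward induction showing $d_m^j$ must be matched to $m^j$ for all $j$ above the dummy-free copy). The threshold reformulation is a harmless repackaging, so this is essentially the paper's argument.
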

\begin{proof}
Suppose $m^i$ be the copy of the man $m \in A$ which gets matched to a non-dummy woman. Then by the observation above, and by the fact that a dummy woman $d_m^j$ has only $m^{j-1}$ and $m^j$ in her preference list, $d_m^{i+1}$ must be matched to $m^{i+1}$, and inductively, each $d_m^j$, $j>i$ must be matched to $m^j$. 
\end{proof}

Lemma \ref{thm:mPFM-matching} shows that $M$ is a matching in $G$. 

\begin{proof}[Proof of Theorem~\ref{thm:mPFM-term}]
We prove the statement using induction on the number of iterations of the outer while loop (line $3$). 
We refer to the three inner while loops i.e. Steps $5$ to $7$, Steps $8$ to $10$, and Steps $11$ to $13$ as Phase 1, Phase $2$, and Phase 3 of
an iteration of the outer while loop respectively. In the remainder of the proof, iteration always refers to an iteration of the outer while loop, unless stated otherwise.

{\em Base case: }
\begin{enumerate}
\item {\em The statement holds after Phase 1 of the $1$st iteration:}

In the Phase 1 of the first iteration, a man $m_i$ is assigned level $i$ only when its matched partner $M(m_i) = w_i$ has a $(+1,+1)$ edge to a man $m_{i-1}$ at level $(i-1)$. Again $m_{i-1}$ is at level $(i-1)$ after Phase 1 of the $1$st iteration because his matched partner $M(m_{i-1}) = w_{i-1}$ has a $(+1,+1)$ edge to a man $m_{i-2}$ at level $(i-2)$. Continuing this way, we get an alternating path from $m_i$ either to a woman $w_j$ at level $j$, where $j < i$, 
which has $(i-j)$ more $(+1,+1)$ edges than $(-1,-1)$ edges or to an unmatched man $m_0$ at level $0$ which has $i$ more $(+1,+1)$ than $(-1,-1)$ edges. Hence, the statement holds after Phase 1 of the $1$st iteration. 

\item {\em The statement holds after Phase $2$ of the $1$st iteration}: A man $m$ gets promoted to a level $i$ in Phase $2$ from a level $j$ where $j < i$ because his matched partner $M(m) = w$ has a $(+1,-1)$ or a $(-1,+1)$ edge to a man $m_i$ at level $i$. Let $m$ be the first man among all the men who got promoted in Phase $2$. So, $m_i$ got promoted to level $i$ in Phase 1 and thus it has an alternating path from a woman $w_j$ at level $j$, where $j < i$, or to a unmatched man $m_0$ at level $j = 0$.  

Note that $w_j \neq w$ because, in that case, the alternating path from $w_j$ to $m_i$ concatenated with the $(+1,-1)$ or $(-1,+1)$ edge $(m_i,w)$ forms an alternating cycle $\rho$ with $(i-j)$ more $(+1,+1)$ edges than $(-1,-1)$ edges, and thus $M \oplus \rho$ would become a more popular feasible matching than $M$. Now, this alternating path from $w_j$ to $m_i$ or from $m_0$ to $m_i$ concatenated with the path $(m_i,w,m)$ forms an alternating path from $w_j$ to $m$ or from $m_0$ to $m$ which has $(i-j)$ more $(+1,+1)$ edges than $(-1,-1)$ edges. Hence, there exists an alternating path from $w_j$ or from $m_0$ at level $j$ or at level $0$ respectively to the man $m$ at level $i$ with $(i-j)$ more $(+1,+1)$ edges than $(-1,-1)$ edges. Now, if $m$ is not the first man to get promoted to level $i$ during Phase $2$ then it might happen that $m$ gets promoted to level $i$ because his matched partner $M(m) = w$ has a $(+1,-1)$ edge or a $(-1,+1)$ to a man $m'$ at level $i$ who got promoted to level $i$ before $m$ during Phase $2$. In this case too there is an alternating path from a woman $w_j$ at level $j$ or from unmatched man $m_0$ at level $0$ to $m'$ which has $(i-j)$ more $(+1,+1)$ edges than $(-1,-1)$ edges (note here also $w_j \neq w$ due to the same reason). This alternating path concatenated with the path $(m',w,m)$ will give an alternating path from a woman $w_j$ at level $j$ or from $m_0$ at level $j = 0$ to $m$ which has $(i-j)$ more $(+1,+1)$ edges than $(-1,-1)$ edges. Hence, there exists an alternating path from $w_j$ at level $j$ or from $m_0$ at level $0$ to the man $m$ at level $i$ with $(i-j)$ more $(+1,+1)$ edges than $(-1,-1)$ edges. Hence, S is true after the Phase $2$ of the 1st iteration.

\item {\em The statement is true after Phase 3 of the 1st iteration: } A man $m$ gets promoted to level $i$ in Phase 3 from level $j$ where $j < i$ because his matched partner $M(m) = w$ has a $(-1,-1)$ edge to a man $m_{i+1}$ at level $(i+1)$. Let $m$ be the first man among all the men who got promotions in Phase 3. So, $m_{i+1}$ got promoted to level $i$ either in Phase 1 or in Phase $2$ and thus it has an alternating path from a woman $w_j$ at level $j$ where $j < i$ or from an unmatched man $m_0$ at level $0$ which has $i+1-j$ more $(+1,+1)$ edges than $(-1,-1)$ edges. Note that $w_j \neq w$ because in that case the alternating path from $w_j$ to $m_{i+1}$ concatenated with the $(-1,-1)$ edge $(m_{i+1},w)$ forms an alternating cycle $\rho$ with $(i-j)$ more $(+1,+1)$ edges than $(-1,-1)$ edges and thus $M \oplus \rho$ would become a more popular feasible matching than $M$. Now, this alternating path from $w_j$ to $m_{i+1}$ or from $m_0$ to $m_{i+1}$ concatenated with the path $(m_{i+1},w,m)$ forms an alternating path from $w_j$ to $m$ or from $m_0$ to $m$ which has $(i-j)$ more $(+1,+1)$ edges than $(-1,-1)$ edges because the edge $(m_{i+1},w)$ is a $(-1,-1)$ edge. Hence, there exists an alternating path from $w_j$ at level $j$ or from an unmatched man $m_0$ at level $j = 0$ to the man $m$ at level $i$ with $(i-j)$ more $(+1,+1)$ edges than $(-1,-1)$ edges. Now, if $m$ is not the first man to get promoted to level $i$ during Phase 3 then it might happen that $m$ gets promoted to level $i$ because his matched partner $M(m) = w$ has a $(-1,-1)$ edge to a man $m'$ at level $(i+1)$ who got promoted to level $(i+1)$ before $m$ during Phase 3. In this case too there is an alternating path from a woman $w_j$ at level $j, j < i$ or from an unmatched man $m_0$ at level $0$ to $m'$ which has $(i+1-j)$ more $(+1,+1)$ edges than $(-1,-1)$ edges (note that $(w_j \neq w)$ due to same reason) . This alternating path concatenated with the path $(m',w,m)$ will give an alternating path from a woman $w_j$ at level $j,j < i$ or from an unmatched man $m_0$ at level $j = 0$ to $m$ which has $(i-j)$ more $(+1,+1)$ edges than $(-1,-1)$ edges because the edge $(m',w)$ is a $(-1,-1)$ edge. Hence, there exists an alternating path from $w_j$ at level $j$ to the man $m$ at level $i$ with $(i-j)$ more $(+1,+1)$ edges than $(-1,-1)$ edges. Hence, S is true after the Phase 3 of the 1st iteration.
\end{enumerate}
Now, since the statement holds after all the  three Phases of the first iteration, it holds at the end of the first iteration.

{\em Inductive step: }Suppose the theorem statement is true after the $j^{th}$ iteration for all $j \leq k$. We prove below 
that it holds after the $(k+1)^{th}$ iteration.

{\em The statement holds after Phase 1 of the $(k+1)^{th}$ iteration}: Now, in the phase 1 of the $(k+1)^{th}$ iteration, a man $m_i$ is assigned level $i$ only when its matched partner $M(m_i) = w_i$ has a $(+1,+1)$ edge to a man $m^{i-1}$ at level $(i-1)$. Now, suppose $m_i$ be the first man who gets a promotion during Phase 1 of the $(k+1)^{th}$ iteration. So $m^{i-1}$ gets promoted to level $(i-1)$ in the previous iterations. Hence, due to inductive hypothesis we get that there exists a woman $w_j$ at level $j$ where $j<i$ or from an unmatched man $m_0$ at level $j = 0$ such that there is an alternating path from $w_j$ or from $m_0$ to $m^{i-1}$ with $(i-1-j)$ more $(+1,+1)$ edges than $(-1,-1)$ edges. Now, concatenating this path with $(m^{i-1},w_i,m_i)$ we get that there is an alternating path either from $w_j$ or from $m_0$ to $m_i$ with $(i-j)$ more $(+1,+1)$ edges than $(-1,-1)$ edges. Now, if $m_i$ is not the first man who gets a promotion during Phase 1 of the $(k+1)^{th}$ iteration then it might happen that $m^{i-1}$ gets promoted to level $(i-1)$ before $m_i$ during Phase 1 of $(k+1)^{th}$ iteration. In this case also  there exists a woman $w_j$ at level $j$ where $j<i$ or an unmatched man $m_0$ at level $j = 0$ such that there is an alternating path either from $w_j$ or from $m_0$ to $m^{i-1}$ with $(i-1-j)$ more $(+1,+1)$ edges than $(-1,-1)$ edges. Now, concatenating this path with $(m^{i-1},w_i,m_i)$ we get that there is an alternating path either from $w_j$ or from $m_0$ to $m_i$ with $(i-j)$ more $(+1,+1)$ edges than $(-1,-1)$ edges. Hence, S is true after Phase 1 of the $(k+1)^{th}$ iteration.

For the remaining two phases, the proof is similar to the respective proofs of the 1st iteration. 
This completes the proof of the theorem.
\end{proof}

\begin{theorem}\label{thm:mPFM-stability}
The matching $M'$ is a stable matching in $G'$.
\end{theorem}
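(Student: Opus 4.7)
The plan is to explicitly describe the matching $M'$ that corresponds to the levels assigned by Algorithm~\ref{LAmPFM}, and then show that the termination conditions of the algorithm (namely check1 $=$ check2 $=$ check3 $=0$, which encode exactly the three conditions of Theorem~\ref{thm:mPFM}) preclude any blocking pair in $G'$. Concretely, if $m\in A$ is assigned level $i$, then I set $M'(m^i)=M(m)$ when $m$ is matched in $M$, and $m^i$ is left unmatched otherwise; for each critical $m$ with level $i$, I additionally set $M'(m^j)=d_m^j$ for all $j>i$ and $M'(m^j)=d_m^{j+1}$ for all $j<i$. This exhausts all dummies: $d_m^b$ is matched to $m^{b-1}$ if $b\leq i$ and to $m^b$ if $b>i$.

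Next I would rule out blocking pairs involving dummy women. Since each $d_m^b$ has only $\{m^{b-1},m^b\}$ in her list, and her strictly preferred partner $m^{b-1}$ is already paired with her whenever $b\leq i$, the only candidate blocking pair is $(m^{i},d_m^{i+1})$ when $b=i+1$. But $d_m^{i+1}$ is the last choice of $m^i$ in its preference list, and $m^i$ is matched to $M(m)$ (which exists because $m$ is critical when $\ell\geq 1$ and $M$ is feasible, so Condition~$4$ of Theorem~\ref{thm:mPFM} rules out the degenerate unmatched case). Hence no dummy-based blocking pair survives.

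For a potential blocking pair $(m^a,w)$ with $w\in B$ a real woman at level $j$, I would split into three sub-cases on $a$ versus $i := level(m)$. If $a>i$, the top choice of $m^a$ is the dummy $d_m^a=M'(m^a)$, so $m^a$ does not prefer $w$ to its partner and the pair cannot block. If $a=i$, then $M'(m^a)=M(m)$, and I examine the label of $(m,w)$ in $M$: a $(+1,+1)$ label forces $j>i$ by check1 $=0$, in which case $w$ strictly prefers her level-$j$ partner $(M(w))^j$ to $m^i$ because the preference list of $w$ in $G'$ lists level-$j$ copies above level-$i$ copies; $(+1,-1)$ and $(-1,+1)$ force $j\geq i$ by check2 $=0$, and a similar argument on $w$'s preference in $G'$ kills the blocking; $(-1,-1)$ means $m^i$ prefers $M(m)$ over $w$ already. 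The case $a<i$ is the most delicate: here $M'(m^a)=d_m^{a+1}$, which is the last real-or-dummy entry in $m^a$'s list, so $m^a$ strictly prefers every real $w$; the blocking condition then reduces to whether $w$ prefers $m^a$ to $(M(w))^j$, which happens iff $a>j$, or $a=j$ with $w$ preferring $m$ to $M(w)$ in $G$. Each of these sub-sub-cases violates exactly one of the checks (check1 when the $G$-label is $(+1,+1)$, check2 when it is $(+1,-1)$ or $(-1,+1)$, and check3 when it is $(-1,-1)$ and $j\leq a-1\leq i-2$).

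The main obstacle will be the $a<i$ branch: I need to cleanly extract, from each possible pair of labels on $(m,w)$ in $M$, the precise inequality relating the levels $a,j,i$ and then point to the exact check in the algorithm that would have re-promoted $w$ or $M(w)$. Once the casework is tabulated, the conclusion is immediate — no edge of $G'$ is blocking, so $M'$ is stable.
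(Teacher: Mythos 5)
Your proposal is correct and follows essentially the same route as the paper's proof: you use the same pre-image $M'$ (the level-$i$ copy of $m$ matched to $M(m)$, all other copies matched to dummies), and you rule out blocking pairs by splitting on dummy versus non-dummy women while invoking exactly the invariants guaranteed at termination of Algorithm~\ref{LAmPFM} (check1 $=$ check2 $=$ check3 $= 0$), which is what the paper refers to as ``by the execution of Algorithm~\ref{LAmPFM}''. Your finer case split on the copy index $a$ versus $level(m)$ (in particular the $a<i$ branch, where $m^a$ is matched to its last-choice dummy) only makes explicit what the paper handles implicitly, so the two arguments are substantively the same.
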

\begin{proof}
Suppose $M'$ is not a stable matching. Then there exists a pair $(a,b)$ such that $(a,b)$ is a $(+1,+1)$ edge where $a \in A'$ and $b \in B'$. 

{\em Case 1: The woman $b$ is a dummy woman:} Let $b = d_m^i$. Now, the only two neighbors of $d_m^i$ are $m^i$ and $m^{i-1}$. If $M'(d_m^i)=m^{i-1}$ then 
$(m^i,d_m^i)$ cannot be a $(+1,+1)$ edge because $d_m^i$ prefers $m^{i-1}$ the most. If $M'(d_m^i)=m^i$ then $(m^{i-1},d_m^i)$ cannot be a $(+1,+1)$ edge because $m^{i-1}$ prefers $d_m^i$ the least and 
by our mapping of $M$ to $M'$, $M'(m^{i-1})$ is either $d_m^{i+1}$ or a non-dummy woman. So a dummy woman cannot participate in a blocking pair with respect to $M'$. 

{\em Case $2$: The woman $b$ is a non-dummy woman:} Let $M'(b)=m^i$ 
By the execution of Algorithm~\ref{LAmPFM}, $b$ does not
have an edge to a man at level $i+2$ or higher. So the man $a$ cannot be at level $i+2$ or higher. Moreover, $a$ cannot be at a level $j<i$ since, in
$G'$, $b$ prefers any man at level $i$ over any man at a level $j<i$. If $a$ is at level $i$, then the edge $(a,b)$ is also a $(+1,+1)$ edge with respect to $M$
in $G$. But then Algorithm~\ref{LAmPFM} would have moved $b$ to a higher level. So $a$ cannot be at level $i$. If $a$ is at level $i+1$, then the 
edge $(a,b)$ is a $(-1,-1)$ edge in $G$ by the execution of Algorithm~\ref{LAmPFM}. Since $a$ has the same preference list except possibly for the
addition of dummy women, $a$ does not prefer $b$ over $M(a)$ and hence over $M'(a)$. So $(a,b)$ cannot be a blocking pair with respect to $M'$.

Since no woman can participate in a blocking pair with respect to $M'$, stability of $M'$ follows.
\end{proof}


The following corollary is a straight forward consequence of Theorem~\ref{thm:mPFM-term} and the fact that $M$ is a \mPFM.

\begin{corollary}\label{cor:levels}
All non-critical men are assigned level zero and the critical men are assigned level less than or equal to $|C|$ by Algorithm~\ref{LAmPFM}.
\end{corollary}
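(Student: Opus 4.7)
The plan is to prove the two assertions of the corollary separately by contradiction, in each case leveraging Theorem~\ref{thm:mPFM-term} together with the fact that $M$ is a minimum-size popular feasible matching. The claim about critical men reduces cleanly to the claim about non-critical men, so the substance of the proof lies in the latter.

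For the non-critical claim, suppose toward contradiction that some non-critical man $m$ receives level $i \geq 1$. Applying Theorem~\ref{thm:mPFM-term} with $j=0$ yields an alternating path $\rho$ from either (a) an unmatched man $m_0$ at level $0$, or (b) a woman $w_0$ at level $0$, ending at $m$, and containing exactly $i$ more $(+1,+1)$ edges than $(-1,-1)$ edges. I would let $N = M \oplus \rho$ and first verify feasibility of $N$: $m$ is non-critical by assumption, in case (a) the unmatched endpoint $m_0$ is necessarily non-critical because $M$ is feasible, and every interior vertex of $\rho$ merely swaps partners, so no critical man loses its match. The next step is a vote count along $\rho$, using that a non-$M$ edge labeled $(+1,+1)$ contributes $+2$ to $\phi(N,M)-\phi(M,N)$ from its two endpoint labels, a $(-1,-1)$ edge contributes $-2$, mixed-label edges contribute $0$, plus the endpoint corrections for vertices whose matched status flips. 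This gives $\phi(N,M)-\phi(M,N) = 2i-1$ with $|N|=|M|$ in case (a), and $\phi(N,M)-\phi(M,N) = 2i-2$ with $|N|=|M|-1$ in case (b). Case (a), and case (b) with $i\geq 2$, each produces a feasible matching strictly more popular than $M$, contradicting the popularity of $M$. The tied subcase of (b) with $i=1$ must be handled using the minimality of $M$: since $|N|<|M|$, it suffices to argue that $N$ is itself popular among feasible matchings, which I plan to do by relating any hypothetical feasible $X$ that beats $N$ to one beating $M$ through the identity $\phi(X,N)-\phi(N,X) = \phi(X,M)-\phi(M,X) + \sum_{v \in \rho}\bigl(\mathrm{vote}_v(X,N)-\mathrm{vote}_v(X,M)\bigr)$ and bounding the correction term using the specific label structure of the length-three path $\rho$.

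For the critical claim, suppose some critical man $m$ is at level $i \geq |C|+1$. Theorem~\ref{thm:mPFM-term} supplies, for each $j \in \{1,\ldots,i-1\}$, a woman $w_j$ at level $j$; because unmatched women are placed at level zero by convention, each such $w_j$ is matched, and since Algorithm~\ref{LAmPFM} always promotes a woman in tandem with her matched partner, $M(w_j)$ is a man at the same level $j$. Together with $m$ at level $i$, this exhibits $i$ distinct men at levels $\geq 1$; by the first claim each one is critical, forcing $|C| \geq i \geq |C|+1$, a contradiction. The main obstacle is the tied $i=1$ subcase of the first claim: unlike the strictly-positive cases, popularity of $M$ alone does not yield the contradiction, and one must appeal to the minimum-size property of $M$ via the correction-term analysis outlined above.
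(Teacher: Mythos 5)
Your overall route is the same as the paper's: invoke Theorem~\ref{thm:mPFM-term} to get an alternating path $\rho$ ending at the offending man, flip it to get $N=M\oplus\rho$, and contradict either popularity or minimality; and for the bound on critical men, count the nonempty levels below him (the paper uses exactly this counting in the analogous appendix statement for \DFM s). Your vote counts in the two strict cases are correct: an unmatched-man endpoint gives $\phi(N,M)-\phi(M,N)=2i-1>0$, and a matched level-$0$ woman endpoint with $i\geq 2$ gives $2i-2>0$, both contradicting that $M$ is a \PFM\ (feasibility of $N$ is as you say, since $C\subseteq A$ and the vertices that lose partners are a non-critical man and a woman). The reduction of the critical-men bound to the first claim is also sound.

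The genuine gap is the tied subcase you yourself flag: a non-critical man at level $1$ whose path from Theorem~\ref{thm:mPFM-term} starts at a matched level-$0$ woman, where $|N|=|M|-1$ and $N$ ties with $M$. To contradict minimality you must show that $N$ is itself popular among feasible matchings, and your proposal only sketches this; moreover the sketch rests on ``the specific label structure of the length-three path $\rho$,'' which is unjustified: Theorem~\ref{thm:mPFM-term} guarantees only the surplus of $(+1,+1)$ over $(-1,-1)$ edges, and a level-$1$ man may have been promoted through Phases $2$ and $3$ over several iterations, so $\rho$ can be arbitrarily long and can contain $(+1,-1)$ and $(-1,+1)$ edges. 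Your identity $\phi(X,N)-\phi(N,X)=\phi(X,M)-\phi(M,X)+\sum_{v\in\rho}\bigl(\mathrm{vote}_v(X,N)-\mathrm{vote}_v(X,M)\bigr)$ is valid, but the correction term is not bounded by anything you have established, so the argument is not closed. For comparison, the paper's own proof is terse at exactly this point (it asserts that ``it is easy to see that $N$ is also a \PFM''), so you have correctly located the crux; a complete treatment would, for instance, exhibit a level assignment for $N$ satisfying the four conditions of Theorem~\ref{thm:mPFM} (in the spirit of the paper's Transformation~$1$ analysis), rather than relying on an assumed short path.
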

\begin{proof}
Suppose there is a non-critical man $m_i$ at level $i, i > 0$. Now, from Theorem \ref{thm:mPFM-term}, there is an alternating path, say $\rho$, from a woman $w_0$ at level $0$ or from an unmatched man $m_0$ to $m_i$ which has $i$ more $(+1,+1)$ edges than $(-1,-1)$ edges. Let $N = M \oplus \rho$. Observe that Algorithm~\ref{LAmPFM} assigns level $0$ to all the unmatched men in $M$, so $m_i$ is matched. Now, it is easy to see that $N$ is also a \PFM. But $|N|<|M|$, so this contradicts the assumption that $M$ is a \mPFM.
After assigning levels to the vertices in $G$, the pre-image of $M$ i.e. a stable matching $M'$ in $G'$ is constructed as follows. If a man $m$ in $G$ gets assigned level $i$ then $M'(m_i) = M(m)$. If $m$ is unmatched in $M$, then $m\notin C$ by feasibility of $M$, and $m$ gets level $0$ by Corollary~\ref{cor:levels}. Then we leave $m$ unmatched in $M'$ as well. For $j < i$, $m^j$ gets matched to the dummy woman $d_m^{j+1}$ and for $j > i$, $m^j$ gets matched to the dummy woman $d_m^j$. 
\end{proof}

The reduction and proofs for \DFM\ are similar, and are given in Appendix for the sake of completeness. 

\section{The Popular Edge Algorithm}
\label{sec:main-thm}
Now we are ready to prove Theorem~\ref{thm:main}, from which, the popular edge algorithm is as follows.
For a given edge $e=(m,w)$, we construct $G',G''$ using reductions from Section~\ref{sec:algo-mPFM-DFM} and check if any of the edges $(m^i,w)$ in $G'$ or $G''$ belong to a stable matching in the respective instance using Knuth's algorithm for stable edges \cite{Knuth76}.

If there is a \mPFM\ or a \DFM\ containing $e$ then there is nothing to prove. So we need to prove the theorem for an edge $e$ that is contained
in a \PFM\ $M$ that is neither a \mPFM\ nor a \DFM, and show that there is also a \mPFM\ or a \DFM\ containing $e$.
The proof of Theorem~\ref{thm:main} involves the following two results:
\begin{theorem}\label{thm:partition}
If $M$ is neither a \mPFM\ or a \DFM, then $A\cup B$ can be partitioned into three parts $A_d\cup B_d$, $A_m\cup B_m$ and $A_r\cup B_r$ such that no edge of $M$ is present in $A_i \times B_j$, $i \neq j,$  where $i,j \in \{d, m, r\}$.
\end{theorem}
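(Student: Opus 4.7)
The plan is to exploit the failure of $M$ to be a \mPFM\ or a \DFM\ by introducing two witness matchings, whose symmetric differences with $M$ naturally carve out the three parts.

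First, I fix a \mPFM\ $M^\star_m$ and a \DFM\ $M^\star_d$; both exist since $G$ admits a feasible matching. Because $M$ is a \PFM\ but not a \mPFM, $|M^\star_m|<|M|$. Because $M$ is a \PFM\ but not a \DFM, there is some feasible matching of size greater than $|M|$ that ties $M$ in the head-to-head vote (popularity of $M$ rules out the losing direction); such a matching is itself a \PFM, so using the fact that a \DFM\ is a maximum-size \PFM\ (which is either immediate from the definitions or recovered through Theorem~\ref{thm:surjective}), I can guarantee $|M^\star_d|>|M|$. Since $M$, $M^\star_m$ and $M^\star_d$ are all popular among feasible matchings, the vote counts are tied pairwise, so in each of the graphs $M\oplus M^\star_m$ and $M\oplus M^\star_d$ the $(+1,+1)$ and $(-1,-1)$ edges balance exactly.

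Next I decompose both symmetric differences into alternating paths and cycles. Since $|M|>|M^\star_m|$, some components of $M\oplus M^\star_m$ are $M$-augmenting paths (endpoints matched in $M$ and unmatched in $M^\star_m$); let $V_d$ be the union of their vertex sets and set $A_d\cup B_d:=(A\cup B)\cap V_d$. Symmetrically, $|M^\star_d|>|M|$ produces $M^\star_d$-augmenting paths in $M\oplus M^\star_d$; let $V_m$ be their vertex set and set $A_m\cup B_m:=V_m\setminus V_d$. Finally let $A_r\cup B_r:=(A\cup B)\setminus(V_m\cup V_d)$, and take $M_d,M_m,M_r$ to be the restrictions of $M$ to the three parts.

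The non-crossing property is then checked edge by edge. For $(a,b)\in M$, either $(a,b)\in M^\star_m$, in which case $a$ and $b$ are both isolated in $M\oplus M^\star_m$ and therefore both lie outside $V_d$; or $(a,b)\notin M^\star_m$, in which case $a$ and $b$ lie on a common component of $M\oplus M^\star_m$, so either both are on an $M$-augmenting component (both in $V_d$) or neither is. Either way $a\in V_d\Leftrightarrow b\in V_d$. The analogous dichotomy applied to $M\oplus M^\star_d$ yields $a\in V_m\Leftrightarrow b\in V_m$. Hence $a$ and $b$ are assigned to the same one of the three parts, and no $M$-edge crosses parts.

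The principal obstacle I expect is pinning down $|M^\star_d|>|M|$: while very natural, this step hinges on the characterization of dominant feasible matchings as the maximum-size popular feasible matchings in the presence of critical nodes, which must either be derived directly from the definition of \DFM\ or extracted from the reductions of Section~\ref{sec:algo-mPFM-DFM}. Once this size comparison is in place, the rest of the argument reduces to the structural bookkeeping above, and the partition is essentially forced.
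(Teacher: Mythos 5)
Your construction does establish the statement as literally worded: taking $V_d$ to be the vertices of the size-reducing components of $M\oplus M^\star_m$ and $V_m$ the vertices of the augmenting components of $M\oplus M^\star_d$, the dichotomy ``a matched pair is either isolated in the symmetric difference or lies on a common component'' correctly shows that no edge of $M$ crosses the parts. This is, however, a genuinely different route from the paper, which assigns levels to vertices via Algorithm~\ref{LAmPFM}, proves that an \SRAP\ and an \SIAP\ exist (Theorem~\ref{thm:SIAP-SRAP}) and are vertex-disjoint (Theorem~\ref{thm:SIAP-SRAP-disjoint}), and then closes the $d$- and $m$-parts under certain $(+1,+1)$ edges (steps (e) and (f) of the partition method). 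The reason the paper works harder is that the bare statement you proved is very weak -- even the trivial partition with $A_d\cup B_d=A\cup B$ satisfies it -- and the actual content of Theorem~\ref{thm:partition} is the \emph{specific} partition whose structural properties (Lemmas~\ref{1c}, \ref{3l}, \ref{4l}, \ref{5l}: every \SRAP\ lies in the $d$-part, every \SIAP\ in the $m$-part, and controlled level relations across parts) are what make the transformations of Theorem~\ref{thm:convert} produce a \mPFM\ and a \DFM. Your partition, built from one particular \mPFM\ and one particular witness matching and without the closure steps or levels, satisfies the non-crossing condition but would not support Section~\ref{sec:convert}.

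One concrete soft spot, which you flag yourself: the inequality $|M^\star_d|>|M|$. Your proposed justification -- that the larger tying matching is itself a \PFM\ and that \DFM s are exactly the maximum-size \PFM s -- does not hold: the tying matching need not be popular, and the paper itself notes that a maximum-size popular matching need not be dominant, so ``$M$ is a \PFM\ but not a \DFM'' does not by itself force $|M|<|M^\star_d|$. (The paper's own proof of Theorem~\ref{thm:SIAP-SRAP} asserts $|M|<|M_d|$ with the same lack of justification.) The clean repair is to drop the \DFM\ altogether: non-dominance gives a \emph{feasible} matching $N$ with $|N|>|M|$ that ties with $M$, and your non-crossing argument never uses dominance of $M^\star_d$ -- only that it is a matching -- so running it with $N$ in place of $M^\star_d$ closes the gap and even guarantees the augmenting components (and hence a nonempty $m$-part) exist. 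With that substitution your argument is sound for the statement as written, but, as noted above, it proves a weaker fact than the one the paper actually needs downstream.
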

We prove Theorem~\ref{thm:partition} in Section~\ref{sec:part}.
Because of Theorem~\ref{thm:partition}, it follows that the partition of $A\cup B$ also induces a partition of $M$ into three parts, say $M_d,M_m,M_r$ respectively. The following theorem shows that either $M_d$ or $M_m$ can be transformed into another matching so that
the resulting matching is a \mPFM\ or a \DFM\ in $G$. 
\begin{theorem}\label{thm:convert}
There exist algorithms to transform:
\begin{enumerate}
\item the matching $M_d$ to another matching $M'_m$ in $A_d\cup B_d$ such that $M^*_m=M'_m\dot{\cup} M_m\dot{\cup} M_r$ is a \mPFM\ in $G$ 
\item the matching $M_m$ to another matching $M'_d$ in $A_m\cup B_m$ such that $M^*_d=M_d\dot{\cup} M'_d\dot{\cup} M_r$ is a \DFM\ in $G$.
\end{enumerate}
\end{theorem}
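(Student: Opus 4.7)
The plan is to handle the two parts in parallel, via the reductions of Section~\ref{sec:algo-mPFM-DFM}. For part~(1), I form the induced subgraph $G_d = G[A_d \cup B_d]$ with critical set $C_d = C \cap A_d$, apply the reduction of Section~\ref{sec:mPFM-algo} to obtain the auxiliary instance $G'_d$, compute any stable matching in $G'_d$ by Gale--Shapley, and let $M'_m$ be its image in $G_d$; Theorem~\ref{thm:mPFM} then guarantees that $M'_m$ is a \mPFM\ of $G_d$. Define $M^*_m = M'_m \dot{\cup} M_m \dot{\cup} M_r$. For part~(2) I run the analogous \DFM\ reduction on $G_m = G[A_m \cup B_m]$ with critical set $C_m = C \cap A_m$ to get $M'_d$, and set $M^*_d = M_d \dot{\cup} M'_d \dot{\cup} M_r$. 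Because Theorem~\ref{thm:partition} guarantees that no edge of $M$ crosses the three parts, both unions are well-defined matchings.

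Feasibility is immediate: critical men in $A_d$ are matched by $M'_m$ since any \mPFM\ on $G_d$ must cover $C_d$, and critical men in $A_m \cup A_r$ are matched by $M_m \cup M_r$ as an inheritance from the feasibility of $M$; the symmetric argument works for $M^*_d$. For popularity of $M^*_m$, I take any feasible matching $N$ and decompose $M^*_m \oplus N$ into alternating paths and cycles. Components lying inside a single part are charged locally, using popularity of $M'_m$ in $G_d$ for the $d$-part and popularity of $M$ in $G$ for the restrictions to the other two parts. The real work is on components that traverse the partition; these crossing edges belong only to $N$, and I must argue that they cannot produce a net surplus of $(+1,+1)$ over $(-1,-1)$. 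The argument rests on structural properties of the partition produced by Theorem~\ref{thm:partition}, which should be strong enough that a crossing edge carries a label with respect to $M^*_m$ consistent with the label it carried with respect to $M$ (essentially because for a vertex $v$ on the cut, the switch from $M(v)$ to $M^*_m(v)$ does not invert its preference comparison with any neighbour on the other side of the cut).

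For the minimum-size claim I argue by contradiction. If a popular feasible matching $N$ satisfies $|N| < |M^*_m|$, then since $|M'_m|$ is already minimum among popular feasible matchings of $G_d$, the shortfall must occur outside $A_d \cup B_d$; gluing the restriction of $N$ to the remaining parts with $M_d$ would then produce a popular feasible matching of $G$ strictly smaller than $M$, contradicting a minimality property guaranteed by the partition. Part~(2) is symmetric: any feasible $N$ in $G$ with $|N| > |M^*_d|$ would project to a feasible matching in $G_m$ strictly larger than $M'_d$, and dominance of $M'_d$ in $G_m$ precludes that this projection wins the head-to-head vote against $M'_d$, contradicting the assumption that $N$ beats $M^*_d$. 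The principal obstacle will be the cross-partition case analysis for popularity: swapping $M_d$ for $M'_m$ changes the matched partner of some vertices in $A_d \cup B_d$, which can alter the labels of edges crossing into $A_m \cup B_m$ or $A_r \cup B_r$. The key will be to exploit the partition conditions of Theorem~\ref{thm:partition} to verify that the new labels still satisfy the sufficient conditions of Theorem~\ref{thm:mPFM} (for $M^*_m$) and the analogous criterion for \DFM\ (for $M^*_d$) on the whole instance.
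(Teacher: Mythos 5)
There is a genuine gap, and it sits exactly where you defer the work. Your plan is to take an \emph{arbitrary} \mPFM\ of the induced subgraph $G[A_d\cup B_d]$ (image of any stable matching of the local reduced instance) and glue it to $M_m\dot{\cup}M_r$. But popularity of the union in all of $G$ is controlled by the cross-partition edges, and your claim that a crossing edge ``carries a label with respect to $M^*_m$ consistent with the label it carried with respect to $M$'' is unsupported and in general false: a woman in $B_d$ may receive a completely different partner under an arbitrary local \mPFM, and her preference comparison with neighbours in $A_m\cup A_r$ can flip. The paper does not use local optimality at all; it builds $M'_m$ (resp.\ $M'_d$) \emph{from $M_d$ (resp.\ $M_m$) itself} by a level-decrementing (resp.\ level-incrementing) proposal procedure, proves that every vertex's level moves by at most one (Lemma~\ref{6l}), and then uses the structural facts about cross edges relative to the \emph{original} levels of $M$ (Lemmas~\ref{3l}, \ref{4l}, \ref{5l}) to verify that the glued matching satisfies the four level-based sufficient conditions of Theorem~\ref{thm:mPFM} (resp.\ Theorem~\ref{thm:DFM}) for a single global level assignment. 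An arbitrary local \mPFM\ comes with its own unrelated level structure, so none of these lemmas apply and no global certificate is available; this coordination of levels is the missing idea, not a routine case analysis.

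Your size arguments are also broken. For minimality you restrict a hypothetical smaller popular feasible $N$ to the parts and glue with $M_d$; but $N$ need not respect the partition, its restrictions need not be feasible or popular, and ``a popular feasible matching of $G$ strictly smaller than $M$'' is not a contradiction --- it must exist, since $M$ is by assumption \emph{not} a \mPFM. Symmetrically, a larger feasible $N$ need not project to a feasible matching of $G[A_m\cup B_m]$ larger than $M'_d$ (critical men of $A_m$ may be matched across the cut in $N$, and the size surplus can live anywhere). The paper sidesteps all such competitor-by-competitor comparisons: the conditions of Theorems~\ref{thm:mPFM} and \ref{thm:DFM} already certify that the glued matching defeats every smaller (resp.\ larger) feasible matching, which is the sense of \mPFM\ and \DFM\ actually needed. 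To repair your proposal you would have to replace both the ``any local \mPFM'' step and the projection/gluing arguments by the level-preserving transformations and Lemma~\ref{6l}-style control, i.e.\ essentially the paper's proof.
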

A proof of Theorem~\ref{thm:convert} is given in Section~\ref{sec:convert}. From Theorems \ref{thm:partition} and \ref{thm:convert}, Theorem~\ref{thm:main} follows:
\begin{proof}[Proof of Theorem~\ref{thm:main}]
Depending on the part that contains the given edge $e$, one of the two transformations mentioned in Theorem~\ref{thm:convert} can be applied:
If $e \in M_m$ (respectively $e\in M_d$), apply the first (respectively, second) transformation from Theorem~\ref{thm:convert} i.e. convert the matching $M_d$ to $M'_m$ ($M_m$ to $M'_d$). Then, by Theorem~\ref{thm:convert}, the resulting matching $M^*_m$ ($M^*_d$) is a \mPFM\ (\DFM) in $G$ containing $e$. If $e \in M_r$, we can apply any one of the two transformations mentioned in Theorem~\ref{thm:convert}.
Thus, in all the three cases, we get a \mPFM\ or a \DFM\ containing $e$. This completes the proof of Theorem~\ref{thm:main}.
\end{proof}
\subsection{Partition Method}\label{sec:part}
We prove Theorem~\ref{thm:partition} now. 
For partitioning $A\cup B$ and $M$, we first assign levels to the vertices of $A\cup B$ using Algorithm~\ref{LAmPFM} described in Section~\ref{sec:mPFM-algo}. We refer to the level of a vertex $u\in A\cup B$ as $level(u)$. Since $M$ is a \PFM\ but not a \mPFM\ by assumption, all the non-critical men may
not be at level $0$. However, the following holds:
\begin{lemma}\label{lem:lvl of np man}
After applying Algorithm~\ref{LAmPFM} on a \PFM\ $M$ all non-critical men are assigned levels $0$ or $1$.
\end{lemma}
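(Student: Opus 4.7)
The plan is to argue by contradiction: suppose Algorithm~\ref{LAmPFM} assigns a non-critical man $m$ a level $i \geq 2$; I will construct a feasible matching $N$ strictly more popular than $M$, contradicting the assumption that $M$ is a \PFM. First, I would invoke Theorem~\ref{thm:mPFM-term} with $j=0$ to obtain an alternating path $\rho$ ending at $m$ whose other endpoint is either (A) an unmatched (and hence non-critical, since $M$ is feasible) man $m_0$, or (B) a matched woman $w_0$ at level $0$. In either case the non-$M$ edges of $\rho$ have exactly $i$ more $(+1,+1)$ labels than $(-1,-1)$ labels.

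Next I would set $N := M \oplus \rho$ and check feasibility. The path structure produced by Algorithm~\ref{LAmPFM} makes $\rho$ start with a non-$M$ edge at $m_0$ in case (A), and with the $M$-edge $(w_0, M(w_0))$ in case (B), while in both cases $\rho$ ends with the $M$-edge $(M(m), m)$. Consequently $m$ becomes unmatched in $N$; in case (A) the vertex $m_0$ gains a partner, and in case (B) the woman $w_0$ also becomes unmatched. All internal vertices of $\rho$ merely swap partners. Since $m$ (and $m_0$, when present) are non-critical, and since no woman is critical ($C \subseteq A$), $N$ is a feasible matching.

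The key computation is the vote margin $\phi(N,M) - \phi(M,N)$. Internal vertices contribute $2\bigl(\#(+1,+1) - \#(-1,-1)\bigr) = 2i$, because each $(+1,+1)$ non-$M$ edge in $\rho$ gives two votes to $N$, each $(-1,-1)$ gives two to $M$, and mixed-label edges cancel. The endpoint contribution is $0$ in case (A) ($m_0$ votes for $N$ while $m$ votes for $M$) and $-2$ in case (B) (both $w_0$ and $m$ lose their matches and vote for $M$). Hence the total margin is $2i$ in case (A) and $2i-2$ in case (B), both strictly positive once $i \geq 2$, producing the promised contradiction.

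The subtle point is case (B): when the level-$0$ endpoint of $\rho$ is a matched woman rather than an unmatched man, both endpoints of $\rho$ lose their partners in $N$ and the endpoint contribution drops from $0$ to $-2$. This is precisely why the lemma must allow level $1$ rather than only level $0$: for $i = 1$ the endpoint loss of $2$ exactly cancels the internal gain of $2$ in case (B), and no contradiction can be extracted in that borderline case.
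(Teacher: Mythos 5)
Your proof is correct and takes essentially the same route as the paper: invoke Theorem~\ref{thm:mPFM-term} to obtain an alternating path $\rho$ from a level-$0$ endpoint to $m$ with $i$ more $(+1,+1)$ than $(-1,-1)$ edges, and observe that for $i \geq 2$ the feasible matching $M\oplus\rho$ defeats $M$, contradicting that $M$ is a \PFM. You are in fact more thorough than the paper, which only mentions the matched-woman endpoint and skips the unmatched-man case and the explicit vote count (where, as a minor accounting point, your case-(A) margin is $2i-1$ rather than $2i$, since the $+1$ label of $m_0$ is already included in the $(+1,+1)$ edge count; the conclusion for $i\geq 2$ is unaffected).
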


\begin{proof}
If there is a non-critical man $m$ who is assigned level $i \geq 2$, then according to Theorem~\ref{thm:mPFM-term} there exists a woman at level $0$ such that $m$ has an alternating path $\rho$ from $w$ with $i$ more $(+1,+1)$ edges than $(-1,-1)$ edges. Since $i \geq 2$, the matching $M \oplus \rho$ is feasible and is more popular than $M$, contradicting the assumption that $M$ is a \PFM\ . Hence, all non-critical men are assigned levels $0$ or $1$.  
\end{proof}
The following notions will be used in the partitioning procedure.
\begin{definition}[Size Reducing Alternating Path (\SRAP)]: An alternating path $\rho$ with respect to a matching $M$ is called as \SRAP\ if the following conditions are satisfied:
\begin{enumerate}
\item The number of $(+1,+1)$ edges in $\rho$ is one more than the number of $(-1,-1)$ edges in $\rho$,
\item It starts in a matched woman at level $0$.
\item It ends in a non-critical man at level $1$.
\end{enumerate}
\end{definition}

\begin{definition}[Size Increasing Alternating Path (\SIAP)]: An alternating path $\rho$ with respect to a matching $M$ is called an \SIAP\ if the following conditions are satisfied. 
\begin{enumerate}
    \item There are an equal number of $(+1,+1)$ and $(-1,-1)$ edges in $\rho$.
    \item Its end-points are an unmatched man and an unmatched woman.
\end{enumerate}
\end{definition}
Intuitively, if $\rho$ is an \SIAP\ (respectively an \SRAP), then $M\oplus\rho$ gives a larger (respectively, smaller) \PFM.
Theorem~\ref{thm:SIAP-SRAP} below shows that an \SRAP\ and an \SIAP\ must exist if $M$ is not a \mPFM\ or a \DFM. 


\begin{theorem} \label{thm:SIAP-SRAP}
If a \PFM\ $M$ is neither a \mPFM\  nor a \DFM, then $G$ must contain an \SRAP\ and an \SIAP\ with respect to $M$.
\end{theorem}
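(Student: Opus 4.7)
My plan is to prove the \SIAP\ and \SRAP\ existence claims separately, each by a symmetric-difference argument.

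For the \SIAP\ part, since $M$ is a \PFM\ but not a \DFM, there is a feasible matching $N$ with $|N|>|M|$ and $\phi(M,N)\le\phi(N,M)$; combined with popularity of $M$, this forces $\phi(M,N)=\phi(N,M)$, so $N$ is itself a \PFM. I would decompose $M\oplus N$ into cycles, $N$-augmenting paths (one more $N$-edge; endpoints unmatched in $M$) and anti-augmenting paths (one more $M$-edge; endpoints unmatched in $N$); writing $p_\rho,q_\rho$ for the number of $(+1,+1),(-1,-1)$ edges of component $\rho$ with respect to $M$, I would compute each contribution to $\phi(N,M)-\phi(M,N)=0$. For a cycle, both $M\oplus\rho$ and $N\oplus\rho$ are feasible, so popularity of $M$ gives $p_\rho\le q_\rho$ and popularity of $N$ gives $p_\rho\ge q_\rho$, making the contribution $2(p_\rho-q_\rho)$ vanish. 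For an anti-augmenting path, the endpoints (being unmatched in the feasible $N$) are non-critical, so $M\oplus\rho$ and $N\oplus\rho$ are both feasible; popularity of $M$ on $M\oplus\rho$ (of size $|M|-1$) and of $N$ on $N\oplus\rho$ (of size $|N|+1$) force $p_\rho-q_\rho=1$, making the contribution $2(p_\rho-q_\rho)-2$ vanish. For an augmenting path, $M\oplus\rho$ is feasible (it only matches additional non-critical vertices) and popularity of $M$ gives contribution $2(p_\rho-q_\rho)\le 0$. Since cycle and anti-aug contributions vanish, the augmenting contributions sum to $0$; each being non-positive forces each to equal $0$. As $|N|>|M|$ there is at least one augmenting path, and it is by definition an \SIAP.

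For the \SRAP\ part, I would apply Algorithm~\ref{LAmPFM} to $M$; by Lemma~\ref{lem:lvl of np man} every non-critical man receives level $0$ or $1$. I first argue that some non-critical man must be at level $1$: otherwise all four conditions of Theorem~\ref{thm:mPFM} hold (Conditions $1$--$3$ at the termination of the algorithm, Condition $4$ since unmatched men are never promoted) and every non-critical man is at level $0$, so the \mPFM\ step in the proof of Theorem~\ref{thm:mPFM}, which uses ``unmatched in $N\Rightarrow$ non-critical $\Rightarrow$ level $0$'', applies and makes $M$ a \mPFM, contradicting the hypothesis. Pick such a non-critical $m$ at level $1$. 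Theorem~\ref{thm:mPFM-term} with $i=1$ then yields an alternating path $\rho$ ending at $m$ and starting at either (i) a woman $w$ at level $0$ or (ii) an unmatched man $m_0$, with one more $(+1,+1)$ edge than $(-1,-1)$ edges. In case (i) a parity argument forces $w$ to be matched: the $w$-to-$m$ path has odd length, but if $w$ were unmatched the first edge would be non-$M$ and, since $m$ is matched, the last edge would have to be the $M$-edge of $m$, making the alternation length even. Hence $w$ is matched and $\rho$ is an \SRAP. In case (ii) the path has even length (both endpoints in $A$), $M\oplus\rho$ is feasible (since $m$ is non-critical and $m_0$ was unmatched), and a direct vote count gives $\phi(M\oplus\rho,M)-\phi(M,M\oplus\rho)=2(p-q)-1=1$ (the label sum on non-$M$ edges contributes $+2$, and the vertex $m$, which is now unmatched, contributes $-1$), contradicting popularity of $M$.

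The main obstacle I anticipate is the bookkeeping for anti-augmenting paths in the \SIAP\ argument: pinning down their contribution to exactly $0$ requires invoking popularity of both $M$ and $N$ on two different feasibility-preserving augmentations of $\rho$ of different sizes, and one must carefully verify that feasibility is preserved because the endpoints of an anti-augmenting path are unmatched in the feasible $N$. A secondary delicate point, in the \SRAP\ argument, is the implication ``all non-critical men at level $0\Rightarrow M$ is a \mPFM'', which is not stated as a lemma in the paper but is essentially the \mPFM\ step inside the proof of Theorem~\ref{thm:mPFM} and must be verified to go through in our setting.
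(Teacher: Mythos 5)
Your two halves are repairable, but each currently rests on an unjustified step. In the \SIAP\ half, your choice of comparison matching $N$ (the witness to non-dominance, rather than a \DFM\ as in the paper) is fine, but the pivotal claim ``$\phi(N,M)=\phi(M,N)$, so $N$ is itself a \PFM'' is asserted, not proved: that a feasible matching tying with a \PFM\ is again a \PFM\ is neither stated in the paper nor an immediate consequence of the definitions, and your treatment of cycles and anti-augmenting paths leans on $N$'s popularity. Fortunately you do not need it. Popularity of $M$ alone gives that every component $\rho$ of $M\oplus N$ satisfies $\phi(M\oplus\rho,M)-\phi(M,M\oplus\rho)\le 0$: for cycles and augmenting paths $M\oplus\rho$ is clearly feasible, and for an anti-augmenting path both endpoints are unmatched in the feasible $N$, hence non-critical, so $M\oplus\rho$ is feasible as well. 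Since the component contributions sum to $\phi(N,M)-\phi(M,N)\ge 0$ by the choice of $N$, every component contributes exactly $0$; in particular some augmenting path exists (as $|N|>|M|$) and has equally many $(+1,+1)$ and $(-1,-1)$ edges, i.e., it is an \SIAP. With the unproved claim deleted, this half becomes correct and close in spirit to the paper's, which instead takes the augmenting path from $M\oplus M_d$ for a \DFM\ $M_d$.

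In the \SRAP\ half your route is genuinely different from the paper's (which extracts the path from $M\oplus M_{min}$ for a \mPFM\ $M_{min}$, where both endpoints are matched in $M$ by construction), and the reduction to ``some non-critical man is at level $1$'' does go through by re-running the minimality argument of Theorem~\ref{thm:mPFM} with the termination guarantees of Algorithm~\ref{LAmPFM}, as you anticipate. The genuine gap is in your case (i): the inference ``since $m$ is matched, the last edge would have to be the $M$-edge of $m$'' is not valid --- an alternating path may perfectly well end at a matched vertex along a non-matching edge, and the statement of Theorem~\ref{thm:mPFM-term} does not promise that the path terminates in $m$'s matching edge --- so parity does not rule out an unmatched $w$. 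You can close this gap in either of two ways: (1) appeal to the construction in the proof of Theorem~\ref{thm:mPFM-term}, where every path built ends with the promoted man's matching edge and starts either at an unmatched man or at a matched level-$0$ woman through her matching edge; or (2) handle an unmatched $w$ directly: the path then ends at $m$ along a non-matching edge, so append $m$'s matching edge $(m,M(m))$; the switch along the extended path is a feasible matching of the same size (only the woman $M(m)$ becomes unmatched, and $C\subseteq A$), and the same vote count as in your case (ii) gives $2(p-q)-1=1>0$, contradicting popularity of $M$. With one of these patches, together with the fix above, your proof is correct; the paper's own argument, comparing $M$ against a \mPFM\ and a \DFM, sidesteps both issues at the cost of invoking the existence of those matchings.
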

\begin{proof}
Suppose $M_{min}$ is a \mPFM. Consider $M \oplus M_{min}$ which is a disjoint union of alternating paths and cycles. Since $|M_{min}| < |M|$, there exists an alternating path $\rho$ in $M \oplus M_{min}$ whose both end-points are matched in $M$. By popularity of $M$ and $M_{min}$, $\rho$ must have one more $(+1,+1)$ edge than $(-1,-1)$ edges 
so that $\phi(M \oplus \rho, M) = \phi(M, M \oplus \rho)$. By feasibility of $M_{min}$, $\rho$ must have a non-critical man $m$ as one of its end-points, since $m$ is unmatched in $M_{min}$. 
The level assigned to $m$ has to be $1$ because non-critical men can only be assigned levels $0$ or $1$ by Lemma \ref{lem:lvl of np man}. Moreover, 
since $\rho$ has $1$ more $(+1,+1)$ edge than $(-1,-1)$ edges,
the level assigned to the other end-point $w$ is $0$. Recall that $w$ is matched in $M$ and unmatched in $M_{min}$.
Hence $\rho$ is an \SRAP.

Now suppose $M_d$ is a \DFM\ in $G$. The graph $M \oplus M_d$ is a disjoint union of alternating paths and cycles. Since $|M| < |M_d|$, there must exist an alternating path $\rho$ in $M \oplus M_d$ whose end-points are unmatched in $M$ and matched in $M_d$. 
Here, $\rho$ must have an equal number of $(+1,+1)$ and $(-1,-1)$ edges, otherwise $(M \oplus \rho)$ becomes a more popular matching than $M$. Hence $M$ has an \SIAP.  
\end{proof}

The partitioning is based on \SIAP\ and \SRAP, so the following theorem is essential for the
partitioning to be well-defined. Theorem~\ref{thm:SIAP-SRAP-disjoint} below shows that no vertex belongs to both an \SRAP\ and an \SIAP:

\begin{theorem}\label{thm:SIAP-SRAP-disjoint}
Given a \PFM\ $M$, no vertex in $G$ belongs to both an \SIAP\ and an \SRAP.
\end{theorem}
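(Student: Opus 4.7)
My plan is to argue by contradiction. Suppose a vertex $v$ lies on both an \SIAP\ $\rho_1$ and an \SRAP\ $\rho_2$ with respect to the \PFM\ $M$. I first observe that $v$ must be an interior vertex of $\rho_1$: the two endpoints of $\rho_1$ are unmatched in $M$, whereas every vertex on $\rho_2$ is matched in $M$ (interior vertices because the $M$-edge at any interior vertex of an alternating path lies on the path, and the endpoints of $\rho_2$ by definition of an \SRAP). Consequently the $M$-edge $(a,b):=(v,M(v))$ is an edge of $\rho_1$; and by the same reasoning applied to $\rho_2$, this $M$-edge also lies on $\rho_2$. So $\rho_1$ and $\rho_2$ share at least one $M$-edge.

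Next, I would split $\rho_1=\pi_1^L\cdot(a,b)\cdot\pi_1^R$ with $\pi_1^L$ running from the unmatched endpoint $m_1$ to $b$ and $\pi_1^R$ continuing from $a$ to the unmatched endpoint $w_1$; symmetrically split $\rho_2=\pi_2^L\cdot(a,b)\cdot\pi_2^R$ from $w_2$ to $b$, and from $a$ to $m_2$. Define the two swapped alternating walks
\[
\sigma := \pi_1^L\cdot(a,b)\cdot\pi_2^R\ \ (m_1\to m_2), \qquad \sigma' := \pi_2^L\cdot(a,b)\cdot\pi_1^R\ \ (w_2\to w_1).
\]
Both are alternating with respect to $M$, since the shared $M$-edge $(a,b)$ glues together two non-$M$-ending tails on either side.

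The heart of the argument is a parity-plus-sum calculation. Each non-$M$ edge of $\sigma$ contributes $+2,-2,$ or $0$ to the net vote shift $k_\sigma:=\phi(M\oplus\sigma,M)-\phi(M,M\oplus\sigma)$ according as its label is $(+1,+1),(-1,-1),$ or mixed; the matched endpoint $m_2$ contributes an additional $-1$ because it becomes unmatched in $M\oplus\sigma$ and this vote is not captured by any non-$M$-edge label. Thus $k_\sigma=2(p_\sigma-q_\sigma)-1$, and by symmetry $k_{\sigma'}=2(p_{\sigma'}-q_{\sigma'})-1$, where $p_\cdot,q_\cdot$ count the $(+1,+1)$- and $(-1,-1)$-edges. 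Since every non-$M$ edge of $\rho_1\cup\rho_2$ appears in exactly one of $\sigma,\sigma'$,
\[
k_\sigma+k_{\sigma'} \;=\; 2\bigl[(p_1-q_1)+(p_2-q_2)\bigr]-2 \;=\; 2[\,0+1\,]-2 \;=\; 0,
\]
using the \SIAP\ identity $p_1=q_1$ and the \SRAP\ identity $p_2=q_2+1$. Each of $k_\sigma,k_{\sigma'}$ has the odd form $2X-1$, so they cannot both be $0$; being summed to $0$ then forces one of them, WLOG $k_\sigma$, to be $\ge 1$.

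The main obstacle is that $\sigma$ need not be a simple path if $\rho_1,\rho_2$ share more vertices than $\{a,b\}$. I would resolve this by decomposing $\sigma$ into a simple alternating path $\hat\sigma$ from $m_1$ to $m_2$ together with a collection of edge-disjoint alternating cycles $C_1,\ldots,C_t$; bipartiteness of $G$ forces every repeated vertex in $\sigma$ to occur at positions of the same parity, so each extracted closed sub-walk has even length and is genuinely alternating at its repeat vertex. The \PFM\ property of $M$ forbids any alternating cycle with strictly more $(+1,+1)$ than $(-1,-1)$ edges, so $2(p_{C_i}-q_{C_i})\le 0$ for each $i$, and the identity $k_\sigma=k_{\hat\sigma}+\sum_i 2(p_{C_i}-q_{C_i})$ yields $k_{\hat\sigma}\ge k_\sigma\ge 1$. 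Then $M\oplus\hat\sigma$ is a legitimate matching, strictly more popular than $M$, and it is feasible ($m_1$, unmatched in $M$ and hence non-critical, gains a partner, while $m_2$ is non-critical by the \SRAP\ definition and is allowed to lose one), contradicting the \PFM\ property of $M$.
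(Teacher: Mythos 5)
Your proof is correct in substance, but it takes a genuinely different route from the paper's. The paper argues through the level assignment produced by Algorithm~\ref{LAmPFM}: if a matched pair at level $i$ lay on both paths, the \SIAP\ segment from its unmatched male endpoint to that pair must carry $i$ more $(+1,+1)$ than $(-1,-1)$ edges, the \SRAP\ segment from the pair to its non-critical endpoint carries $i-1$ more $(-1,-1)$ than $(+1,+1)$ edges, and the concatenation is a net-$+1$ alternating path whose switch beats $M$. Your argument is level-free: you locate the shared matched edge, swap the tails of the two paths across it, and run a counting argument --- each recombined walk has net vote $2(p-q)-1$, the two nets sum to $0$ by the \SIAP/\SRAP\ count conditions, oddness forces one of them to be at least $1$, and non-simplicity is repaired by splitting off alternating cycles (legal because along an alternating walk in a bipartite graph the matching-status of an edge is determined by the parity of its position, which is what your one-line parity remark is really using) whose net is $\le 0$ by popularity of $M$. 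What each approach buys: the paper's proof is short once the leveling machinery is available and exhibits the improving path explicitly; yours is self-contained, needs neither Algorithm~\ref{LAmPFM} nor its structural lemmas, and --- unlike the paper's proof, where the concatenated path $\beta$ could in principle revisit vertices --- explicitly handles the simple-path issue. Two small points to tighten: (i) you use, as the paper implicitly does, that an \SRAP\ begins and ends with edges of $M$ (this is the intended reading, since it is what makes $M\oplus\rho$ size-reducing, but it is not literally stated in the definition); and (ii) your ``WLOG'' is not a true symmetry: if it is $k_{\sigma'}$ that is $\ge 1$, then $M\oplus\hat\sigma'$ unmatches the woman $w_2$ rather than the non-critical man $m_2$, and feasibility then rests on the fact that $C\subseteq A$, so women are never critical --- that case deserves its own sentence.
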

\begin{proof}
Note that if a man $m$ belongs to both an \SIAP\ $\rho$ and an \SRAP\ $\sigma$, then his matched partner $M(m)$ must belong to both $\rho$ and $\sigma$ too. Also note that no man or woman unmatched in $M$ can belong to both $\rho$ and $\sigma$ because all the men and women in an \SRAP\ are matched in $M$. Suppose a matched pair $(m,w)$ in $M$ belongs to both $\rho$ and $\sigma$. Let $m_I$ and $w_I$ be the end-points of $\rho$ and $m_R$ and $w_R$ be the end-points of $\sigma$. Let the level assigned to the pair $(m,w)$ be $i_{(m,w)}$. Since $m_I$ and $w_I$ are unmatched in $M$, both of them are assigned level $0$. Now, the alternating subpath $\rho_I$ of 
$\rho$ from $w_I$ to $m$ must contain $i_{(m,w)}$ more $(-1,-1)$ edges than $(+1,+1)$ edges. This is because all the adjacent vertices of the women present in level $i$ must be present at a level $j$ where $j \leq (i+1)$. So, $\rho_I$ can go up by only one level that is from a level $i$ woman it can only go to a level $i+1$ man and, from the properties of Algorithm~\ref{LAmPFM}, we also know that all the edges between level $i$ women and level $(i+1)$ men are $(-1,-1)$ edges. Now, since an \SIAP\ has an equal number of $(+1,+1)$ and $(-1,-1)$ edges, we have that the alternating subpath $\rho_{I'}$ of $\rho$ from $m_I$ to $m$ must contain $i_{(m,w)}$ more $(+1,+1)$ edges than $(-1,-1)$ edges. 

By a similar argument, the alternating subpath $\sigma_R$ of $\sigma$ starting from $m$ to $m_R$ consists of $(i_{(m,w)} - 1)$ more $(-1,-1)$ edges than $(+1,+1)$ edges. 
Thus the path $\beta = \rho_{I'}\circ \sigma_{R}$, where $\circ$ denotes concatenation, contains more $(+1,+1)$ edges than $(-1,-1)$ edges.
This is because $\rho_{I'}$ contains $i_{(m,w)}$ more $(+1,+1)$ edges than $(-1,-1)$ edges, and then $\rho_{R}$ has $(i_{(m,w)} - 1)$ more $(-1,-1)$ edges than $(+1,+1)$ edges. The matching $M \oplus \beta$ is more popular than $M$ because $\beta$ has one more $(+1,+1)$ edge than the number of $(-1,-1)$ edges, and $M\oplus\beta$ has the same size as that of $M$. This contradicts the fact that $M$ is a \PFM. Hence, no vertex in $G$ can belong to both an \SIAP\ and an \SRAP\ for a given matching $M$.
\end{proof}



Now we give the method to partition $A\cup B$ below, as required by Theorem~\ref{thm:partition}. 

\subsubsection{Partitioning $A\cup B$}
\begin{enumerate}[(a)]
    \item Initialize $A_d$, $A_m$, $A_r$, $B_d$, $B_m$, $B_r$ to empty sets.
    \item For all unmatched men $(m_u)$ and unmatched women $(w_u)$ in $M$ we do: $A_m = A_m \cup \{m_u\}$ and $B_m = B_m \cup \{w_u\}$
    \item\label{itm:SRAP} From Theorem~\ref{thm:SIAP-SRAP}, we know that $M$ must have an \SRAP\ and an \SIAP. For all men $m_d$ and women $w_d$ in each \SRAP\ do: $A_d = A_d \cup \{m_d\}$ and $B_d = B_d \cup \{w_d\}$
    \item\label{itm:SIAP} For all men $m$ and women $w$ in each \SIAP\ do: $A_m = A_m \cup \{m\}$ and $B_m = B_m \cup \{w\}$
    \item\label{itm:M_d} While there exists a $(+1,+1$) edge $(m,w)$ such that $m \in A \backslash (A_d \cup A_m)$, $level(m)=i$, $level(w)=j$, $j \leq i+1$ do: $A_d = A_d \cup \{m\}$ and $B_d = B_d \cup \{M(m)\}$
    \item\label{itm:M_m} While there exists a $(+1,+1)$ edge $(m,w)$ such that $m\in A_m$, $level(m)=i$, $w \in B \backslash (B_d \cup B_m)$, $level(w)=j$, $j \leq i+1$, 
    do: $B_m = B_m \cup \{w\}$ and $A_m = A_m \cup \{M(w)\}$
    \item $A_r = A \backslash (A_d \cup A_m)$ and $B_r = B \backslash (B_d \cup B_m)$. Let $M_d,M_m,M_r$ be the parts of $M$ present in the induced subgraph on $A_d\cup B_d$, $A_m\cup B_m$, and $A_r\cup B_r$ respectively.
\end{enumerate}

To complete the proof of Theorem~\ref{thm:partition}, we need to show that the above procedure partitions $A\cup B$ i.e., the three sets $A_d\cup B_d,A_m\cup B_m, A_r\cup B_r$ are disjoint. The partition procedure always puts a vertex and its matched partner in the same partition. So it is immediate that $M_d,M_m,M_r$ partition $M$. 

In the discussion below, we retain the same assignment of levels to all the vertices as was done before partitioning. 

We show that the sets $A_d$ and $A_m$ are disjoint. This implies that $B_d$ and $B_m$ are disjoint as well, since they consist of matched partners of the men in $A_d$ and $A_m$ respectively. 
From Theorem \ref{thm:SIAP-SRAP-disjoint}, a man $m$ cannot be a part of both an \SIAP\ and an \SRAP, and thus $m$ cannot be added in both $A_d$ and $A_m$ in the steps \ref{itm:SRAP} and \ref{itm:SIAP}. Hence, $A_d$ and $A_m$ remain disjoint in these steps.
We need to show that the three sets remain disjoint in steps \ref{itm:M_d}, and \ref{itm:M_m}. In the following, we show that an analogue
of Theorem~\ref{thm:mPFM-term} holds for the induced graphs on $A_d\cup B_d$ and $A_m\cup B_m$. 

\begin{lemma} \label{1c}
For a man $m \in A_m$ at level $i$, there is an alternating path with $i$ more $(+1,+1)$ edges than $(-1,-1)$ edges which starts at $m_I$ and ends in $m$ where $m_I$ is an unmatched man and also an endpoint of an \SIAP\ with respect to $M$. 
Analogously, for a woman $w \in B_d$ at level $i$ there is an alternating path with $(i-1)$ more $(-1,-1)$ edges than $(+1,+1)$ edges which starts at $m_R$ and ends in $w$ where $m_R$ is a non-critical man and also an endpoint of an \SRAP\ with respect to $M$.
\end{lemma}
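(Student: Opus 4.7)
The plan is to prove both parts by parallel inductions on the iteration of the partitioning procedure of Section~\ref{sec:part} at which $m$ is added to $A_m$ (respectively, $w$ is added to $B_d$). The argument will mirror the structure of the induction behind Theorem~\ref{thm:mPFM-term}, but now the chain of causality must terminate at an endpoint of an \SIAP\ (respectively an \SRAP) rather than at an arbitrary level-$0$ vertex. I describe the first statement in detail; the second is entirely symmetric, with \SIAP\ replaced by \SRAP\ and the roles of $(+1,+1)$ and $(-1,-1)$ edges exchanged.

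For the base case of the first part, suppose $m$ enters $A_m$ through step (d) of the partition procedure because it lies on some \SIAP\ $\sigma$. Let $m_I$ denote the unmatched man forming one endpoint of $\sigma$ and take $\pi$ to be the sub-path of $\sigma$ from $m_I$ to $m$. Since $m_I$ has level $0$ and $m$ has level $i$, and since Algorithm~\ref{LAmPFM} guarantees that matching edges preserve level while $(+1,+1)$ edges raise the level of the woman relative to the man and $(-1,-1)$ non-matching edges lower it, a direct level-accounting argument along $\pi$ shows that $\pi$ contains exactly $i$ more $(+1,+1)$ edges than $(-1,-1)$ edges. The case where $m$ enters $A_m$ only through step (b) reduces to $i=0$ with $m_I = m$ and empty $\pi$; by Theorem~\ref{thm:SIAP-SRAP}, at least one \SIAP\ exists on which such an $m$ must sit, so the claim is non-vacuous.

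For the inductive step, suppose $m$ joins $A_m$ via step (f) on account of a $(+1,+1)$ edge $(m', w)$ with $m' \in A_m$, $w = M(m)$, and $level(w) \leq level(m') + 1$. The invariants of Algorithm~\ref{LAmPFM} also give $level(w) > level(m')$, forcing $level(m) = level(w) = level(m') + 1$. By the inductive hypothesis applied to $m'$, there is an alternating path $\pi'$ from some unmatched \SIAP\ endpoint $m_I$ to $m'$ with $level(m')$ more $(+1,+1)$ than $(-1,-1)$ edges. Concatenating $\pi'$ with the length-two extension $(m', w, m)$ --- one $(+1,+1)$ non-matching edge followed by one matching edge --- yields a candidate alternating walk $\pi$ from $m_I$ to $m$ with the required surplus of $level(m') + 1 = level(m)$.

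The principal obstacle will be verifying that $\pi$ remains a simple path, i.e., that the concatenation does not revisit a vertex of $\pi'$. Any such revisit would produce a closed alternating cycle containing strictly more $(+1,+1)$ than $(-1,-1)$ edges; symmetric-differencing $M$ with this cycle would then yield a feasible matching strictly more popular than $M$, contradicting the assumption that $M$ is a \PFM. This is precisely the contradiction exploited in the proof of Theorem~\ref{thm:SIAP-SRAP-disjoint}, and I will invoke it to rule out every such revisit. For the second statement the base-case count along an \SRAP\ uses the fact that the full \SRAP\ carries one more $(+1,+1)$ than $(-1,-1)$ edge and terminates at the non-critical level-$1$ man $m_R$, so its sub-path from $m_R$ to any target woman $w$ at level $i$ contributes $(i-1)$ more $(-1,-1)$ than $(+1,+1)$ edges; the inductive step then mirrors step (e) in place of step (f), appending a matching edge followed by a $(+1,+1)$ edge traversed from the woman-side to the man-side.
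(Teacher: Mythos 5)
Your overall strategy is the same as the paper's: induct on the order in which vertices are added to $A_m$ (resp.\ $B_d$), take a subpath of an \SIAP\ (resp.\ \SRAP) as the base case, and extend by concatenating the connecting edge and the matched edge in the inductive step. The inductive step itself is fine: given step (f) as stated, the connecting edge is $(+1,+1)$ and the leveling invariants force the new pair to sit exactly one level above $m'$, so your single case is the only one that occurs (the paper's three-case analysis is broader than what the stated procedure requires).

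The genuine weak point is your base case for the \SIAP\ half. You assert that a ``direct level-accounting argument along $\pi$'' (the subpath from $m_I$ to $m$) gives \emph{exactly} $i$ more $(+1,+1)$ than $(-1,-1)$ edges, justified by the claim that $(+1,+1)$ edges raise the level and $(-1,-1)$ edges lower it. That characterization is wrong: by Theorem~\ref{thm:mPFM}, a $(+1,+1)$ edge only guarantees the woman is at a \emph{strictly higher} level (possibly more than one level up), a $(-1,-1)$ edge need not go down at all, and edges labelled $(+1,-1)$ or $(-1,+1)$ can also climb. What one-sided accounting along $\pi$ actually yields is only the upper bound (surplus $\leq i$), and the bound needed downstream in Lemma~\ref{3l} is the opposite direction. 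The missing ingredient is exactly what you do use for the \SRAP\ half: bound the complementary subpath from the other endpoint $w_I$ (it has at least $i$ more $(-1,-1)$ than $(+1,+1)$ edges, since every climbing step there is a $(-1,-1)$ edge) and combine with the fact that the whole \SIAP\ is balanced. Two smaller issues: (i) your claim that an unmatched man added in step (b) ``must sit'' on some \SIAP\ does not follow from Theorem~\ref{thm:SIAP-SRAP}, which only guarantees that at least one \SIAP\ exists (the paper's proof silently restricts to steps (d) and (f), so this case needs either exclusion from the statement or a separate remark); (ii) your fallback argument for simplicity of the concatenation is not valid as stated---a revisit does not automatically yield a cycle with positive $(+1,+1)$ surplus---but simplicity holds for a more elementary reason: every vertex of the inductively built path already lies in $A_m\cup B_m$, while the pair $(w,M(w))$ being appended is, by the condition of step (f), not yet in $A_m\cup B_m$.
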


\begin{proof}
A man $m_m$ is added in $A_m$ either in step (d) or in step (f) of the Partition Method. If $m_m$ is added in step (d) then $m_m$ is a part of a SIAP $\rho$. Let the level assigned to $m_m$ is $i$. Let $m_I$ and $w_I$ be the endpoints of $\rho$. Hence, $m_I$ and $w_I$ are unmatched in $M$ and thus the levels of $m_I$ and $w_I$ are 0. Now, a woman at level $i$ is adjacent to a man at level $j$ where $j \leq (i+1)$. Let $\rho'$ be the alternating subpath of $\rho$ which starts from $w_I$ and ends in $m_m$.  So $\rho'$ can go up by only one level that is $\rho'$ can traverse from a woman at level $i$ to a man at level $(i+1)$. We know that a edge from a woman at level $i$ to a man at level $(i+1)$ is a $(-1,-1)$ edge. Hence, the alternating subpath $\rho'$ of $\rho$ which starts from $w_I$ and ends in $m_m$ consists of $i$ more $(-1,-1)$ edges than $(+1,+1)$ edges. Now, since $\rho$ consists of equal number of $(+1,+1)$ and $(-1,-1)$ edges so the alternating subpath $\rho''$ of $\rho$ starting from $m_I$ to $m_m$ consists of $i$ more $(+1,+1)$ edges than $(-1,-1)$ edges.

Let, $m_1$ is the first man who is added in $A_m$ in step (f). Then $m_1$ is added to $A_m$ because his matched partner $M(m_1) = w_1$ is adjacent to $m_m$ which is added to $A_m$ in step (d). Let the level of $m_m$ is $i$. Hence, from the arguments given in the previous paragraph we get that there is an alternating path from $m_I$ to $m_m$ which has $i$ more $(+1,+1)$ edges than $(-1,-1)$ edges. Now, if the level of $m_1$ and $w_1$ is $i$ then the edge $(m_m,w_1)$ is either a $(+1,-1)$ or a $(-1,+1)$ edge. Hence, the alternating path from $m_I$ to $m_m$ concatenated with the path $(m_m,w_1,m_1)$ is the alternating path from $m_I$ to $m_1$ which has $i$ more $(+1,+1)$ edges than $(-1,-1)$ edges, if the level of $m_1$ and $w_1$ is $(i-1)$ then the edge $(m_m,w_1)$ is a $(-1,-1)$ edge. Hence, the alternating path from $m_I$ to $m_m$ concatenated with the path $(m_m,w_1,m_1)$ is the alternating path from $m_I$ to $m_1$ which has $(i-1)$ more $(+1,+1)$ edges than $(-1,-1)$ edges and if the level of $m_1$ and $w_1$ is $(i+1)$ then the edge $(m_m,w_1)$ is a $(+1,+1)$ edge. Hence, the alternating path from $m_I$ to $m_m$ concatenated with the path $(m_m,w_1,m_1)$ is the alternating path from $m_I$ to $m_1$ which has $(i+1)$ more $(+1,+1)$ edges than $(-1,-1)$ edges. 

Now if $m_k$ be the $k^{th}$ man added to $A_m$ in step (f) (where $k \geq 1$) and for all $j \leq k$ we assume that if the level of $m_j$ is $i$ then there exists an alternating path from $m_I$ to $m_j$ which has $i$ more $(+1,+1)$ edges than $(-1,-1)$ edges. Now if $m_{k+1}$ is the $(k+1)^{th}$ man who is added in $A_m$ in step (f). Then $m_{k+1}$ is added to $A_m$ because his matched partner $M(m_{k+1}) = w_{k+1}$ is adjacent to $m_m$ which is added to $A_m$ in step (d) or in step (f). Let the level of $m_m$ is $i$. Hence, we get that there is an alternating path from $m_I$ to $m_m$ which has $i$ more $(+1,+1)$ edges than $(-1,-1)$ edges. Now, if the level of $m_{k+1}$ and $w_{k+1}$ is $i$ then the edge $(m_m,w_{k+1})$ is either a $(+1,-1)$ or a $(-1,+1)$ edge. Hence, the alternating path from $m_I$ to $m_m$ concatenated with the path $(m_m,w_{k+1},m_{k+1})$ is the alternating path from $m_I$ to $m_{k+1}$ which has $i$ more $(+1,+1)$ edges than $(-1,-1)$ edges, if the level of $m_{k+1}$ and $w_{k+1}$ is $(i-1)$ then the edge $(m_m,w_{k+1})$ is a $(-1,-1)$ edge. Hence, the alternating path from $m_I$ to $m_m$ concatenated with the path $(m_m,w_{k+1},m_{k+1})$ is the alternating path from $m_I$ to $m_{k+1}$ which has $(i-1)$ more $(+1,+1)$ edges than $(-1,-1)$ edges and if the level of $m_{k+1}$ and $w_{k+1}$ is $(i+1)$ then the edge $(m_m,w_{k+1})$ is a $(+1,+1)$ edge. Hence, the alternating path from $m_I$ to $m_m$ concatenated with the path $(m_m,w_{k+1},m_{k+1})$ is the alternating path from $m_I$ to $m_{k+1}$ which has $(i+1)$ more $(+1,+1)$ edges than $(-1,-1)$ edges.

Hence, for a man $m \in A_m$ which is at level $i$ there is an alternating path with $i$ more $(+1,+1)$ edges than $(-1,-1)$ edges which starts at $m_I$ and ends in $m$ where $m_I$ is an unmatched man and also an endpoint of a SIAP in the PFM $M$.

Now we prove the statement for a woman $w_d \in B_d$.
A woman $w_d$ is added in $B_d$ either in step (c) or in step (e) of the Partition Method. If $w_d$ is added in step (c) then $w_d$ is a part of a SRAP $\rho$. Let the level assigned to $w_d$ is $i$. Let $m_R$ and $w_R$ be the endpoints of $\rho$. Hence, $m_R$ is a non-critical man and $w_R$ is a woman matched in $M$ and the levels of $m_R$ and $w_R$ are 1 and 0 respectively (according to the definition of SRAP). Now, a woman at level $i$ is adjacent to a man at level $j$ where $j \leq (i+1)$. Let $\rho'$ be the alternating subpath of $\rho$ which starts with the matched edge $(m_R, M(m^R)$ and ends in $w_d$.  So $\rho'$ can go up by only one level that is $\rho'$ can traverse from a woman at level $i$ to a man at level $(i+1)$. We know that a edge from a woman at level $i$ to a man at level $(i+1)$ is a $(-1,-1)$ edge. Hence, the alternating subpath $\rho'$ of $\rho$ which starts from $m_R$ and ends in $w_d$ consists of $(i-1)$ more $(-1,-1)$ edges than $(+1,+1)$ edges because $m^R$ is at level 1 and $w_d$ is at $i$.

Let, $w_1$ is the first woman who is added in $B_d$ in step (e). Then $w_1$ is added to $B_d$ because her matched partner $M(w_1) = m_1$ is adjacent to $w_d$ which is added to $B_d$ in step (c). Let the level of $m_1$ is $i$. Now, if the level of $w_d$ is $i$ then there is an alternating path from $m_R$ to $w_d$ which has $(i-1)$ more $(-1,-1)$ edges than $(+1,+1)$ edges and the edge $(m_1,w_d)$ is either a $(+1,-1)$ or a $(-1,+1)$ edge. Hence, the alternating path from $m_R$ to $w_d$ concatenated with the path $(w_d,m_1,w_1)$ is the alternating path from $m_R$ to $w_1$ which has $(i-1)$ more $(-1,-1)$ edges than $(+1,+1)$ edges, if the level of $w_d$ is $(i-1)$ then there is an alternating path from $m_R$ to $w_d$ which has $(i-2)$ more $(-1,-1)$ edges than $(+1,+1)$ edges.the edge $(m_1,w_d)$ is a $(-1,-1)$ edge. Hence, the alternating path from $m_R$ to $w_d$ concatenated with the path $(w_d,m_1,w_1)$ is the alternating path from $m_R$ to $w_1$ which has $(i-1)$ more $(-1,-1)$ edges than $(+1,+1)$ edges and if the level of $w_d$ is $(i+1)$ then there is an alternating path from $m_R$ to $w_d$ which has $i$ more $(-1,-1)$ edges than $(+1,+1)$ edges and the edge $(m_1,w_d)$ is a $(+1,+1)$ edge. Hence, the alternating path from $m_R$ to $w_d$ concatenated with the path $(w_d,m_1,w_1)$ is the alternating path from $m_R$ to $w_1$ which has $(i-1)$ more $(-1,-1)$ edges than $(+1,+1)$ edges.

Now if $w_k$ be the $k^{th}$ woman added to $B_d$ in step (e) (where $k \geq 1$) and for all $j \leq k$ we assume that if the level of $w_j$ is $i$ then there exists an alternating path from $m_R$ to $w_j$ which has $(i-1)$ more $(-1,-1)$ edges than $(+1,+1)$ edges. Now if $w_{k+1}$ is the $(k+1)^{th}$ woman who is added in $B_d$ in step (e). Then $w_{k+1}$ is added to $B_d$ because her matched partner $M(w_{k+1}) = m_{k+1}$ is adjacent to $w_d$ which is added to $B_d$ in step (c) or in step (e). Let the level of $m_{k+1}$ is $i$. Now, if the level of $w_d$ is $i$ then there is an alternating path from $m_R$ to $w_d$ which has $(i-1)$ more $(-1,-1)$ edges than $(+1,+1)$ edges and the edge $(m_{k+1},w_d)$ is either a $(+1,-1)$ or a $(-1,+1)$ edge. Hence, the alternating path from $m_R$ to $w_d$ concatenated with the path $(w_d,m_{k+1},w_{k+1})$ is the alternating path from $m_R$ to $w_{k+1}$ which has $(i-1)$ more $(-1,-1)$ edges than $(+1,+1)$ edges, if the level of $w_d$ is $(i-1)$ then then there is an alternating path from $m_R$ to $w_d$ which has $(i-2)$ more $(-1,-1)$ edges than $(+1,+1)$ edges and the edge $(m_{k+1},w_d)$ is a $(-1,-1)$ edge. Hence, the alternating path from $m_R$ to $w_d$ concatenated with the path $(w_d,m_{k+1},w_{k+1})$ is the alternating path from $m_R$ to $w_{k+1}$ which has $(i-1)$ more $(-1,-1)$ edges than $(+1,+1)$ edges and if the level of $w_d$ is $(i+1)$ then there is an alternating path from $m_R$ to $w_d$ which has $i$ more $(-1,-1)$ edges than $(+1,+1)$ edges and the edge $(m_{k+1},w_d)$ is a $(+1,+1)$ edge. Hence, the alternating path from $m_R$ to $w_d$ concatenated with the path $(w_d,m_{k+1},w_{k+1})$ is the alternating path from $m_R$ to $w_{k+1}$ which has $(i+1)$ more $(+1,+1)$ edges than $(-1,-1)$ edges.

Hence, for a woman $w \in B_d$ which is at level $i$ there is an alternating path with $(i-1)$ more $(-1,-1)$ edges than $(+1,+1)$ edges which starts at $m_R$ and ends in $w$ where $m_R$ is an unmatched man and also an endpoint of a SRAP in the PFM $M$.
\end{proof}

\begin{lemma} \label{3l}
For an edge $(m,w) \in A_m \times B_d$ in $G$ we have the following
\begin{enumerate}[(i)]
    \item If $m$ is at level $i$ and $w$ is at level $(i+1)$ then the edge $(m,w)$ is not a $(+1,+1)$ edge.
    \item If $m$ is at level $i$ then $w$ cannot be at level $(i-1)$ or below.
    \item If $m$ is at level $i$ and $w$ is at level $i$ then $(m,w)$ is a $(-1,-1)$ edge. 
\end{enumerate}
\end{lemma}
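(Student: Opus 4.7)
The plan is to treat the three parts in a unified manner by constructing, in each case, a witness alternating path $P$ such that $M \oplus P$ is a feasible matching of the same size as $M$ but strictly more popular than $M$, contradicting the hypothesis that $M$ is a \PFM. The two halves of $P$ will come from Lemma~\ref{1c}, and the bridging edge will be the given $(m,w)$.

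First, I would invoke Lemma~\ref{1c} to produce two witness paths: an alternating path $P_1$ from an unmatched endpoint $m_I$ of some \SIAP\ to $m$ with $i$ more $(+1,+1)$ than $(-1,-1)$ edges, and an alternating path $P_2$ from a non-critical endpoint $m_R$ of some \SRAP\ (matched in $M$ at level $1$) to $w$ with $(\text{level}(w)-1)$ more $(-1,-1)$ than $(+1,+1)$ edges. By tracing the constructions in the proof of Lemma~\ref{1c}, $P_1$ ends at $m$ with the matching edge $(M(m),m)$ and $P_2$ ends at $w$ with the matching edge $(M(w),w)$; reversing $P_2$ gives a path $P_2^{-1}$ starting at $w$ with a matching edge. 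Since the Partition Method always places matched partners in the same part and $(m,w)\in A_m\times B_d$, the edge $(m,w)$ is non-matching, so the concatenation $P:=P_1\circ(m,w)\circ P_2^{-1}$ is a valid alternating path with equal numbers of matching and non-matching edges. Hence $|M\oplus P|=|M|$, and $M\oplus P$ is still feasible because the only vertex going from matched to unmatched is the non-critical $m_R$.

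The three parts then follow by direct sign counting on $P$. For (i), $(m,w)$ is assumed to be $(+1,+1)$ and $\text{level}(w)=i+1$, so the excess of $(+1,+1)$ over $(-1,-1)$ edges in $P$ equals $i+1-i=1$, giving $\phi(M\oplus P,M)>\phi(M,M\oplus P)$. For (ii), the termination conditions of Algorithm~\ref{LAmPFM} rule out every edge between a level-$i$ man and a level-$j$ woman with $j\le i-2$, and force any edge with $j=i-1$ to be $(-1,-1)$; the count for $P$ then becomes $i+(-1)+(-(i-2))=1$. For (iii), the same termination conditions allow only the labels $(-1,-1)$, $(+1,-1)$, and $(-1,+1)$ when $\text{level}(w)=i$, and assuming either of the latter two (each contributing $0$ to the excess) yields $i+0-(i-1)=1$; so the label must be $(-1,-1)$. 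In each case the resulting contradiction with popularity of $M$ closes the argument.

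The main technical obstacle I anticipate is guaranteeing that $P_1$ and $P_2^{-1}$ are internally vertex-disjoint, so that $P$ is a genuine simple alternating path. If they share a vertex $v$, I would shortcut at the first shared vertex: depending on the parity of $v$'s position in each half, this produces either a shorter alternating path between $m_I$ and $m_R$ passing through $(m,w)$, or, if the bridging edge is excised, an alternating cycle formed by the two tails meeting at $v$. In either sub-case I would verify that the retained sub-configuration still has strictly more $(+1,+1)$ than $(-1,-1)$ edges and that no critical vertex is unmatched after the symmetric difference, so that the same contradiction with popularity of $M$ is reached. Carefully tracking the edge signs and the feasibility of the resulting matching through this case analysis is the most delicate bookkeeping in the proof.
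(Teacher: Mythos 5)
Your proposal matches the paper's proof essentially verbatim: both invoke Lemma~\ref{1c} to obtain the two half-paths ending at $m$ and at $w$, bridge them with the edge $(m,w)$ (using the termination conditions of Algorithm~\ref{LAmPFM} to pin down the possible labels), and count a net excess of one $(+1,+1)$ edge over $(-1,-1)$ edges to obtain a same-size feasible matching more popular than $M$, contradicting that $M$ is a \PFM. The vertex-disjointness concern you flag is not treated explicitly in the paper (the two half-paths are built inside $A_m\cup B_m$ and $A_d\cup B_d$ respectively, so disjointness is implicit), so your extra care there supplements rather than diverges from the published argument.
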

\begin{proof}

\textbf{Condition (i)}: Suppose such a pair $(m,w)$ exists. 
From Lemma \ref{1c}, there is an alternating path $\rho_I$ from $m$ to $m_I$ with $i$ more $(+1,+1)$ edges than $(-1,-1)$ edges, where $m_I$ is an unmatched man and also an endpoint of an \SIAP, and there is an alternating path $\rho_R$ from $w$ to $m_R$ with $i$ more $(-1,-1)$ edges than $(+1,+1)$ edges, where $m_R\in A\setminus P$ and $m_R$ is an endpoint of an \SRAP. Hence the alternating path $\rho$ = $\rho_I \circ\rho_R$ has more $(+1,+1)$ edges than $(-1,-1)$ edges. Here $\circ$ denotes concatenation. 

\textbf{Condition (ii)}: If $(m,w)$ is an edge in $G$ and level of $m$ is $i$ then the level of $w$ cannot be less than $(i-1)$. 
Suppose there is a man $m \in A_m$ at level $i$ adjacent to a woman $w \in B_d$ at level $(i-1)$. Note that $(m,w)$ is a $(-1,-1)$ edge because all edges between a man at level $i$ and a woman at level $(i-1)$ are $(-1,-1)$ edges. So, from Lemma \ref{1c}, there is an alternating path $\rho_I$ from $m$ to $m_I$ (where $m_I$ is an unmatched man and also an endpoint of a SIAP) with $i$ more $(+1,+1)$ edges than $(-1,-1)$ edges. Again from Lemma \ref{1c} we get that there is an alternating path $\rho_R$ from $w$ to $m_R$ (where $m_R$ is a non-critical man and also an endpoint of a \SRAP) which has $(i-2)$ more $(-1,-1)$ edges than $(+1,+1)$ edges. Hence the alternating path $\rho$ = $\rho_I \hspace{2mm} o \hspace{2mm} \rho_R$ has more $(+1,+1)$ edges than $(-1,-1)$ edges (where $o$ denotes the concatenation of two paths). Here $\rho$ is an alternating path which starts from $m_I$ then it goes to $m$ which has $i$ $(+1,+1)$ edges then it takes the edge $(m,w)$ which is a $(-1,-1)$ edge and then it takes the alternating path from $w$ to $m_R$ which has $(i-2)$ $(-1,-1)$ edges. Hence, $\rho$ has $i$ $(+1,+1)$ edges and $(i-1)$ $(-1,-1)$ edges. Hence, $M \oplus \rho$ is a more popular matching than $M$, which is a contradiction. Hence, $m \in A_m$ at level $i$ cannot be adjacent to a woman $w \in B_d$ at level $(i-1)$.

\textbf{Condition (iii)}: Suppose there is a man $m \in A_m$ at level $i$ which is adjacent to a woman $w \in B_d$ at level $i$ such that $(m,w)$ is not a $(-1,-1)$ edge. So, from Lemma \ref{1c} we get that there is an alternating path $\rho_I$ from $m$ to $m_I$ (where $m_I$ is an unmatched man and also an endpoint of a SIAP) with $i$ more $(+1,+1)$ edges than $(-1,-1)$ edges. Again from Lemma \ref{1c} we get that there is an alternating path $\rho_R$ from $w$ to $m_R$ (where $m_R$ is a non-critical man and also an endpoint of a SRAP) which has $(i-1)$ more $(-1,-1)$ edges than $(+1,+1)$ edges. Hence the alternating path $\rho$ = $\rho_I \hspace{2mm} o \hspace{2mm} \rho_R$ has more $(+1,+1)$ edges than $(-1,-1)$ edges (where $o$ denotes the concatenation of two paths). Here $\rho$ is an alternating path which starts from $m_I$ then it goes to $m$ which has $i$ $(+1,+1)$ edges then it takes the edge $(m,w)$ which is not a $(-1,-1)$ edge and then it takes the alternating path $w$ to $m_R$ which has $(i-1)$ $(-1,-1)$ edges. Hence, $\rho$ has $i$ $(+1,+1)$ edges and $(i-1)$ $(-1,-1)$ edges. Hence, $M \oplus \rho$ is a more popular matching than $M$, which is a contradiction. Hence, if $m \in A_m$ at level $i$ is adjacent to a woman $w \in B_d$ at level $i$ then $(m,w)$ is a $(-1,-1)$ edge.

\end{proof}
\begin{lemma} \label{4l}
For an edge $(m,w) \in A_r \times B_d$ in $G$ we have the following
\begin{enumerate}[(i)]
    \item If $m$ is at level $i$ and $w$ is at level $(i+1)$ then the edge $(m,w)$ is not a $(+1,+1)$ edge.
    \item If $m$ is at level $i$ then $w$ cannot be at level $(i-1)$ or below.
    \item If $m$ is at level $i$ and $w$ is at level $i$ then $(m,w)$ is a $(-1,-1)$ edge. 
\end{enumerate}
\end{lemma}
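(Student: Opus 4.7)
My plan is to follow the template of Lemma~\ref{3l} almost verbatim, with a single substitution that accommodates $m \in A_r$ rather than $m \in A_m$. A man in $A_r$ need not lie on any \SIAP, so Lemma~\ref{1c} does not directly supply an alternating path ending at $m$. In its place I would invoke Theorem~\ref{thm:mPFM-term} on $m$ at level $i$, obtaining an alternating path $\rho_m$ from some level-$0$ vertex (either a matched woman $w_0$ or an unmatched non-critical man $m_0$) to $m$ that carries $i$ more $(+1,+1)$ edges than $(-1,-1)$ edges. For $w \in B_d$ at level $j$, Lemma~\ref{1c} continues to supply an alternating path $\rho_R$ from a non-critical man $m_R$ (an endpoint of some \SRAP) to $w$ with $(j-1)$ more $(-1,-1)$ edges than $(+1,+1)$ edges.

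For Condition~(i), the quickest argument skips the concatenation entirely: if $(m,w)$ were a $(+1,+1)$ edge with $level(w)=i+1$, then throughout step~(e) of the Partition Method we have $m \notin A_d \cup A_m$ and $level(w)\le level(m)+1$, so step~(e) would have moved $m$ into $A_d$, contradicting $m \in A_r$. For Condition~(ii), the subcase $level(w) \le i-2$ is impossible by Condition~3 of Theorem~\ref{thm:mPFM}, and in the subcase $level(w)=i-1$ the edge $(m,w)$ must be $(-1,-1)$ by Condition~2. Concatenating $\rho_m$, the edge $(m,w)$, and the reverse of $\rho_R$ yields an alternating walk with $(+1,+1)$-minus-$(-1,-1)$ surplus equal to $i - 1 - (i-2) = 1$. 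Condition~(iii) is analogous: a $(+1,+1)$ label is ruled out by Condition~1 of Theorem~\ref{thm:mPFM}, and if $(m,w)$ were $(+1,-1)$ or $(-1,+1)$ the same concatenation has surplus $i - 0 - (i-1) = 1$. In each case $M \oplus \rho$ would be a feasible matching strictly more popular than $M$, contradicting that $M$ is a \PFM.

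I expect the main obstacle to be showing that the concatenation $\rho = \rho_m \circ (m,w) \circ \rho_R$ really yields a simple alternating path on which $M \oplus \rho$ is a feasible matching. Feasibility is clean: every endpoint of $\rho$ is non-critical --- women are never critical, any unmatched $m_0$ is non-critical since $M$ is feasible, and $m_R$ is non-critical by the definition of \SRAP --- so no critical man loses his match in $M \oplus \rho$. Simplicity is the delicate point, since $\rho_m$ and $\rho_R$ may share internal vertices, and alternation at the junctions $m$ and $w$ must be verified from the fact that $(m,w) \notin M$ (because $m$ and $w$ lie at distinct levels in Conditions~(i) and~(ii), and because any labelled edge is automatically unmatched in Condition~(iii)). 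I would handle a possible overlap with the standard rerouting trick used in Theorems~\ref{thm:SIAP-SRAP-disjoint} and~\ref{thm:mPFM-term}: take the first vertex on $\rho_m$ that reappears on $\rho_R$ and splice there, extracting either an alternating cycle with positive $(+1,+1)$-surplus or a strictly shorter simple alternating path with the same positive surplus, either of which still contradicts popularity of $M$.
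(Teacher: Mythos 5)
Your treatment of Condition (i) is exactly the paper's: step (e) of the Partition Method would already have absorbed $m$ into $A_d$, so no such $(+1,+1)$ edge can survive for $m\in A_r$. For Conditions (ii) and (iii), however, you depart from the paper: the paper disposes of them by asserting that no edge $(m,w)\in A_r\times B_d$ with $level(w)\le level(m)$ exists at all (so both conditions hold vacuously), again pointing to step (e); you instead try to prove the label statements directly by the counting argument of Lemma~\ref{3l}, replacing Lemma~\ref{1c} on the man's side by Theorem~\ref{thm:mPFM-term}, since a man in $A_r$ need not lie on any \SIAP.

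That substitution is precisely where your argument breaks. Theorem~\ref{thm:mPFM-term} only guarantees that the upward path $\rho_m$ with surplus $i$ starts either at an unmatched man $m_0$ or at a \emph{matched} woman $w_0$ at level $0$, and you allow both. In the matched-woman case the concatenation $\rho=\rho_m\circ(m,w)\circ\rho_R^{rev}$ has both endpoints ($w_0$ and $m_R$) matched in $M$, and its surplus of $(+1,+1)$ over $(-1,-1)$ edges is exactly one; a path with one extra $(+1,+1)$ edge whose two endpoints both lose their partners yields only $\phi(M\oplus\rho,M)=\phi(M,M\oplus\rho)$ --- a tie, not a strict win. This is exactly the accounting the paper itself uses in Theorem~\ref{thm:SIAP-SRAP}, where such a path is how a smaller matching ties with $M$. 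So in that case $M\oplus\rho$ is merely a smaller feasible matching tying with $M$, which contradicts nothing, since $M$ is a \PFM\ but not assumed to be a \mPFM. Lemma~\ref{3l} avoids this trap because Lemma~\ref{1c} supplies a path starting at the \emph{unmatched} \SIAP\ endpoint $m_I$, and that unmatched endpoint is what converts a $+1$ surplus into a strict popularity gain (compare Theorem~\ref{thm:SIAP-SRAP-disjoint}). In the case where Theorem~\ref{thm:mPFM-term} gives the unmatched man $m_0$ your arithmetic ($i-1-(i-2)=1$, respectively $i-0-(i-1)=1$) does give a strict contradiction, and your handling of simplicity, alternation at the junctions, and feasibility is fine; but as written, Conditions (ii) and (iii) remain unproven whenever the guaranteed path starts at a matched level-$0$ woman, and you would need either to secure an unmatched endpoint (which membership of $m$ in $A_m$ provides and membership in $A_r$ does not) or a surplus of at least two.
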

\begin{proof}
\textbf{Condition (i)}: If $m$ is at level $i$ and $w$ is at level $(i+1)$ such that the edge $(m,w)$ is a $(+1,+1)$ edge then in step (e) of the partition method $m$ and his matched partner are added to $A_d \cup B_d$. Hence $m \notin A_r$.
\\\\
\textbf{Conditions (ii) and (iii)}: These two conditions are vacuously true because according to construction of the sets defined in the partition method there are no edges $(m,w) \in A_r \times B_d$ such that $m$ is at level $i$ and $w$ is at level $j$ where $j \leq i$. This is because if level of $m$ is $i$ where $j \leq i$ then in step (e) of the partition method $m$ and his matched partner are added to $A_d \cup B_d$ 
\end{proof}

\begin{lemma} \label{5l}
For a pair $(m,w) \in A_m \times B_r$ we have the following
\begin{enumerate}[(i)]
    \item If $m$ is at level $i$ and $w$ is at level $(i+1)$ then the edge $(m,w)$ is not a $(+1,+1)$ edge.
    \item If $m$ is at level $i$ then $w$ cannot be at level $(i-1)$ or below.
    \item If $m$ is at level $i$ and $w$ is at level $i$ then $(m,w)$ is a $(-1,-1)$ edge. 
\end{enumerate}
\end{lemma}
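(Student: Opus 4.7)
The plan is to prove the three conditions in Lemma~\ref{5l} along the same lines as Lemma~\ref{3l} and Lemma~\ref{4l}, first restricting the possible label configurations using the global consequences of Algorithm~\ref{LAmPFM}, and then deriving contradictions either from the Partition Method itself or from the augmenting-path tool provided by Lemma~\ref{1c}.

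For Condition (i), I would argue directly from the partition construction. Suppose for contradiction that $(m,w)$ is a $(+1,+1)$ edge with $m \in A_m$ at level $i$ and $w \in B_r$ at level $i+1$. Then the precondition of Step (f) of the Partitioning procedure is satisfied for $(m,w)$: indeed $m \in A_m$, $w \notin B_d \cup B_m$ since $w \in B_r$, and $level(w) = i+1 \leq level(m)+1$. So Step (f) would have moved $w$ into $B_m$, contradicting $w \in B_r$.

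For Conditions (ii) and (iii), I would first trim the cases using Conditions $1$, $2$, $3$ of Theorem~\ref{thm:mPFM}, which hold globally once Algorithm~\ref{LAmPFM} terminates. Condition (ii) reduces immediately to $level(w) = i-1$ with the edge $(m,w)$ labelled $(-1,-1)$, since levels $\leq i-2$ are ruled out by Condition~$3$ and level $i-1$ forces a $(-1,-1)$ label by Condition~$2$. Condition (iii) rules out $(+1,+1)$ at equal levels by Condition~$1$, leaving only $(+1,-1)$ and $(-1,+1)$ to handle. In each residual subcase I would invoke Lemma~\ref{1c} to obtain an alternating path $\rho_I$ from an SIAP endpoint $m_I$ (unmatched in $M$) to $m$ carrying $i$ more $(+1,+1)$ edges than $(-1,-1)$ edges, and concatenate $\rho_I$ with the edge $(m,w)$ and, if needed, with the matched edge $(w,M(w))$. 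A careful accounting of labels, mirroring the bookkeeping in Lemma~\ref{3l}, should produce an alternating walk $\rho$ whose $(+1,+1)$-excess over $(-1,-1)$-edges makes $M \oplus \rho$ a feasible matching strictly more popular than $M$, contradicting the popularity of $M$.

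The main obstacle is that $w \in B_r$ does not come with an alternating path back to an SRAP endpoint, which is exactly what Lemma~\ref{1c} provides for $w \in B_d$ and what enables the closing-off step in the proof of Lemma~\ref{3l}. My plan to overcome this is to use two structural facts about $B_r$: first, that $M(w) \in A_r$ because the Partition Method always places a matched pair in the same part; and second, that no $(+1,+1)$ edge from any man in $A_m$ to $w$ can exist, since otherwise Step (f) would have moved $w$ into $B_m$. Combining these with $\rho_I$, I expect that either the augmenting walk can be closed off at an unmatched or non-critical vertex reached through $A_r \cup B_r$, or the configuration itself forces a $(+1,+1)$ edge that Step (f) or Step (e) should have processed, yielding the required contradiction in both cases and thereby completing the proof of Conditions (ii) and (iii).
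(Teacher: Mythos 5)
Your treatment of Condition (i) is correct and is exactly the paper's argument: a $(+1,+1)$ edge from a level-$i$ man of $A_m$ to a level-$(i+1)$ woman outside $B_d \cup B_m$ satisfies the precondition of step (f) of the Partition Method, so $w$ would have been placed in $B_m$, contradicting $w \in B_r$.

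For Conditions (ii) and (iii), however, your proposal does not constitute a proof, and it also diverges from what the paper does. The paper disposes of these two conditions by a vacuity claim tied to the partition construction: it asserts that no edge of $A_m \times B_r$ joins a man at level $i$ to a woman at level $j \leq i$ at all (because step (f) would have absorbed such a $w$ into $B_m$), so (ii) and (iii) hold vacuously. You instead try to mirror the popularity-contradiction argument of Lemma~\ref{3l}, but you yourself name the missing ingredient --- for $w \in B_r$ there is no SRAP-side path from Lemma~\ref{1c} --- and you only ``expect'' one of two fallbacks to close the argument; neither works as stated. Closing the walk at $M(w)$, i.e.\ taking $\rho = \rho_I \circ (m,w) \circ (w,M(w))$, fails on two counts: $M(w) \in A_r$ may be a critical man, so $M \oplus \rho$ need not be feasible and the popularity of $M$ among \emph{feasible} matchings is not contradicted; and even ignoring feasibility the vote count can be nonpositive, e.g.\ in Condition (ii) with $i=1$ the path $\rho_I$ carries only one surplus $(+1,+1)$ edge while $(m,w)$ is a $(-1,-1)$ edge (Condition 2 of Theorem~\ref{thm:mPFM}) and $M(w)$ loses its partner, so no contradiction arises. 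The other fallback --- that ``the configuration forces a $(+1,+1)$ edge that step (e) or (f) should have processed'' --- cannot occur: for $level(w) \leq level(m)$ the edge $(m,w)$ is never $(+1,+1)$ (Condition 1 of Theorem~\ref{thm:mPFM}), so those steps are not triggered by it. Hence your plan neither proves the non-existence statement the paper relies on nor completes the alternative augmenting-path argument, and Conditions (ii) and (iii) remain unproved in your proposal.
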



\begin{proof}
\textbf{Condition (i)}: If $m$ is at level $i$ and $w$ is at level $(i+1)$ such that the edge $(m,w)$ is a $(+1,+1)$ edge then in step (f) of the partition method $w$ and her matched partner are added to $A_m \cup B_m$. Hence $w \notin B_r$.
\\\\
\textbf{Conditions (ii) and (iii)}: These two conditions are vacuously true because according to construction of the sets defined in the partition method there are no edges $(m,w) \in A_m \times B_r$ such that $m$ is at level $i$ and $w$ is at level $j$ where $j \leq i$. This is because if level of $w$ is $j$ where $j \leq i$ then in step (f) of the partition method $w$ and her matched partner are added to $A_m \cup B_m$ 
\end{proof}
\subsection{Transformation of $M_m$ to $M'_d$} \label{sec:algo-convert-to-Md}
\begin{enumerate}[(a)]
    \item All the men unmatched in $M_m$ are assigned level $1$ and they start proposing from the beginning of their preference lists. A woman prefers a man at level $j$ more than a man at level $i$ where $j > i$. If a man $m$ proposes a woman $w$ then $w$ will accept $m$'s proposal iff $w$ is unmatched or if $w$ prefers $m$ more than her matched partner.
    \item If a critical man $m$ at level $i$ where $i < |C|$ exhausts his preference list while proposing and remains unmatched then we assign level $(i+1)$ to $m$ and $m$ starts proposing again from the beginning of his preference list.
    \item If a non-critical man $m$ at level $0$ exhausts his preference list while proposing and remains unmatched then we assign level $1$ to $m$ and $m$ starts proposing again from the beginning of his preference list.
\end{enumerate}
Let $M'_d$ be the matching which we get after applying the above steps on the induced subgraph on $A_m \cup B_m$.

\subsection {Transformation Procedures}\label{sec:convert}
We prove Theorem~\ref{thm:convert} now.
Recall that, before partitioning $A\cup B$, we have assigned levels, denoted by $level(u)$ to all the vertices $u\in A\cup B$ according to $M$ using Algorithm~\ref{LAmPFM}, and that $level(u)=level(M(u))$. Our transformation procedures use these levels, 

\subsubsection{Transformation of $M_d$ to $M'_m$: }\label{sec:algo-convert-to-Mm}

Following are the steps involved in the transformation, we refer to this as {\em Transformation $1$}.

\begin{enumerate}[(a)]
    \item For $m\in A_d,M(m)\in B_d$, if $level(m)=level(M(m))=i$, $i \geq 1$, then set $level(m)=level(M(m))=i-1$
    \item Mark the matched edges present among level $0$ vertices as unmatched edges. So all the level $0$ men in $A_d$ are not assigned to any partner now.
    \item Execute a proposal algorithm now. The men at level $0$ start proposing from the beginning of their preference lists. A woman prefers a man at level $j$ more than a man at level $i$ where $j > i$. If a man $m$ proposes to a woman $w$ then $w$ will accept $m$'s proposal iff $w$ is unmatched or if $w$ prefers $m$ more than her matched partner.  
    \item If a critical man $m$ at level $i$ where $i < |C|$ exhausts his preference list while proposing and remains unmatched then we assign level $(i+1)$ to $m$ and $m$ starts proposing again from the beginning of his preference list.
\end{enumerate}
    Let $M'_m$ be the matching obtained after applying the above steps on the induced subgraph on $A_d \cup B_d$, and let $M^*_m=M'_m\dot{\cup} M_m\dot{\cup} M_r$ be the resulting matching in $G$. 

\subsubsection{Transformation of $M_m$ to $M'_d$:} This is referred to as {Transformation $2$} here onwards, and involves promoting all the unmatched men to level $1$ and executing a similar proposal algorithm as above. Men that get unmatched during the course of the proposal algorithm continue proposing to women further down in their preference list. If they exhaust their preference list without getting matched, then they are promoted to the next higher level and continue proposing, however, non-critical men are not promoted beyond level $1$.
Let the resulting matching in $G$ be $M^*_d=M'_m\dot{\cup} M_d\dot{\cup} M_r$.


The following property is crucially used in proving that $M^*_m$ is a \mPFM\ in $G$ whereas $M^*_d$ is a \DFM\ in $G$.
\begin{lemma}\label{6l}
For a man $m \in A_d$, if $level(m)=i, i>0$ before applying Transformation $1$ on $A_d \cup B_d$ then, after applying Transformation $1$, $level(m)\in\{i-1,i\}$. The same holds for a woman $w\in B_d$. Similarly, for a man $m \in A_m$, if $level(m)=i$ before applying Transformation $2$ on $A_m \cup B_m$ then, after applying Transformation $2$, $level(m)\in \{i,i+1\}$. The same holds for a woman $w\in B_m$.
\end{lemma}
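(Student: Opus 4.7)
The plan is to prove all four assertions via the same template: a direct lower bound from the mechanics of the transformation, followed by an upper bound by contradiction using the alternating-path structure established in Lemma~\ref{1c}. I will describe the case of a man $m\in A_d$ under Transformation~$1$ in detail; the other three cases are symmetric.

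\textbf{Lower bound.} In Transformation~$1$, the only step that ever decreases a vertex's level is step~(a), which demotes every vertex at level $\ge 1$ by exactly one; every subsequent step (the proposal phase and the promotion rule in step~(d)) can only raise levels. Consequently, if $m\in A_d$ starts at original level $i>0$, then $m$ sits at level $i-1$ immediately after step~(a) and at level at least $i-1$ at termination. Transformation~$2$ contains no demotion step, so every vertex's level is non-decreasing throughout and is bounded below by its original value $i$; this gives the lower bounds for both transformations and both sides of the bipartition.

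\textbf{Upper bound by a first-violator argument.} Suppose, for contradiction, that some man in $A_d$ is promoted past his original level during Transformation~$1$, and take the very first moment in the execution at which any vertex of $A_d\cup B_d$ exceeds its original level. Let $m$ be the man being promoted at this moment, say from current level $i$ to level $i+1$, where $i$ is his original level. Then $m$ has been rejected by every woman $w$ on his preference list while proposing at current level $i$, so each such $w$ is currently matched to some $m_w$ of current level $\ge i$. By minimality of the violating event, no vertex has yet been promoted beyond its original level, so the original level of $m_w$ is also $\ge i$; since matched partners under $M$ share the same assigned level (Condition~$3$ of Theorem~\ref{thm:mPFM}), the original level of $w$ is $\ge i$ as well. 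A short case analysis on $m$'s preference-list position relative to $M(m_w)$ then produces at least one such $w$ whose original level is strictly greater than $i$, and Condition~$1$ of Theorem~\ref{thm:mPFM} makes $(m,w)$ a $(+1,+1)$ edge with respect to $M$.

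\textbf{Contradiction and remaining cases.} Given the $(+1,+1)$ edge $(m,w)$ with respect to $M$, Lemma~\ref{1c} yields an alternating path from $m$ to a non-critical \SRAP-endpoint with $i-1$ more $(-1,-1)$ than $(+1,+1)$ edges, and a second alternating path from $w$ to another non-critical \SRAP-endpoint with at least $i$ more $(-1,-1)$ than $(+1,+1)$ edges. Chaining these two paths with the edge $(m,w)$ and the two corresponding \SRAP s (each contributing one more $(+1,+1)$ than $(-1,-1)$ edge) produces an alternating structure in $G$ with strictly more $(+1,+1)$ than $(-1,-1)$ edges, whose symmetric difference with $M$ is a feasible matching strictly more popular than $M$, contradicting the hypothesis that $M$ is a \PFM. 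The three remaining cases use exactly the same template, with \SIAP-endpoints in place of \SRAP-endpoints for the $A_m\cup B_m$ assertions (appealing to the first half of Lemma~\ref{1c}); the extra slack of one level in Transformation~$2$ (final level $\le i+1$ rather than $\le i$) reflects the fact that the men unmatched in $M_m$ begin Transformation~$2$ already promoted to level~$1$. The main obstacle is the bookkeeping between \emph{original} levels (fixed, assigned by Algorithm~\ref{LAmPFM} on $M$) and \emph{current} levels (evolving during the proposal), so that the minimality-of-first-promotion hypothesis truly forces the existence of the crucial $(+1,+1)$ edge $(m,w)$; once that edge is identified, the alternating-path arithmetic closely mirrors the arguments used in Theorem~\ref{thm:SIAP-SRAP-disjoint}.
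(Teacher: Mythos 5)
Your proposal diverges from the paper's proof and, as sketched, has two genuine gaps. First, the pivotal step of your first-violator argument --- producing a $(+1,+1)$ edge $(m,w)$ with respect to $M$ --- is not actually established: you defer it to ``a short case analysis'' and invoke Condition~1 of Theorem~\ref{thm:mPFM} backwards. That condition says every $(+1,+1)$ edge goes from a lower-level man to a higher-level woman; it does not say that an edge from a lower-level man to a higher-level woman is $(+1,+1)$. You also infer $w$'s original level from the level of her \emph{current} partner $m_w$ in the proposal phase via ``matched partners under $M$ share the same level,'' but $m_w$ need not be $M(w)$, so that inference does not go through. Second, even granting such an edge, the concluding popularity contradiction fails: an edge of type $(+1,+1)$ from a level-$i$ man to a strictly higher-level woman is exactly what Condition~1 permits, so its existence alone cannot contradict $M$ being a \PFM. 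For $m\in A_d$, Lemma~\ref{1c} only supplies SRAP-side paths (to $w\in B_d$, or to $M(m)$), all carrying an \emph{excess of $(-1,-1)$ edges}; chaining two such paths with one $(+1,+1)$ edge and then adding ``the two corresponding \SRAP s'' gives a net $(+1,+1)$ surplus of $4-2i$, which is not positive for general $i$, and in any case the Lemma~\ref{1c} paths typically share vertices with the \SRAP s themselves, so the union need not be a valid alternating structure whose symmetric difference with $M$ is a matching.

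The paper proves the lemma by a purely local analysis of the proposal dynamics, with no alternating-path or popularity argument at all: if $m$ (matched to $w$ in $M_m$, both at level $i$) were to end at level $i+2$ or more under Transformation~$2$, he must have exhausted his list while proposing at level $i+1$; but at level $i+1$ he always wins $w$ back, because no neighbour of $w$ can hold her at a higher current level (Condition~3 bounds the original levels of $w$'s neighbours, and no other man has been promoted past his original level at that point), and any same-level competitor $m''$ loses to $m=M(w)$ since $(m'',w)$ is not a $(+1,+1)$ edge between same-level vertices. Your first-violator framing and your observation about the lower bound (step~(a) demotes by exactly one, and levels never decrease thereafter) are compatible with this; to repair your proof, replace the $(+1,+1)$-edge/alternating-path contradiction by the fallback-to-original-partner argument above, carried out at current level $i+1$ for Transformation~$2$ (and at current level $i$, i.e.\ the original level, for Transformation~$1$).
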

\begin{proof}
We prove the property for Transformation $2$. The proof for Transformation $1$ is analogous.

Suppose there exists a man $m \in A_m$ who was assigned level $i$ before applying Transformation $2$ on $A_m \cup B_m$ but after applying Transformation $2$ suppose $m$ is assigned level $(i+2)$ or more. In Transformation $2$ we convert the matching $M_m$ to $M_d^*$. If $m$ is unmatched in $M_m$ then the level of $m$ in $M_m$ was 0 and it can be at most at level 1 in $M_d^*$ because $m$ is a non-critical man. So, $m$ cannot be an unmatched man because it contradicts our assumption that the level of $m$ in $M_d^*$ is $(i+2)$ or more. Suppose $m$ is matched to a woman $w$ in $M_m$ and the level of $m$ is $i$. While applying Transformation $2$ $w$ rejected $m$ because $w$ got a proposal from some man $m'$ who is better than $m$ and is at level $i$. Note that $w$ must have rejected $m$ while applying Transformation $2$ because after applying Transformation $2$ level of $m$ changes to $(i+2)$ or more from $i$. Now, since $m$ is assigned level $(i+2)$ or more so $m$ exhausts his preference list while proposing and  remains unmatched at level $i$. So, $m$ gets promoted to level $(i+1)$ and he starts proposing from the beginning of his preference list. Again since  $m$ is assigned level $(i+2)$ or more so $m$ exhausts his preference list while proposing and  remains unmatched at level $i+1$ but this is not possible because in the worst case $m$ can propose to $w$ and get matched to her. This is because $w$ would reject $m'$ which is at level $i$ and $m$ is at level $(i+1)$. Hence $w$ would accept $m$'s proposal and the level of $w$ changes to $(i+1)$. Note that no man $m''$ can get promoted from level $i$ to level $(i+1)$ and breaks the engagement of $m$  and $w$ because if this happens then in $M_m$  the edge $(m'',w)$ is a $(+1,+1)$ edge but since both $m''$ and $w$ are in the same level $i$ in $M_m$ hence $(m'',w)$ cannot be a $(+1,+1)$ edge. Hence, $m$ does not exhaust his preference list while proposing at level $i+1$. So, after applying Transformation $2$ the level of $m$ can be either $i$ or $(i+1)$.
\end{proof}

Theorems \ref{thm:mPFM-proof} and \ref{thm:DFM-proof} show that the matchings output by the transformations are a \mPFM\ and a \DFM\ in $G$ respectively. 
\begin{theorem}\label{thm:mPFM-proof}
$M^*_m$ = $(M_m' \cup M_m \cup M_r)$ is a \mPFM.
\end{theorem}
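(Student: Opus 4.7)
The plan is to invoke the characterization given by Theorem~\ref{thm:mPFM}: it suffices to exhibit a level assignment on the vertices of $G$ such that $M^*_m$ together with this assignment satisfies the four conditions (the $(+1,+1)$, $(-1,-1)$, gap, and unmatched-men conditions) there. The candidate level assignment is the natural one: keep the levels of vertices in $A_m\cup B_m$ and $A_r\cup B_r$ unchanged (these are the levels produced by Algorithm~\ref{LAmPFM} on $M$), and on $A_d\cup B_d$ use the levels that emerge after Transformation~$1$ (the step~(a) shift by $-1$ together with any promotions during the proposal phase). By Lemma~\ref{6l} the new level of any $u\in A_d\cup B_d$ differs from its old level by at most $1$, which will be crucial when examining cross-partition edges.

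First I would verify that $M^*_m$ is a feasible matching in $G$. The parts $M_m$ and $M_r$ are untouched, so it suffices to argue that every critical man in $A_d$ is matched by $M'_m$. This is the standard Gale-Shapley style guarantee of the proposal-with-promotion procedure used in Transformation~$1$: a critical man that exhausts his list at level $i<|C|$ gets promoted to level $i+1$ and re-proposes, and the analysis of Algorithm~\ref{LAmPFM} and Corollary~\ref{cor:levels} already shows that no critical man can be pushed beyond level $|C|$ in the reduced instance $G'$, so each must eventually be matched. Next I would verify the four conditions of Theorem~\ref{thm:mPFM} for edges whose endpoints lie inside a single part. For edges inside $A_d\cup B_d$ this is essentially the standard output-of-Gale-Shapley argument on the leveled instance: any violation would give a blocking pair on the leveled preference lists used during Transformation~$1$. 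For edges inside $A_m\cup B_m$ and $A_r\cup B_r$, the matching is still $M_m$ or $M_r$ and the levels are unchanged, and these same four conditions already held for $M$ by virtue of Algorithm~\ref{LAmPFM}.

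The main obstacle, and where the work of the Partition Method pays off, is the treatment of edges that cross partition classes. I would split cases by which parts the two endpoints belong to and appeal to Lemmas~\ref{3l}, \ref{4l}, \ref{5l} in combination with Lemma~\ref{6l}. For instance, an edge in $A_m\times B_d$ had, before transformation, levels satisfying the restrictions of Lemma~\ref{3l}: no wrong-way $(+1,+1)$, the man's level is at most the woman's, and equal levels force a $(-1,-1)$ label. After Transformation~$1$ the $B_d$ side can drop by $1$ but never rise, while the $A_m$ side is unchanged, so one verifies directly that none of the four forbidden configurations in Theorem~\ref{thm:mPFM} can arise; the analogous case analysis handles $A_r\times B_d$, $A_m\times B_r$, $A_d\times B_m$, $A_d\times B_r$ using Lemmas~\ref{4l}, \ref{5l} (and their obvious symmetric analogues, obtained by swapping the roles of $A_d$ and $A_m$ and inverting the level shift direction). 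Finally, Condition~$4$ holds because the only men unmatched in $M^*_m$ are non-critical men that remained unmatched during the proposal phase of Transformation~$1$, and such men sit at level~$0$ (non-critical men are never promoted beyond level~$1$, and being unmatched forces level $0$ by the same argument used in Corollary~\ref{cor:levels}). Once all four conditions are verified, Theorem~\ref{thm:mPFM} immediately yields that $M^*_m$ is a \mPFM, completing the proof.
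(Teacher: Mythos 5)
Your proposal follows essentially the same route as the paper's proof: invoke the sufficiency direction of Theorem~\ref{thm:mPFM}, keep the Algorithm~\ref{LAmPFM} levels outside $A_d\cup B_d$, use the post-Transformation-$1$ levels inside, bound the level change by Lemma~\ref{6l}, and discharge the cross-partition pairs via Lemmas~\ref{3l} and \ref{4l}, with Condition~$4$ argued exactly as in the paper. Your separate feasibility check for the critical men in $A_d$ is redundant under this framing: the sufficiency proof of Theorem~\ref{thm:mPFM} already derives feasibility (and minimality) from the four conditions, so nothing beyond verifying the conditions is required.

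The one genuine soft spot is your handling of the pairs in $A_d\times B_m$ and $A_d\times B_r$ by appeal to ``obvious symmetric analogues'' of Lemmas~\ref{4l} and \ref{5l}. No such analogues are stated or proved in the paper, and they do not follow by symmetry: the Partition Method and the two transformations treat the $d$-part and the $m$-part asymmetrically (\SRAP s versus \SIAP s, demotion versus promotion), so a statement about men in $A_d$ adjacent to women in $B_m$ would need its own proof. The paper instead disposes of these pairs by noting that Transformation~$1$ only lowers levels, and only inside $A_d\cup B_d$ (Lemma~\ref{6l}); hence among cross pairs only those whose woman lies in $B_d$ can newly threaten Conditions~$1$--$3$, which is why only $A_m\times B_d$ and $A_r\times B_d$ are examined. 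Likewise the $A_m\times B_r$ pairs you list need no new argument, since neither endpoint's level nor partner changes. If you insist on saying something about the $A_d$-side cross pairs (for instance because an $A_d$ man's partner, and hence his edge labels, may change during the proposal phase), you must give a direct argument for them rather than cite a symmetry that the construction does not have.
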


\begin{proof}
The four conditions given in Theorem \ref{thm:mPFM} are sufficient to show that a matching is a \mPFM. We show that $M'$ satisfies all of them.

Before applying the Transformation $1$, $M$ satisfied conditions $1$ to $3$ of Theorem \ref{thm:mPFM} because of the way Algorithm~\ref{LAmPFM} assigns levels.]

After applying Transformation $1$, the matching $M_d$ changes to $M_m'$ and the levels of the vertices in $A_d \cup B_d$ decrease by at most $1$ (Lemma \ref{6l}). So, if $M'$ does not satisfy conditions $1-3$ of Theorem~\ref{thm:mPFM} then it has to be because of the pairs present in $A_m \times B_d$ and $A_r \times B_d$. Now we show that the conditions are still satisfied. 

Below the proofs are given only for the pairs in $A_m \times B_d$. Proofs for the pairs in $A_r \times B_d$ are similar.

Let $(m,w)$ be an edge in $A_m\times B_d$.
From Lemma \ref{3l} (ii), if the level of $w$ is $i$ with respect to $M$, then $m$ has level $j\leq i$. Now, after applying the Transformation 1, the level of $w$ either remains $i$ or becomes $(i-1)$. In the former case, 
the first condition of Theorem \ref{thm:mPFM} is satisfied.
In the later case, we have three possibilities: (a) either $j<(i-1)$ or (b) $j=(i-1)$, or (c) $j=i$. In case (b), $(m,w)$ is  not a $(+1,+1)$ edge (Lemma \ref{3l} (i)), in case (c), $(m,w)$ is a $(-1,-1)$ edge (Lemma \ref{3l} (iii)). Hence, there is no $(+1,+1)$ edge in between a pair $(m,w) \in A_m \times B_d$ in the matching $M'$ where $m$ is at level $j$ and $w$ is at level $i$ such that $j \leq i$. Hence, condition $1$ is satisfied.

From Lemma \ref{3l} (ii), if the level of $w$ is $i$ in $M$, then $m$ has level $j\leq i$. 
If level of $w$ changes to $(i-1)$ after applying the Transformation $1$, and if level of $m$ is $i$, then due to Lemma \ref{3l}(iii), $(m,w)$ is a $(-1,-1)$ edge. Thus the condition $2$ of Theorem \ref{thm:mPFM} is satisfied.

From Lemma \ref{3l} (i), if $w$ is at level $i$ with respect to $M$, then level of $m$ is $j\leq i$. 
If level of $w$ changes to $(i-1)$, the conditions of Theorem \ref{thm:mPFM} are still satisfied because no man in $A_m$ adjacent to $w$ is present at level $(i+1)$ or above.

We know that all the unmatched men are non-critical men. In the first step of Transformation $1$, we decrease the level of each vertex by $1$. Since the level of a non-critical man is at most $1$ to begin with, and they are never promoted to a higher level in the Transformation $1$, 
all the vertices unmatched in $M^*_m$ remain at level $0$.  
Since all the conditions of Theorem \ref{thm:mPFM} are satisfied, $M^*_m$ is a \mPFM.
\end{proof}

The following is an analogous result for Transformation $2$.

\begin{theorem}\label{thm:DFM-proof}
$M^*_d$ = $(M_d \cup M'_d \cup M_r)$ is a \DFM\ in $G$.
\end{theorem}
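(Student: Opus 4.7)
The plan is to mirror the structure of the proof of Theorem~\ref{thm:mPFM-proof}, but working with the sufficient conditions characterizing a \DFM\ (the analogue of Theorem~\ref{thm:mPFM} obtained from the reduction to $G''$, which will be established in the Appendix). Concretely, one expects an analogous leveling characterization: $M^*_d$ is a \DFM\ if, for some assignment of levels, every $(+1,+1)$ edge goes from a lower-level man to a strictly higher-level woman, every edge from a man at level $i$ to a woman at level $(i-1)$ is a $(-1,-1)$ edge, no edge spans a level gap of $2$ or more, matched pairs lie at the same level, and all unmatched women are at level $0$ while every unmatched non-critical man is at the top level. The goal then is to retain the levels produced by Algorithm~\ref{LAmPFM} for the vertices outside $A_m \cup B_m$, and to use the updated levels inside $A_m\cup B_m$ that arose during Transformation $2$.

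First I would observe that, prior to Transformation $2$, the levels assigned by Algorithm~\ref{LAmPFM} already make $M$ satisfy the analogous conditions within each of the three parts (essentially by reusing the proof of Theorem~\ref{thm:mPFM-proof} together with Lemma~\ref{lem:lvl of np man}). Transformation $2$ only changes the matching on $A_m\cup B_m$ and, by Lemma~\ref{6l}, can raise the level of any such vertex by at most $1$. Hence the conditions continue to hold for edges entirely inside $A_d\cup B_d$, entirely inside $A_r\cup B_r$, or entirely inside $A_m\cup B_m$ (the last being ensured directly by the proposal mechanics, since a woman in $B_m$ only accepts strictly better proposals and so no $(+1,+1)$ edge can be internal to $A_m\cup B_m$ after the transformation).

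The crossing edges are therefore the only source of trouble, namely edges in $A_d \times B_m$, $A_r \times B_m$, and $A_m \times B_d$; by Lemmas~\ref{3l}, \ref{4l}, \ref{5l} the third family has all its edges going from a lower-level man to a strictly higher-level woman with the two possible equal-level edges being $(-1,-1)$, and the first two families are symmetric. For an edge $(m,w)\in A_d\times B_m$ with $level_M(m)=i$ and $level_M(w)=j$, the symmetric analogues of Lemma~\ref{3l} (proved the same way, using the alternating-path technology of Lemma~\ref{1c} applied on the other side of the partition) force $j\ge i-1$ and rule out $(+1,+1)$ edges with $j=i+1$, while forcing edges with $j=i$ to be $(-1,-1)$; after Transformation $2$ the level of $w$ can only move up by at most $1$, so each configuration either preserves the required label pattern or moves the edge further away from the forbidden region. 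The same argument handles $A_r\times B_m$ via Lemma~\ref{5l}, and an analogous case analysis for $A_m\times B_d$ uses that now it is $m$'s level that may increase by $1$ while $w$'s level stays put.

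Finally I would verify the dominance part: for every feasible matching $N$ with $|N|>|M^*_d|$, the symmetric-difference $M^*_d\oplus N$ decomposes into alternating paths and cycles, and the leveling conditions force every alternating path with both endpoints unmatched in $M^*_d$ to contain strictly more $(-1,-1)$ edges than $(+1,+1)$ edges (mirroring the argument that concluded the \mPFM\ proof in Theorem~\ref{thm:mPFM}), whereas cycles and paths with matched endpoints in $M^*_d$ contribute non-positively to $\phi(N,M^*_d)-\phi(M^*_d,N)$; this gives $\phi(N,M^*_d)<\phi(M^*_d,N)$ strictly. The main obstacle I foresee is the bookkeeping for crossing edges once both endpoints can have shifted levels (as happens for $A_m\times B_d$, where $m$ moves up and the previously used bounds from Lemma~\ref{3l} are exactly at the boundary); the proof must carefully exploit that the proposal algorithm in Transformation $2$ never creates a $(+1,+1)$ edge to a vertex of $B_d$, because any woman in $B_d$ would have been pulled into $A_m\cup B_m$ by step~(\ref{itm:M_m}) of the partition method if such an edge to $A_m$ existed at the appropriate levels.
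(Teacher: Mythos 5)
Your overall template is the paper's: verify the four sufficiency conditions of Theorem~\ref{thm:DFM} for $M^*_d$, using Lemma~\ref{6l} to bound the level changes and the partition lemmas to control crossing edges, exactly mirroring Theorem~\ref{thm:mPFM-proof}. The problem is the bookkeeping of \emph{which} crossing edges can newly violate Conditions 1--3. Transformation $2$ only raises levels (by at most one, Lemma~\ref{6l}) of vertices in $A_m\cup B_m$ and leaves the levels and partners of all other vertices untouched; since Conditions 1--3 forbid configurations where the man sits too high relative to the woman, the edges that can become violating are precisely those with the man in $A_m$ and the woman outside, i.e.\ the families $A_m\times B_d$ and $A_m\times B_r$ --- which is what the paper analyses, via Lemma~\ref{3l} and Lemma~\ref{5l} respectively. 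You instead list $A_d\times B_m$, $A_r\times B_m$ and $A_m\times B_d$: you omit $A_m\times B_r$ altogether, you cite Lemma~\ref{5l} for $A_r\times B_m$ although it is stated for pairs in $A_m\times B_r$, and you invoke Lemma~\ref{4l}, which concerns $A_r\times B_d$ and is only relevant to Transformation $1$. As written, one of the two families that genuinely requires the case analysis is untreated.

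The families you do worry about, $A_d\times B_m$ and $A_r\times B_m$, do not need ``symmetric analogues of Lemma~\ref{3l}'' (no such lemmas exist in the paper, and your sketch does not prove them): there the woman's level only rises while the man's level and partner are unchanged, so the level constraints in Conditions 1--3 cannot become newly violated. What would actually need a word for those edges is the \emph{label} change --- a woman of $B_m$ may change partner under Transformation $2$, so her vote on an edge to an outside man can flip --- and your claim that each configuration ``preserves the required label pattern'' is asserted rather than argued (it can be argued: within a level she only trades up in her original preferences, and an originally worse partner forces her level to rise, but that reasoning is missing). Two smaller points: you never verify Condition $4$ (that men left unmatched by the proposal procedure end at level $1$), which the paper checks directly from the rules of Transformation $2$; and your final paragraph re-proves the dominance count against larger feasible matchings, which is redundant once you invoke the sufficiency of Theorem~\ref{thm:DFM} --- that count is the content of the appendix proof of that theorem, not of this one. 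So the approach is the paper's, but the crossing-edge case analysis has a concrete gap ($A_m\times B_r$ missing, Lemma~\ref{5l} aimed at the wrong family) that must be fixed for the proof to go through.
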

\begin{proof}
Recall the 4 conditions given in Theorem $\ref{thm:DFM}$ which were sufficient to show that a matching is a \DFM\ . So, now, we will show that $M^*_d$ satisfies all the 4 conditions given in Theorem $\ref{thm:DFM}$. Hence, $M^*_d$ is a \DFM.

Before applying Transformation $2$ the conditions 1 to 3 of Theorem \ref{thm:DFM} were already satisfied in the matching $M$ because the levelling algorithm for \mPFM\  assigns levels to the vertices in such a manner that the conditions 1 to 3 gets satisfied (recall the three phases of an iteration of the Levelling Algorithm for \mPFM\ , each phase ensures each condition from 1 to 3 of Theorem \ref{thm:DFM} gets satisfied). But after applying Transformation $2$ the matching $M_m$ changes to $M_d^*$ and the level of the vertices present in $A_m \cup B_m$ increases by at most 1 (Lemma \ref{6l}). So, if the conditions 1 to 3 of Theorem \ref{thm:DFM} are not satisfied in $M^*_d$ then it has to be because of the pairs present in $A_m \times B_d$ and $A_m \times B_r$. So, now we show that the pairs in $A_m \times B_d$ and $A_m \times B_r$ in the matching $M^*_d$ will also satisfy the conditions 1 to 3. Below the proofs are given only for the pairs present in $A_m \times B_d$. Proofs for the pairs present in $A_m \times B_r$ are similar to the proofs given for the pairs in $(A_m \times B_d)$.

\noindent\textbf{Condition $1$}: For a pair $(m,w) \in A_m \times B_d$ if $(m,w)$ is an edge in $G$ then from Lemma \ref{3l} (ii) we get that before applying Transformation $2$ if the level of $m$ is $i$ then $w$ is present at an level $i$ or higher. Now, after applying Transformation $2$ the level of $m$ either remains $i$ or becomes $(i+1)$. So, if the level of $m$ remains $i$ after applying Transformation $2$ then the Condition $1$ of Theorem \ref{thm:DFM} is satisfied because $w$ is present either at a level higher than $i$ or at level $i$ in that case the edge $(m,w)$ is a $(-1,-1)$ edge (Lemma \ref{3l} (iii)). Now, if the level of $m$ becomes $(i+1)$ after applying Transformation $2$ then we have three cases either $w$ is at level higher than $(i+1)$ or $w$ is at level $i$ in that case $(m,w)$ is a $(-1,-1)$ edge (Lemma \ref{3l} (iii)) or $w$ is at level $(i+1)$ in that case $(m,w)$ is not a $(+1,+1)$ edge (Lemma \ref{3l} (i)). Hence, there is no $(+1,+1)$ edge in between a pair $(m,w) \in A_m \times B_d$ in the matching $M^*_d$ where $m$ is at level $i$ and $w$ is at level $j$ such that $j \leq i$. Hence, Condition $1$ is satisfied.
\\\\
\textbf{Condition $2$}: For a pair $(m,w) \in A_m \times B_d$ if $(m,w)$ is an edge in $G$ then from Lemma \ref{3l} (ii) we get that before applying Transformation $2$ if the level of $m$ is $i$ then $w$ is present at an level $i$ or higher. Hence, no man at level $i$ is adjacent to a woman at level $(i-1)$. Hence, Condition $2$ is vacuously satisfied when $m$ remains at level $i$ after applying Transformation $2$. If level of $m$ changes to $(i+1)$ after applying Transformation $2$ and if level of $w$ is $i$ then due to Lemma \ref{3l} (iii) we get that $(m,w)$ is a $(-1,-1)$ edge. Hence, Condition $2$ is satisfied.
\\\\
\textbf{Condition $3$}:  For a pair $(m,w) \in A_m \times B_d$ if $(m,w)$ is an edge in $G$ then from lemma \ref{3l} (i) we get that before applying Transformation $2$ if $m$ is at level $i$ then level of $w$ is $i$ or more. If level of $m$ remains $i$ after applying Transformation $2$ then Condition $3$ of Theorem \ref{thm:DFM} is satisfied because no woman in $B_d$ adjacent to $m$ is present at level $(i-2)$ or below. If level of $m$ changes to $(i+1)$ then also Condition $3$ of Theorem \ref{thm:DFM} is satisfied because no woman in $B_d$ adjacent to $m$ is present at level $(i-1)$ or below.
\\\\
\textbf{Condition $4$}: We know that the set of unmatched men are non-critical men. While applying Transformation $2$ if a non-critical man $m$ exhausts its preference list while proposing and remains unmatched at level 0 then we assign level 1 to $m$ and $m$ starts proposing again from the beginning of his preference list. Now if $m$ again exhausts his preference list while proposing and remains unmatched at level 1 then $m$ remains unmatched in the matching $M^*_d$. Hence, if a man $m$ is unmatched in $M^*_d$ it has to be at level 1. Hence Condition $4$ of Theorem \ref{thm:DFM} is satisfied.
\\\\
Since all the conditions of Theorem \ref{thm:DFM} are satisfied. Hence, $M^*_d$ is a \DFM\ .
\end{proof}

\bibliography{references}
\appendix
\newpage
\section{Proofs from Section \ref{sec:algo-mPFM-DFM}}\label{A}

\subsection{Reduction for \DFM}
The high level idea to find a \DFM\ in an \SMPM\ is exactly the same as finding the \mPFM\ . The results given in this section are similar to the results given for \mPFM\ . 
At first we reduce our \SMPM\ $G = (A \cup B)$ to a stable marriage instance (\SMtwo) $G'' = (A'' \cup B'',E'')$. Then we show that every stable matching in $G''$ can be mapped to a \DFM\ in $G$. 
We also show surjectivity of this map. 

The reduction for \DFM\ is very similar to that for \mPFM. The only difference with the previous reduction is the following. In the reduction for \mPFM,
$G'$ has only one copy of a man $m\in A\setminus P$. In this reduction, such men have two copies in the reduced instance $G''$, and there is one dummy
woman in $G''$ corresponding to such a man. The number of levels in this reduction is one more than that for \mPFM.

\subsubsection{Reduction}
\begin{itemize}
\item \textbf{The set $A''$: }
Let $\ell$ be the number of men in $G$ who have privileges i.e. $\ell=|P|$. For a man $m \in P$, $A''$ has $(\ell+2)$ copies of $m$, denoted by $m^0,m^1,...,m^{\ell+1}$. Let $A''_m$ denote the set of copies of $m$ in $A''$. We refer to $m^i \in A''$ as the level $i$ copy of $m \in A$. For a man $m\in A \setminus P$, $A''$ has only two copies of $m$, thus $A''_m=\{m^0, m^1\}$. Now, $A''= \bigcup\limits_{m \in A} A''_m $.

\item \textbf{The set $B''$: }
All the women in $B$ are present in $B''$. Additionally,
corresponding to a man $m \in P$, $B''$ contains $\ell + 1$ dummy women $d_m^1,d_m^2,d_m^3,...,d_m^{\ell+1}$, denote the set of these women as $D_m$. We call the dummy woman $d_m^i$ as the level $i$ dummy woman for $m$. There is one dummy woman $d_m^1$ corresponding to a man $m$ in $A\setminus P$. Now, $B'' = B \cup \bigcup\limits_{m \in A} D_m$. 
\end{itemize}
We denote by $\langle list_m \rangle$ and $\langle list_w \rangle$ the preference lists of $m \in A$ and $w \in B$ respectively. Let $\langle list_w \rangle ^ i$ be the list of level $i$ copies of men present in $\langle list_w \rangle$. Note that, for $m\in A\setminus P$ present in \listw, 
the level $i$ copy of $m$ for $i \geq 2$ is not present in $A''$. Then $\langle list_w \rangle ^ i$ does not contain the level $i$ copy of that man for $i \geq 2$. We now describe the preference lists in $G''$. Here $\circ$ denotes the concatenation of two lists. 

\begin{center}
		\begin{tabular}{lll}
		$m\in A\setminus P$: & &\\
		$m^0$ &: & $\langle list_m \rangle\circ d_m^1$\\	
		$m^1$ &: & $d_m^1\circ \langle list_m \rangle$\\	
		$m\in P$, $i \in \{0,\ell+1\}$: & & \\
		$m^0$ &:& $\langle list_m \rangle$, $d_m^1$\\
		$m^i$ &: & $d_m^i$, $\langle list_m \rangle$, $d_m^{i+1}$, $i \in \{1,\ell\}$)\\
		$m^{\ell+1}$ &:& $d_m^{\ell+1}$, $\langle list_m \rangle$\\
		$w$ s.t. $w\in B$ &: & $\langle list_w \rangle ^ {\ell+1} \circ \langle list_w \rangle ^ {\ell} \circ\ldots\circ \langle list_w \rangle^0$ \\
		$d^i_m$,$i \in \{1, \ell+1\}$ &:& $m^{i-1}$, $m^i$ \\
		\end{tabular}
	\end{center}
We refer to the instance $G''$ as \SMtwo.


\subsubsection{Correctness of the reduction}
After constructing the instance \SMtwo, our goal is to map a stable matching $M''$ in \SMtwo\ to a \DFM\ $M$ in $G$. The mapping
is a simple and natural one: For a man $m \in A$, define $M(m) = B \cap \bigcup\limits_{m \in A} M''(m^i)$. Note that $M(m)$ denotes the set of non-dummy women who are matched to any copy of $m$ in $A''$. In the rest of this section, the term {\em image} always refers to the image under this map. A man $m \in A$ in the matching $M$ is unmatched if none of its copies in $A''$ gets matched to a non-dummy woman in the matching $M''$. It remains to prove that $M$ is a \DFM. This involves showing that $M$ is a matching, it is feasible, popular, and all the matchings larger than $M$ does not get strictly more votes than $M$. 

To show that $M$ is a matching in $G$, we need to prove 
the following theorem.
The proof uses the fact that there are $\ell+2$ copies of a man $m \in P$, and $\ell + 1$
dummy women corresponding to that man $m \in P$, and each dummy woman is the first choice of some copy of $m$.


\begin{theorem}\label{thm:DFM-matching}
In any stable matching $M''$ in the \SMtwo\ instance, at most one copy of a man $m \in A$ gets matched to a non-dummy woman.    
\end{theorem}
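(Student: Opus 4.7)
The plan is to mimic the proof of Lemma~\ref{thm:mPFM-matching} with the slightly modified numerology of $G''$. First I would prove the analogue of the observation preceding Lemma~\ref{thm:mPFM-matching}: every dummy woman $d_m^i \in B''$ is matched in any stable matching $M''$ of $G''$. The argument is the same in both reductions because, for each dummy woman $d_m^i$, at least one of its two neighbours, namely the copy $m^i$ (or $m^1$ in the $m \in A \setminus P$ case), has $d_m^i$ as its first choice. Hence, if $d_m^i$ were unmatched in $M''$, the edge $(m^i, d_m^i)$ would form a $(+1,+1)$ pair with respect to $M''$, contradicting stability.

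Next I would use the fact that the preference list of $d_m^i$ contains only two candidates, $m^{i-1}$ and $m^i$, so $d_m^i$ must be matched to one of them. Suppose, for contradiction, some copy $m^i$ of a man $m$ is matched to a non-dummy woman in $M''$. Then $d_m^i$ cannot be matched to $m^i$, and therefore $d_m^i$ is forced to be matched to $m^{i-1}$. Similarly, $d_m^{i+1}$ cannot be matched to $m^i$, so $d_m^{i+1}$ is forced to be matched to $m^{i+1}$, rendering $m^{i+1}$ matched to a dummy woman.

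The core of the proof is then a two-sided induction on the level indices. Going upwards from $i$, once $m^{i+1}$ is shown to be matched to $d_m^{i+1}$, the same argument applied to $d_m^{i+2}$ forces $m^{i+2}$ to be matched to $d_m^{i+2}$, and so on up to the top level ($\ell+1$ for $m \in P$, or $1$ for $m \in A\setminus P$). Going downwards, once $d_m^i$ is matched to $m^{i-1}$, the woman $d_m^{i-1}$ is forced to be matched to $m^{i-2}$, and so on down to level $0$. Consequently every copy of $m$ other than $m^i$ is matched to a dummy woman in $M''$, and in particular no second copy of $m$ can be matched to a non-dummy woman.

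The only obstacle I anticipate is handling the two cases $m \in P$ and $m \in A\setminus P$ uniformly, since the number of copies and the range of the dummy-woman indices differ; this is easily dispatched by noting that in either case the dummy women $d_m^i$ form a chain that links consecutive copies of $m$, and the same blocking-pair argument from step one yields the forced matching to $d_m^i$ being $m^{i-1}$ or $m^i$. The feasibility of the resulting image $M$ as a matching in $G$ then follows immediately, since at most one non-dummy neighbour of $m$ in $G''$ is used.
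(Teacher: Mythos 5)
Your proof is correct and follows essentially the same approach as the paper: the same blocking-pair argument (each dummy woman is the first choice of the higher-indexed copy, and has only two neighbours) propagated inductively along the chain of copies, exactly as in Lemma~\ref{thm:mPFM-matching} and the paper's own proof of this theorem. The only cosmetic difference is that the paper proves only the upward induction (which already suffices, taking $m^i$ to be the lowest copy matched to a non-dummy woman) and defers the downward chain to Corollaries~\ref{3.1.1} and~\ref{3.1.2}, whereas you establish both directions at once.
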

\begin{proof}
Suppose $m^i$ be the copy of the man $m \in A$ which gets matched to a non-dummy woman. We prove using induction on $j$ where $j > i$ that $m^j$ does not get matched to a non-dummy woman and it gets matched to the dummy woman $d_m^j$.\\ 
\underline{Base Case}: Suppose $j = (i+1)$. According to the theorem $m^i$ gets matched to a non dummy woman. Now, if $m^j$ is not matched to $d_m^j$ then $d_m^j$ remains unmatched in $M''$ because $m^i$ and $m^j$ (note that $j = (i+1)$) are the only vertices adjacent to $d_m^j$. In that case the edge $(m^j,d_m^j)$ forms a $(+1,+1)$ edge in $M''$ because $m^j$ prefers $d_m^j$ the most. This is a contradiction as $M''$ is a stable matching. Hence, $m^j$ is matched to $d_m^j$.\\
\underline{Inductive Hypothesis}: Assume that for  $j = k$ (where $k > i$) $m^j$ gets matched to $d_m^j$\\
\underline{Inductive Step}: Now for $j = (k+1)$ we prove that $m^j$ gets matched to the dummy woman $d_m^j$. According to the inductive hypothesis $m^{j-1}$ gets matched to the dummy woman $d_m^{j-1}$ (note that $j = (k+1)$ here). Now, if $m^j$ is not matched to $d_m^j$ then $d_m^j$ remains unmatched in $M''$ because $m^{j-1}$ and $m^j$ are the only vertices adjacent to $d_m^j$. In that case the edge $(m^j,d_m^j)$ forms a $(+1,+1)$ edge in $M''$ because $m^j$ prefers $d_m^j$ the most. This is a contradiction as $M''$ is a stable matching. Hence, $m^j$ is matched to $d_m^j$.

Hence, by induction we get that all the copies of $m$ whose levels are greater than $i$ cannot get matched to a non-dummy woman and gets matched to the first dummy woman in his preference list. Since $i$ can be anything in the range $[0,\ell + 1]$,hence we proved that at most one copy of a man $m \in A$ gets matched to a non dummy woman.
\end{proof}

Theorem \ref{thm:DFM-matching} shows that $M$ is a matching. Now, we prove the following two corollaries.


\begin{corollary}\label{3.1.1}
If in a stable matching in the \SMtwo\ instance $m^i$ (where $i \in [0,\ell+1]$) is matched to a non dummy woman then $m^j$ (where $j > i)$ is matched to the dummy woman $d_m^j$ which is the first dummy woman in the preference list of $m^j$. 
\end{corollary}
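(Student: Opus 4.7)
The plan is to observe that this corollary is essentially extracted from the inductive argument just used to prove Theorem~\ref{thm:DFM-matching}, together with an immediate verification from the construction of $G''$. So there is no new combinatorial content to establish; the work is to point at the right pieces and check the preference lists.

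More concretely, I would first invoke the induction in the proof of Theorem~\ref{thm:DFM-matching}: the inductive step there showed that whenever $m^i$ is matched in $M''$ to a non-dummy woman, for every $j > i$ the dummy woman $d_m^j$ must be matched in $M''$, and since her only neighbours are $m^{j-1}$ and $m^j$, and $m^{j-1}$ is (inductively) already matched to $d_m^{j-1}$ (or to a non-dummy woman when $j-1 = i$), she is forced to be matched to $m^j$. This is precisely the conclusion ``$m^j$ is matched to $d_m^j$'' that the corollary asserts.

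Next I would verify the ``first dummy woman in the preference list'' clause by simply inspecting the preference lists in the reduction. For $m \in P$ and $1 \leq j \leq \ell$, the list of $m^j$ is $d_m^j, \langle list_m\rangle, d_m^{j+1}$, so $d_m^j$ is the first dummy among the two that appear; for $j = \ell+1$, the list is $d_m^{\ell+1}, \langle list_m\rangle$, where $d_m^{\ell+1}$ is the unique dummy woman; for $m \in A \setminus P$ the only relevant case is $j = 1$ (since $m$ has only two copies), and $m^1$'s list is $d_m^1 \circ \langle list_m\rangle$, again with $d_m^1$ as the first (and only) dummy woman. Since $j > i \geq 0$, we never need to examine $m^0$, so all cases are covered.

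There is no real obstacle here: the only thing to watch is bookkeeping, namely that the ranges $j \in \{0,\ldots,\ell+1\}$ for $m \in P$ and $j \in \{0,1\}$ for $m \in A\setminus P$ are handled uniformly, and that the induction base $j = i+1$ remains valid when $i = \ell+1$ (in which case the claim is vacuous) or when $m \in A\setminus P$ and $i = 0$ (in which case $j$ can only be $1$). Hence the corollary follows directly from Theorem~\ref{thm:DFM-matching}.
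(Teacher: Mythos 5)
Your proposal is correct and matches the paper's own treatment: the paper proves this corollary simply by pointing back to the induction in the proof of Theorem~\ref{thm:DFM-matching}, which is exactly the argument you reconstruct (with the added, harmless verification of the ``first dummy woman'' clause from the preference lists).
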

\begin{proof}
The proof of this would be similar to the inductive proof done in the Theorem \ref{thm:DFM-matching} \end{proof}


\begin{corollary}\label{3.1.2}
If in a stable matching in the \SMtwo\ instance $m^i$ (where $i \in [0,\ell+1]$) is matched to a non dummy woman then $m^j$ (where $j < i)$ is matched to the dummy woman $d_m^{j+1}$ which is the last dummy woman in the preference list of $m^j$. 
\end{corollary}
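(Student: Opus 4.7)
The plan is to mirror the inductive structure used in the proof of Corollary~\ref{3.1.1}, but carry out a \emph{downward} induction on $j$, starting at $j = i-1$ and descending to $j = 0$. The intuition is entirely symmetric: in Corollary~\ref{3.1.1} one exploits the fact that $d_m^j$ sits at the top of $m^j$'s preference list to force upward matches, whereas here the last dummy $d_m^{j+1}$ sits at the bottom of $m^j$'s list, and we exploit this by showing that if any intermediate level were ``skipped,'' the top-ranked dummy of some $m^k$ would be idle and would form a blocking pair with $m^k$.

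First, I would handle the base case $j = i-1$ (assuming $i \geq 1$; if $i=0$ the statement is vacuous). Suppose for contradiction that $m^{i-1}$ is not matched to $d_m^i$. Because $d_m^i$'s preference list consists solely of $m^{i-1}$ and $m^i$, and $m^i$ is matched to a non-dummy woman, $d_m^i$ must be unmatched in $M''$. But $m^i$'s preference list begins with $d_m^i$, so $m^i$ strictly prefers $d_m^i$ over his actual partner (a non-dummy woman lying later in his list), while $d_m^i$ prefers any of $m^{i-1}, m^i$ to being unmatched. Hence $(m^i, d_m^i)$ is a blocking pair, contradicting stability of $M''$.

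For the inductive step, assume the claim has been established for all levels $k, k+1, \ldots, i-1$, i.e., $m^k$ is matched to $d_m^{k+1}$, and show it for level $k-1$, where $1 \leq k \leq i-1$. Suppose for contradiction that $m^{k-1}$ is not matched to $d_m^k$. Again $d_m^k$'s only neighbors are $m^{k-1}$ and $m^k$; by the inductive hypothesis $m^k$ is matched to $d_m^{k+1}$; so $d_m^k$ is unmatched. Now $m^k$'s preference list (for $1 \leq k \leq \ell$) is $d_m^k, \langle list_m\rangle, d_m^{k+1}$, so the partner $d_m^{k+1}$ forced on $m^k$ by the inductive hypothesis is the \emph{worst} possible choice for $m^k$, while $d_m^k$ is the best. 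Thus $(m^k, d_m^k)$ is a blocking pair, contradicting stability. This forces $m^{k-1}$ to be matched to $d_m^k$, completing the induction.

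The one place requiring care will be the edge cases of the preference-list definitions: namely $m \in A \setminus P$ (only two copies, one dummy) and the extremal copies $m^0$ and $m^{\ell+1}$. In all these cases one checks that the relevant blocking pair $(m^k, d_m^k)$ still lies in $E''$ and that $m^k$'s preference list still has $d_m^k$ strictly above his induction-forced partner, so the same argument goes through uniformly. The main obstacle, if any, is not conceptual but bookkeeping: ensuring that throughout the induction the ``pivot'' man $m^k$ we use to witness the blocking pair actually exists and that $d_m^k$'s neighborhood is correctly accounted for; once that is verified the induction itself is short.
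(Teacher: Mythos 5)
Your proof is correct, but it follows a genuinely different (downward) route from the paper's. You run a descending induction from $j=i-1$ to $j=0$, at each step showing that if $m^{k-1}$ did not take $d_m^k$ then $d_m^k$ would be unmatched (her only neighbours being $m^{k-1}$ and $m^k$, and $m^k$ already holding $d_m^{k+1}$ or the non-dummy woman), whence $(m^k,d_m^k)$ blocks because $d_m^k$ heads $m^k$'s list; this mirrors the paper's proof of Corollary~\ref{3.1.1}. The paper instead argues upward from level $0$: by Theorem~\ref{thm:DFM-matching} no copy $m^j$ with $j<i$ can take a non-dummy woman, $m^0$'s list contains only the dummy $d_m^1$, so $m^0$ must take $d_m^1$, which then forces $m^1$ onto $d_m^2$, and so on up to $m^{i-1}$ taking $d_m^i$. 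Your version has the advantage of being self-contained -- it needs only stability and the two-element preference lists of the dummies, and it explicitly rules out the possibility that some lower copy is left unmatched, a point the paper's chain leaves implicit (it silently excludes unmatched copies when asserting each $m^j$ ``has to be matched'' to the next free dummy); the paper's version is shorter once Theorem~\ref{thm:DFM-matching} is available. Your attention to the edge cases ($m\in A\setminus P$ with two copies, the extremal copies $m^0$ and $m^{\ell+1}$) is the right bookkeeping and indeed goes through, since for every $k\ge 1$ the copy $m^k$ has $d_m^k$ first on its list and the purported partner strictly below it.
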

\begin{proof}
If $i = 0$ then the statement is vacuously true. So, if $i \neq 0$ then we get $m^0$ has to be matched to a dummy woman(from Theorem \ref{thm:DFM-matching}). Now since there is no dummy woman present in the beginning of the preference list of $m^0$, so $m^0$ has to be matched with $d_m^1$ which is the last dummy woman in the preference list of $m^0$. Again since $m^0$ is matched to $d_m^1$ so $m^1$ cannot be matched to $d_m^1$, it has to be matched with $d_m^2$ which is again the last dummy woman in the preference list of $m^1$. This will continue up to $m^i$ which gets matched to a non dummy woman.    
\end{proof}


\begin{theorem}\label{thm:DFM}
A matching $M$ in $G$ that is an image of a stable matching $M''$ in $G''$ is a \DFM\ if it satisfies the following conditions. Moreover,
every such matching satisfies the conditions.
\begin{enumerate}
    \item All $(+1,+1)$ edges are present in between a man at level $i$ and a woman $w$ at level j where $j > i$.
    \item All edges between a man at level $i$ and a woman at level $(i-1)$ are $(-1,-1)$ edges.
    \item No edge is present in between a man at level $i$ and a woman at level $j$ where $j \leq (i-2)$.
    \item All unmatched men are at level 1.
\end{enumerate}
\end{theorem}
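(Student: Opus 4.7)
The plan is to mirror the structure of the proof of Theorem~\ref{thm:mPFM}, adapting it to the stronger dominance requirement. The proof splits into two directions: sufficiency (any matching $M$ satisfying the four conditions is a \DFM) and necessity (any image of a stable matching $M''$ in $G''$ satisfies all four conditions). The key structural difference from the \mPFM\ case is that unmatched men in $M$ now sit at level $1$ rather than level $0$, which reflects the two copies $m^0, m^1$ that $G''$ provides for every non-critical man.

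For sufficiency I first establish feasibility by the same level-chasing argument used for \mPFM: assuming some critical man $m$ is unmatched in $M$, I trace an alternating path from $m$ in $M \oplus N$ for any feasible $N$, note that by Conditions~2 and 3 the path descends at most one level per man-to-woman step, and conclude that more than $|C|$ critical men would be forced to exist. Popularity is then shown by ruling out the three standard obstructions in $M \oplus N$, namely an alternating cycle with excess $(+1,+1)$ edges, and alternating paths of the two types described in the proof of Theorem~\ref{thm:mPFM}; in each case Condition~1 forces every $(+1,+1)$ edge to ascend in level, while Conditions~2 and 3 force the corresponding descent via $(-1,-1)$ edges, balancing the count.

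For the dominance property, which is the new content of this theorem, I take any feasible $N$ with $|N|>|M|$ and examine $M \oplus N$. Since $|N|>|M|$, some component $\rho$ is an alternating path both of whose endpoints are unmatched in $M$. By Condition~4 one endpoint is an unmatched man $m$ at level $1$ and the other is an unmatched woman $w$ at level $0$. Starting from $w$ at level $0$ and walking along $\rho$, the path can rise to some peak level $j \ge 1$ using at most $j$ many $(+1,+1)$ edges (Condition~1) and must descend to level $1$ using exactly $j-1$ many $(-1,-1)$ edges (Conditions~2 and 3). Hence $\rho$ contributes at most one more $(+1,+1)$ edge than $(-1,-1)$ edge, while $M$ additionally collects the two endpoint votes of $m$ and $w$. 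Adding this to the popularity argument applied to all remaining components of $M \oplus N$ yields $\phi(M,N) > \phi(N,M)$ strictly, which is exactly dominance.

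For the necessity direction I would prove each condition by contradiction, showing that any violation produces a blocking pair in $M''$. Violations of Conditions~1 and 2 translate directly to $(+1,+1)$ edges between appropriate level-$i$ copies in $G''$, using the fact that $w$'s preference list in $G''$ ranks higher-level copies ahead of lower-level ones. A violation of Condition~3 uses Corollary~\ref{3.1.2} to locate the dummy-matched copy $m^{i-1}$, which would then block with $w$. For Condition~4, the fact that non-critical men have only two copies $m^0, m^1$ in $G''$, together with the dummy woman $d_m^1$ sitting at the end of $m^0$'s list and the start of $m^1$'s list, forces any non-critical man unmatched in $M$ to be at level $1$ --- otherwise $m^0$ or $d_m^1$ would participate in a blocking pair. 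The main obstacle I anticipate is the dominance count: tracking precisely how the asymmetry between the level-$0$ woman endpoint and the level-$1$ man endpoint yields strict rather than mere equality, especially in the degenerate case where the peak level of $\rho$ equals $1$ and no $(+1,+1)$ edge occurs on that component at all.
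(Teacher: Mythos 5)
Your feasibility, popularity, and necessity arguments follow the paper's proof essentially step for step (in particular, the necessity of Condition~4 via the two copies $m^0,m^1$ and the dummy woman $d_m^1$ is exactly the paper's argument), but the dominance count --- which you rightly identify as the new content of this theorem --- is broken, and in the very place you flag as your ``main obstacle''. You have inverted which label type performs which level move. By Condition~1 a $(+1,+1)$ edge joins a man at level $i$ to a woman at a \emph{strictly higher} level, and by Conditions~2 and~3 every non-matching edge that drops a level (man at level $i$, woman at level $i-1$) is a $(-1,-1)$ edge; so along the augmenting path the steps that ascend away from the unmatched woman are $(-1,-1)$ edges, not $(+1,+1)$ edges. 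Since the path runs from the unmatched man at level $1$ to the unmatched woman at level $0$, the total level change over its non-matching edges is $-1$; each $(+1,+1)$ edge contributes at least $+1$ and each level-dropping edge contributes exactly $-1$ and is $(-1,-1)$, so the path has at least one \emph{more} $(-1,-1)$ edge than $(+1,+1)$ edges (the paper states this as: at most $j-1$ many $(+1,+1)$ edges versus $j$ many $(-1,-1)$ edges, where $j$ is the peak man level). Your inequality goes the other way (``at most one more $(+1,+1)$ edge than $(-1,-1)$ edge''), which is far too weak: it would not even certify that $M$ ties with $M\oplus\rho$ on that component.

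The second error is the endpoint accounting: $m$ and $w$ are unmatched in $M$ and matched in $N$, so they prefer $N$ --- under the paper's convention they label their incident $N$-edges $+1$ --- and $M$ does not ``collect'' their votes. With your edge count and the corrected endpoint votes, the component could favour $N$ by up to four votes, so strictness cannot be salvaged from your argument. With the correct inequality above, the component favours $M$ by at least two votes (the $\pm 1$ labels of the endpoints are already absorbed into the $(+1,+1)$/mixed edge counts), and adding the popularity bound $\phi(M\oplus\sigma,M)\le\phi(M,M\oplus\sigma)$ on every other component of $M\oplus N$ yields $\phi(N,M)<\phi(M,N)$, which is precisely how the paper concludes. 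Your worried degenerate case (a component with no $(+1,+1)$ edge at all) is then harmless: such a path still contains at least one $(-1,-1)$ edge, so $M$ wins it outright.
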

Here we need to show that if $M$ satisfies the four conditions, then $M$ is a \DFM. So, to prove this we show at first $M$ is a \PFM\ and then we show that, for all feasible matchings $N$ such that $|N| > |M|$, we have $\phi(N,M) < \phi(M,N)$.
\\\\
\textbf{$\mathbf{M}$ is a \PFM\ }: To show $M$ is a \PFM\ at first we need to show that $M$ is a feasible matching and then we need to prove that for any other feasible matching $N$ in $G$ we have $\phi(N,M) \leq \phi(M,N)$.

So, now we prove $M$ is a feasible matching. The arguments are similar to the arguments given in the case of \mPFM\ .

Suppose $M$ is not a feasible matching and there exist a feasible matching $N$ in that marriage instance with critical men. Recall that we are only concerned with those marriage instance with critical men which has at least one feasible matching. Suppose $m$ be a critical man who is unmatched in $M$. So, the graph $M \oplus N$ must contain an alternating path $\rho$ which starts from $m$. Now $\rho$ can end in a man $m'$ or in a woman $w'$. \underline{CASE 1}: $\rho$ ends in $m'$: Let $\rho = (m,w,m_1,w_1,....,m)$. Since $\rho$ ends in $m'$, hence $m'$ must be unmatched in $N$. Since $N$ is a feasible matching $m'$ must be a non critical man and hence will be at level 0 or at level 1. Since $m$ is unmatched in $M$ it has to be in the level $\ell+1$ otherwise if $m$ is at level $i$ where $i < (\ell+1)$ then $(m^i,d_m^{i+1})$ would be a $(+1,+1)$ edge in $M''$ because $m^i$ is unmatched in $M''$ and $d_m^{i+1}$ prefers $m^i$ the most in $G''$. Again no woman $w$ which is adjacent to $m$ can be at level $\ell$ or below because then $(m^{\ell+1},w)$ would form a $(+1,+1)$ edge in $M''$ as $m^{\ell+1}$ is unmatched and $w$ prefers $m^{\ell+1}$ more than her matched partner which is at level $\ell$ or below. Hence, in $\rho$, $w$ is at level $\ell+1$ again $M(w) = m_1$ is also at level $\ell+1$ because the level of a woman and her matched partner are same. Now, $w_1$ cannot be at level strictly less than $\ell$ due to Condition $3$ of Theorem \ref{thm:DFM}. Hence the alternating path $\rho$ can go only one level down that is from a man at level $i$ to a woman at level $i-1$. Note that all the men who are at level greater than 1 are critical men because there is no copy of a non-critical man of level greater than 1 in $G''$. Since $\rho$ can go only one level down, hence there must exist at least one critical man at each level from $2$ to $\ell$ and there are at least two critical men ($m$ and $m_1$) at level $\ell+1$. Hence, the number of critical men in $G$ is at least $\ell+1$. This is a contradiction because we know the number of critical men in $G$ is $\ell$. \underline{CASE $2$}: $\rho$ ends in $w'$. Since $\rho$ ends in $w'$ it has to unmatched in $M$ and thus the level of $w$ is 0 as the level of each unmatched woman is given 0. Hence $\rho$ starts from a man at level $\ell$ and ends at level 0. Since $\rho$ can only go one level down, hence using the same arguments as used in case 1 we get that there are at least $\ell+1$ critical men in $G$. This is a contradiction because we know the number of critical men in $G$ is $\ell$. Hence $M$ is a feasible matching.

Now, we prove $M$ is a \PFM.

\textbf{$\mathbf{M}$ is a \PFM}: Proof of this part is exactly same as the proof given for theorem \ref{thm:mPFM}.
\\\\
\textbf{$\mathbf{M}$ is a \DFM}: Now we show that for any feasible matching $N$ such that $|N| > |M|$ we have $\phi(N,M) < \phi(M,N)$. We take the graph $M \oplus N$, which is the disjoint union of alternating paths and cycles. There is no alternating path or cycle $\rho$ in $M \oplus N$ such that $\phi((M \oplus \rho),M) > \phi(M,(M \oplus \rho))$ otherwise $M$ is not a \PFM\. So, now we need to show an alternating path or cycle in $M \oplus N$ such that $\phi((M \oplus \rho),M) < \phi(M,(M \oplus \rho))$ then only we can say $\phi(N,M) < \phi(M,N)$. Now, since $|N| > |M|$  there must exist an alternating path which starts from a man $m_1$ unmatched in $M$ and ends in a woman $w$ unmatched in $M$. Since $m_1$ is unmatched in $M$ it is at level 1 due to Condition $4$ of theorem \ref{thm:DFM}. Suppose $\rho = (m_1,w_1,m_2,w_2,m_3,w_3...,w)$. The level of $w$ is 0 as all unmatched women in $M$ are defined to be at level 0. Let $j$ be the highest level of a man present in $\rho$. Note that the edges $(m_i,w_i)$ are all edges present in $N$. Since $m_1$ is at level 1 and the highest level of a man in $\rho$ is $j$, hence we would have at most $(j-1)$ $(+1,+1)$ edges from $m_1$ to the $j^{th}$ level man as due to Condition $1$ of theorem \ref{thm:DFM} we get that $(+1,+1)$ edges are only present in between a lower level man and a higher level woman. Now, from Condition $3$ of theorem \ref{thm:DFM} we get that $\rho$ can go only one level down. Due to condition $2$ we get that while going $\rho$ can only take $(-1,-1)$ edges. Since level of $w$ is 0, hence $\rho$ must have $j$ $(-1,-1)$ edges. Since $j \geq (j-1)$, hence the number of $(-1,-1)$ edges is strictly greater than the number of $(+1,+1)$ edges in $\rho$. Hence,  $\phi((M \oplus \rho),M) < \phi(M,(M \oplus \rho))$. Hence, $M$ is a \DFM.

Now we show that any $M$ that is an image of a stable matching $M''$ in $G''$ satisfies all the four conditions.
\underline{Condition $1$}: Suppose there is $(+1,+1)$ edge in between a man $m$ at level $i$ and woman $w$ at level $j$ such that $j \leq i$ in the matching $M$. Hence $m$ prefers $w$ more than his matched partner in $M$. Now, $M''(m^i) = M(m)$ and since the preference list of $m^i$ in  $G''$ is same as the preference list of $m$ in the \SMPM\ (except the dummy women in the beginning and end of the preference list of $m^i$), $m^i$ prefers $w$ more than $M''(m^i)$. So, in $M''$ the edge $(m^i,w)$ will be a $(+1,+1)$ edge because $m^i$ prefers $w$ more than $M''(m^i)$ and $w$ prefers $m^i$ more than $M''(w)$ because her matched partner is at level $j$ and $j \leq i$. In the SM2 instance $w$ prefers a level $i$ man more than a level $j$ man if $i > j$ and if $i = j$ then $w$ prefers $m^i$ more than $M''(w)$ because $w$ prefers $m$ more than $M(w)$ in the matching $M$. This contradicts the fact that $M''$ is stable matching. Hence, $M$ satisfies Condition $1$.\\
\underline{Condition $2$}: Suppose there is a man $m$ at level $i$ which is adjacent to a woman at level $(i-1)$ but the edge $(m,w)$ is not labelled $(-1,-1)$. $(m,w)$ cannot be labelled $(+1,+1)$ due to Condition $1$. So, it has to be labelled $(+1,-1)$ and $(-1,+1)$. CASE 1: If $(m,w)$ is labelled $(+1,-1)$ then $m$ prefers $w$ more than $M(m)$. Hence $m^i$ prefers $w$ more than $M''(m^i)$ and $w$ prefers $m^i$ more than its matched partner in $M''$ which is the $(i-1)$ level copy of $M(w)$. Hence the edge $(m^i,w)$ is a $(+1,+1)$ edge in the matching $M''$. This contradicts the stability of $M''$. Case $2$: Now, if $(m,w)$ is labelled $(-1,+1)$ then $w$ prefers $m$ more than $M(w)$. Now since $m$ is at level $i$ so $m^i$ gets matched to a non dummy woman in the matching $M''$. So, from Corollary \ref{3.1.2} we get that $m^{i-1}$ is matched to the dummy woman $d_m^i$ which is present at the end of his preference list. In this the edge $(m^{i-1},w)$ would be labelled $(+1,+1)$ because $m^{i-1}$ would prefer $w$ more than its matched partner in $M''$ which is present at the last of his preference list and $w$ would prefer $m^{i-1}$ more than $M''(w)$, which is a $(i-1)$ level copy of $M(w)$ as $w$ prefers $m$ more than $M(w)$. This again contradicts that $M''$ is a stable matching. Hence $M$ satisfies Condition $2$.\\
\underline{Condition $3$}: Suppose Condition $3$ is not satisfied, then there is a man $m$, which at level $i$ is adjacent to a woman $w$ at level $j$ such that $j \leq (i-2)$. In this case the edge $(m^{i-1},w)$ would be a $(+1,+1)$ edge because $m^{i-1}$ prefers $w$ over its matched partner in $M''$ which is $d_m^i$ (Corollary \ref{3.1.2}) and $w$ prefers $m^{i-1}$ over $M''(w)$ which is a $(i-2)$ level copy of $M(w)$. This contradicts the fact that $M''$ is a stable matching. Hence, $M$ satisfies Condition $3$.\\
\underline{Condition $4$}: Suppose $M = g(M'')$ is  a feasible matching. So, if there are unmatched men in $M$ then they are the non critical men. Let, $m$ be an arbitrary unmatched man in $M$. Now, during the reduction from $M$ to $M''$ we made two copies of $m$  in $A''$, they are $m^0$ and $m^1$. Since $m$ is unmatched in $M$ we have that one of $m^0$ or $m^1$ is unmatched in $M''$ and the other would get matched to the dummy woman $d_m^1$. Now, if $m^0$ is unmatched in $M''$ then the edge $m^0,d_m^1$ would form a $(+1,+1)$ edge because in $m^0$ is unmatched and $d_m^1$ prefers $m^0$ the most. This contradicts the fact that $M''$ is a stable matching. Hence, $m^1$ is unmatched in $M''$ and thus the level of all unmatched men in $M$ is 1.
\\\\
Hence, any matching $M$ which is an image of a stable matching $M''$ in $G''$ is a \DFM.	
\\
The proof of Theorem \ref{thm:DFM} is similar to the proof of Theorem \ref{thm:mPFM}. Here we need to show that if a matching $M$ satisfies the four conditions given in theorem \ref{thm:DFM} then $M$ is a \DFM. So, to prove this we show at first $M$ is a \PFM\ and then we show that for all feasible matchings $N$ such that $|N| > |M|$ we have $\phi(N,M) < \phi(M,N)$.

\subsubsection{Surjectivity of the map}
\begin{theorem}\label{thm:DFM-surjectivity}
For every \DFM\ $M$ in $G$, there exists a stable matching $M''$ in $G''$ such that $M$ is the image of $M''$.
\end{theorem}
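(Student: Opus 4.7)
The plan is to mirror the argument used to establish Theorem~\ref{thm:mPFM-surjectivity}. Starting from a \DFM\ $M$ in $G$, I would first design a leveling algorithm for \DFM\ analogous to Algorithm~\ref{LAmPFM}. The key structural difference from the \mPFM\ case is that, by Condition~4 of Theorem~\ref{thm:DFM}, unmatched men in a \DFM\ sit at level $1$ rather than level $0$. So the initialization in the new leveling algorithm assigns level $0$ to every matched vertex and level $1$ to every unmatched (necessarily non-critical) man. The three inner while loops remain structurally the same: promote women whenever a $(+1,+1)$ edge ties a lower-level woman to a man at her level or above; promote women when a $(+1,-1)$ or $(-1,+1)$ edge crosses a level gap; and promote women when a $(-1,-1)$ edge goes from a higher man to a strictly lower woman violating Condition~3.

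Next I would establish an analogue of Theorem~\ref{thm:mPFM-term}: whenever a vertex is promoted to level $i$, there is an alternating witness path ending at that vertex that carries $(i - j)$ more $(+1,+1)$ edges than $(-1,-1)$ edges, whose other end is either a woman at level $j$ or an unmatched man at level $1$. The proof is by induction over the three phases within an iteration and over the iterations, exactly as for \mPFM, with the only change that the ``anchor'' for the induction is an unmatched non-critical man sitting at level $1$ instead of $0$. As a consequence one gets the analogue of Corollary~\ref{cor:levels}: no non-critical man can be promoted beyond level $1$, because otherwise the witness path $\rho$ would, after taking symmetric difference with $M$, yield a feasible matching strictly larger than $M$ and with at least as many votes in its favor; that matching would either contradict popularity of $M$ or contradict the defining property of a \DFM\ (namely, that $M$ defeats every strictly larger feasible matching). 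Analogously, no critical man reaches level $|C|+2$, because the witness path would contain one critical man at each level and force $|C|+1$ distinct critical men. This establishes termination and the claimed level bounds.

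Having fixed the levels, I construct the pre-image $M''$ in $G''$ canonically: if $m\in A$ is assigned level $i$, set $M''(m^i)=M(m)$, match $m^j$ for $j<i$ to the dummy $d_m^{j+1}$ (the last dummy in the preference list of $m^j$), match $m^j$ for $j>i$ to the dummy $d_m^j$ (the first dummy in the preference list of $m^j$), and leave unmatched in $M''$ the single copy $m^1$ of any unmatched non-critical man $m$ (its other copy $m^0$ being matched to $d_m^1$). All dummy women are matched in $M''$ by construction, and Theorem~\ref{thm:DFM-matching} shows this is a valid matching pattern in $G''$.

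Finally I would prove stability of $M''$ by arguing exactly as in Theorem~\ref{thm:mPFM-stability}. A dummy woman $d_m^i$ cannot be part of a blocking pair: her only neighbors are $m^{i-1}$ and $m^i$, and the construction assigns her to her top choice among these whenever she is not already paired with that choice. For a non-dummy blocking candidate $(a,b)$ with $M''(b)=m^i$, the vertex $b$ has no edge to any level $\geq i+2$ copy by termination of the leveling algorithm (Condition~3); $b$ prefers every level-$i$ copy over every strictly lower copy; a level-$i$ copy $a$ would make $(a,b)$ a $(+1,+1)$ edge in $G$, contradicting the fact that the leveling loop exited; and a level-$(i+1)$ copy $a$ would make $(a,b)$ a $(-1,-1)$ edge in $G$ by Condition~2, so $a$ does not prefer $b$ to $M(a)=M''(a)$. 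Hence no blocking pair exists in $G''$, and $M''$ is the desired stable pre-image.

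The main obstacle will be the level-bound step: one must show that the extra level (absent in the \mPFM\ reduction) is exactly what is needed so that the witness-path argument gives a \emph{strict} inequality $\phi(M\oplus\rho,M)>\phi(M,M\oplus\rho)$ whenever promotion would push a non-critical man past level $1$ or a critical man past level $|C|+1$. This is where the stronger ``dominance'' hypothesis on $M$ (as opposed to mere popularity and minimality) is used, and it requires a careful case analysis of whether the witness path ends at an unmatched non-critical man (contributing a size increase of one) or at a matched level-$0$ woman (giving a strictly smaller feasible matching, ruled out by dominance against larger matchings only after taking the appropriate complementary path).
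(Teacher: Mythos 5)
Your overall route is exactly the paper's: a leveling algorithm that mirrors Algorithm~\ref{LAmPFM} but initializes unmatched (non-critical) men at level $1$, a witness-path lemma analogous to Theorem~\ref{thm:mPFM-term}, level bounds (non-critical men at most level $1$, critical men at most level $|C|+1$), the canonical pre-image with $M''(m^i)=M(m)$ and the dummies filling the other copies, and a stability argument copied from Theorem~\ref{thm:mPFM-stability}. That skeleton is sound and is precisely what the paper does (its Algorithm~\ref{LADFM} plus the appendix theorem on levels).

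Where your write-up goes wrong is in the step you yourself single out as the crux. For a non-critical man $m$ at level $i\ge 2$, the witness path $\rho$ does \emph{not} yield a matching strictly larger than $M$: if its other end is a matched woman at level $0$, then $M\oplus\rho$ has size $|M|-1$, and if its other end is an unmatched man (at level $1$ in this leveling), then $M\oplus\rho$ has size exactly $|M|$; in both cases the vote surplus is at least $2(i-1)\ge 2$, so $M\oplus\rho$ is \emph{strictly} more popular and plain popularity of $M$ already gives the contradiction --- no appeal to dominance and no ``complementary path'' is needed there, and this is how the paper argues. The place where the dominance hypothesis is genuinely needed is different, and your stability sketch (like the paper's terse appendix) leaves it implicit: you must rule out a man at level $i\ge 1$ having a neighbour $w$ that is \emph{unmatched} in $M$ (otherwise the copies $m^j$, $j<i$, matched to their last-choice dummies would block with $w$ in $G''$, and Phase~2 of the leveling algorithm would be asked to promote an unmatched woman). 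When the witness path for $m$ is anchored at an unmatched man and $i=1$, extending it by the edge $(m,w)$ gives a feasible matching of size $|M|+1$ that only \emph{ties} with $M$; this is exactly the \SIAP-type configuration that a popular-but-not-dominant matching admits, and it is excluded only by the \DFM\ property, not by popularity. Your proof should state and use this explicitly; otherwise the stability case of an unmatched non-dummy woman is an uncovered hole.
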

\begin{proof}
At first we apply a {\em leveling algorithm} to $M$ (Algorithm~\ref{LADFM}). Algorithm~\ref{LADFM} takes a \mPFM\ $M$ as input and assigns levels to the vertices in $G$. The levels are used to get the pre-image of $M$ i.e. a stable matching $M''$ in $G''$. Once the levels are assigned by Algorithm~\ref{LADFM}, the pre-image is obvious - if a man $m$ in $G$ gets assigned to level $i$, and $M(m) = w$ then $M''(m^i) = w$. For $j < i$, $M''(m^j)=d_m^{j+1}$ which is the least preferred dummy woman on his list, and for $j > i$, $M''(m^j)=d_m^j$ which is the most preferred dummy woman on his list. 

The proof of the theorem is immediate from the correctness of Algorithm~\ref{LADFM}, proved below.
\end{proof} 
Now, we describe the leveling Algorithm which takes a \DFM\  as input.\\\\
\newpage

\begin{algorithm}
    \hspace*{\algorithmicindent}\textbf{Input:} A \DFM\  $M$ in a \SMPM\  instance.
    \\
    \hspace*{\algorithmicindent}\textbf{Output:} Assigns level to the vertices in $\mathcal{G}$ based on the matching $M$. 
    \begin{algorithmic}[1]
    \State {Initially all the unmatched men are assigned level 1 and all the vertices other than unmatched men are assigned level 0}
    \State flag = true
    \While{flag = true}
        \State check1 = 0, check2 = 0, check3 = 0
        \While{there exists a man $m$ at level $i$ and a woman $w$ at level $j$ such that $j \leq i$ and $(m,w)$ is a $(+1,+1)$ edge}
        \State Change the level of $w$ and its matched partner $M(w)$ from level $j$ to level $(i+1)$. Note that $w$ cannot be unmatched in $M$ because then $M$ would not be a PFM.
        \State check1 = 1
        \EndWhile
        \While{there exists a man $m$ at level $i$ and a woman $w$ at level $j$ such that $j < i$ and $(m,w)$ is a $(+1,-1)$ or a $(-1,+1)$ edge}
        \State Change the level of $w$ and its matched partner $M(w)$ from level $j$ to level $i$. Note that $w$ cannot be unmatched in $M$ because then $M$ would not be a PFM.
        \State check2 = 1
        \EndWhile
        \While {there exist a man $m$ at level $i$ and a woman $w$ at level $j$ such that $j \leq (i-2)$ and $(m,w)$ is a $(-1,-1)$ edge}
        \State Change the level of $w$ and its matched partner from level $j$ to level $(i-1)$. Note that $w$ cannot be unmatched in $M$ because then $M$ would not be a PFM.
        \State check3 = 1
        \EndWhile
        \If {check1 = 0 and check2 = 0 and check3 = 0}
        \State flag = false
        \EndIf
    \EndWhile
    \end{algorithmic}\caption{leveling Algorithm for \DFM\ }\label{LADFM}
\end{algorithm}

The proof for the termination of Algorithm~\ref{LADFM}  and the proof of $M''$ is a stable matching is exactly the same as the proof of the termination of Algorithm~\ref{LAmPFM}  and the proof of $M'$ is a stable matching respectively.
\begin{theorem}
All non-critical men are assigned level zero or one and the critical men are assigned level less than or equal to $(l+1)$ (where $l$ is the number of critical men in the \SMPM\  instance).
\end{theorem}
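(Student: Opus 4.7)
The plan is to follow the template of Corollary~\ref{cor:levels}, adapted to the fact that Algorithm~\ref{LADFM} initializes every unmatched man at level~$1$ instead of level~$0$. This single change is what loosens the bound on non-critical men from level~$0$ to $\{0,1\}$ and tightens the bound on critical men from $|C|$ up to $|C|+1$.

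The first step is to establish the direct analogue of Theorem~\ref{thm:mPFM-term} for Algorithm~\ref{LADFM}: whenever the algorithm assigns level~$i$ to a man $m$, there is an alternating path in $G$ ending at $m$ whose other endpoint is either (a)~a matched woman at some level $j<i$, in which case the path has $(i-j)$ more $(+1,+1)$ edges than $(-1,-1)$ edges, or (b)~an unmatched (hence non-critical) man, which sits at level~$1$, in which case the path has $(i-1)$ more $(+1,+1)$ edges than $(-1,-1)$ edges. The induction on iterations of the outer while loop is nearly verbatim the one carried out for Theorem~\ref{thm:mPFM-term}: Phase~1 extends promotion chains by $(+1,+1)$ edges, while Phases~2 and~3 append $(\pm,\mp)$ and $(-1,-1)$ edges, each step preserving or shifting the surplus by one; no alternating cycle with more $(+1,+1)$ than $(-1,-1)$ edges can arise because $M$ is a \PFM. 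The only bookkeeping change from the \mPFM\ case is that alternating paths originating at an unmatched man now begin at level~$1$, which subtracts one from the corresponding surplus.

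With the analogue in hand, the non-critical bound follows immediately. Suppose a non-critical man $m^{*}$ were assigned some level $i\ge 2$; then the analogue above produces an alternating path $\rho$ ending at $m^{*}$ with surplus at least $i-1\ge 1$, whose other endpoint is either a matched woman or an unmatched non-critical man. In either case $N=M\oplus\rho$ is a feasible matching (since the only endpoints whose matching status changes are women or non-critical men), and the standard vote-counting argument used in the proof of Theorem~\ref{thm:mPFM} for Conditions~(b) and~(c) gives $\phi(N,M)>\phi(M,N)$, contradicting the assumption that $M$ is a \PFM.

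For the critical bound, I would use a counting argument analogous to the feasibility argument in the proof of Theorem~\ref{thm:mPFM}. Since Phases~2 and~3 of Algorithm~\ref{LADFM} never raise the maximum level, reaching any level $i\ge 2$ requires a chain of Phase~1 promotions whose triggers themselves occupy distinct levels $1,2,\dots,i-1$ at successive stages of the algorithm. Invoking the non-critical bound already established, each of the men occupying levels $2,3,\dots,i$ during this chain must be critical, producing at least $i-1$ distinct critical men and hence forcing $i\le \ell+1$. The main technical obstacle throughout is establishing the analogue of Theorem~\ref{thm:mPFM-term}, where the surplus bookkeeping across the three phases must be checked in light of the new level-$1$ source introduced by unmatched men; once this is in place, both bounds follow from short adaptations of arguments already present in the paper.
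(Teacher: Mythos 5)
Your proposal follows the paper's template (adapt Theorem~\ref{thm:mPFM-term} to Algorithm~\ref{LADFM}, flip an alternating path for the non-critical bound, count critical men across levels for the critical bound), and your adjustment for the level-$1$ initialization of unmatched men is a detail the paper itself glosses over. However, your argument for the critical bound has a genuine gap. You count ``the men occupying levels $2,3,\dots,i$ during this chain'' of Phase~1 promotions at successive stages of the run and declare them to be $i-1$ \emph{distinct} critical men, but distinctness is not justified: a man's level can keep increasing during the execution (his partner can be promoted repeatedly by any of the three phases), so one and the same critical man can serve as the Phase~1 trigger at level $k$ and later at level $k+1$, $k+2$, etc. The triggers at successive stages may all be the same person, and then your count collapses. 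The paper avoids this by arguing about the \emph{final} level assignment: by Theorem~\ref{thm:mPFM-term} in its ``for every level $j<i$ there is a woman at level $j$'' form, no level below an occupied level is empty; women at levels $\geq 1$ are matched, every man at level $\geq 2$ is critical, so a critical man at level $j\geq \ell+2$ forces at least $\ell+1$ critical men sitting \emph{simultaneously} at distinct levels --- a contradiction with $|C|=\ell$. Your step-1 analogue only asserts a path to \emph{some} woman at \emph{some} level $j<i$ (or to an unmatched man), which is too weak to run this end-state pigeonhole; you need the per-level statement, which does follow from the same induction since the constructed path passes through a woman at every intermediate level.

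The weaker form of the analogue also leaves a smaller hole in your non-critical bound. If the guaranteed endpoint happens to be a matched woman at level $i-1$, the surplus is exactly $1$ and flipping $\rho$ makes both endpoints unmatched; the vote balance is then $2\cdot 1-2=0$, so $M\oplus\rho$ only ties with $M$ and you get no contradiction with $M$ being a \PFM. Taking the endpoint at level $0$, or the unmatched man at level $1$ (both available once you have the per-level version), restores the strict inequality exactly as in the paper's proof. So both halves of the statement really hinge on first proving the stronger, per-level analogue of Theorem~\ref{thm:mPFM-term} for Algorithm~\ref{LADFM}; with that in hand, your non-critical argument is fine and your critical bound should be rephrased as the paper's final-state counting rather than a count over the promotion history.
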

\begin{proof}
Suppose there is a non-critical man $m^i$ gets a level $i > 1$ then there is an alternating path from a woman $w^0$ at level 0 or from an unmatched man $m^0$ at level 0 (theorem \ref{thm:mPFM-term})to $m^i$ which has i more $(+1,+1)$ edges than $(-1,-1)$ edges. Let this alternating path be $\rho$. In this case $M \oplus \rho$ is a more popular matching than $M$. Hence contradiction. Hence, all non critical men are assigned level less than or equal to 1.

Suppose there is a critical man who is assigned level $j$ where $j > (l+1)$. Now, since there are only $l$ critical men in the \SMPM\  instance. Hence, there exists a level $i < j$ such that $i$ is empty. This contradicts theorem \ref{thm:mPFM-term}. Hence, all critical men are assigned level less than or equal to $(l+1)$.
\end{proof}

\end{document}